\newcommand{\bea}{\begin{eqnarray}}
\newcommand{\eea}{\end{eqnarray}}
\def\bi{\begin{itemize}}
\def\ei{\end{itemize}}
\def\bc{\begin{center}}
\def\ec{\end{center}}
\def\C{\ensuremath{\mathbbm{C}}}
\def\R{\ensuremath{\mathbbm{R}}}
\newcommand{\id}{\ensuremath{\mathbbm{1}}} 
\newcommand{\one}{\id}
\newcommand{\kb}[2]{\ket{#1}\bra{#2}}
\newcommand{\SO}{{\cal SO}} 
\renewcommand{\O}{{\cal O}} 
\newtheorem{theorem}{Theorem}
\newtheorem{corollary}[theorem]{Corollary}
\newtheorem{lemma}[theorem]{Lemma}
\newtheorem{observation}[theorem]{Observation}
\newcommand{\ket}[1]{\ensuremath{\left| #1\right>}}
\newcommand{\bra}[1]{\ensuremath{\left< #1\right|}}
\newcommand{\proj}[1]{\ensuremath{\ketbra{#1}{#1}}}
\newcommand{\ketbra}[2]{\ensuremath{\vert\!\!\;#1\rangle \! \langle #2\vert}}
\newcommand{\Eqref}[1]{Eq.~(\ref{#1})}                                
\newcommand{\Pf}{\mathrm{Pf}} 
\newcommand{\rank}{\mathrm{rank}}
\newcommand{\tr}{\mathrm{tr}}
\def\cE{{\cal E}}
\def\cP{{\cal P}}
\def\cS{{\cal S}}
\def\Proof{\textsc{Proof}}
\definecolor{gray}{rgb}{.6,.6,.6}
\begin{document}
 \author{C. Spee}
\affiliation{Naturwissenschaftlich-Technische Fakult\"at, Universit\"at Siegen, 57068 Siegen, Germany}
\author{K. Schwaiger}
\affiliation{Institute for Theoretical Physics, University of
Innsbruck, 6020 Innsbruck, Austria}
\author{G. Giedke}
\affiliation{Donostia International Physics Center, 20018 San Sebasti\'an, Spain}
\affiliation{Ikerbasque Foundation for Science, 48013 Bilbao, Spain}
\author{B. Kraus}
\affiliation{Institute for Theoretical Physics, University of
Innsbruck, 6020 Innsbruck, Austria}

\title{Mode-entanglement of Gaussian fermionic states}

\date{\today}

\begin{abstract}
 We investigate the entanglement of $n$-mode $n$-partite Gaussian
  fermionic states (GFS). First, we identify a reasonable definition
  of separability for GFS and derive a standard form for mixed states,
  to which any state can be mapped via \textit{Gaussian local
    unitaries} (GLU). As the standard form is unique two GFS are
  equivalent under GLU if and only if their standard forms
  coincide. Then, we investigate the important class of \textit{ local
    operations assisted by classical communication} (LOCC). These are
  central in entanglement theory as they allow to partially order the
  entanglement contained in states. We show, however, that there are no non-trivial
  Gaussian LOCC (GLOCC). That is, any GLOCC transformation can be
  accomplished via GLUs. To still obtain insights into the various
  entanglement properties of $n$-mode $n$-partite GFS we investigate
  the richer class of Gaussian \textit{stochastic} LOCC. We
  characterize Gaussian SLOCC classes of pure states and derive them
  explicitly for few-mode states. Furthermore, we consider certain
  fermionic LOCC and show how to identify the \textit{maximally
    entangled set} (MES) of pure $n$-mode $n$-partite GFS, i.e., the
  minimal set of states having the property that any other state can
  be obtained from one state inside this set via fermionic LOCC. We generalize these findings also to the pure $m$-mode $n$-partite (for $m>n$) case.
\end{abstract}

\maketitle

\section{Introduction}

Entanglement \cite{HHHH09} plays a crucial role in understanding the quantum physics
of systems composed of many subsystems or many particles. It is the
primary resource of many applications in quantum computation and
communication, and is the basis of many of the intriguing effects of
quantum many-body physics.

In multipartite systems there are various qualitatively different kinds
of entanglement. Relating them to physical properties \cite{AFOV08} or to
performable tasks \cite{BBC+93, ADHW06}, contributes to elucidating the role of entanglement
in nature and as a resource for quantum technologies \cite{acin2017}. 

One very successful approach to identify different classes of
entanglement is to consider whether states can be converted into each
other using some naturally restricted set of quantum operations,
defining states for which such conversion is mutually impossible to
belong to distinct classes. This has lead to the discovery of
inequivalent kinds of entanglement \cite{DVC00, FrankSLOCC4}, and to their classification \cite{GoWa13, eltschka2012}. Furthermore, the maximally entangled states and sets, which are the most relevant states regarding local state transformations, have been identified \cite{HHHH09, VSK13, MES4qubit, Gour2016, sauerwein2017}.

Most of these notions have been developed considering systems of
distinguishable particles, and with system Hilbert spaces that have a
natural tensor-product structure imposed by the spatial separation of
subsystems. When applying them to systems of \emph{indistinguishable}
particles, central notions of entanglement theory have to be adapted
to account for (anti)commutation relations and superselection rules,
that restrict the set of allowed operations and modify the structure
of ``local'' operations.

In the present article, we investigate the entanglement properties of
multipartite \emph{fermionic} states. There are both fundamental and
practical reasons to do so: on the one hand, fermions are the
fundamental constituents of matter, hence to understand the
entanglement properties of quantum many-body systems the fermionic
perspective is indispensable. This has motivated a broad effort to
study fermionic entanglement and work out the differences with qubit
systems, see, e.g., \cite{ESBL02, BCW07, CamposVenuti2007, Manzano2009, Nico14, BFT14, SaLe14, DAriano2014a, EiZi15, LoFranco2016}.\\
Even in quantum information, where bosonic or effectively
distinguishable particles play the major important role, genuinely
fermionic systems such as single semiconductor electrons or holes in
quantum dots \cite{KlLo12},
ballistic electrons in quantum wires or edge
channels \cite{HT+Meunier11,MK+Ritchie11,BP+Feve12} or Majorana fermions in quantum wires
\cite{Sarma2015} are of increasing interest.  On the other hand, this
analysis gives new insights into the nature of entanglement in
general and the comparison of fermionic, bosonic, and distinguishable
systems affords a clearer picture of the role of statistics.

Here we apply this state-conversion-based entanglement classification
to multipartite \emph{Gaussian} fermionic states. This important
family of states contains the eigen- and thermal states of quadratic
Hamiltonians, i.e., those describing quasi-free single-particle
dynamics. Despite their simplicity, these states comprise a large
range of different kinds of entangled states, including GHZ-like states,
spin-squeezed states, paired states \cite{KWCG08}, and topological states \cite{WTSC13}, thus
serving as a convenient test-bed for entanglement studies and can be used
for basic quantum information processing tasks such as
probabilistic teleportation
\cite{MRZ08}, entanglement distillation \cite{KKS12},
or metrology
\cite{Yur86,KWCG08,BFM14}, 
while for universal quantum computation, the Gaussian states and
operations have to be augmented by a non-Gaussian measurement \cite{DiTe05}. In the
present work, we focus first on pure $n$-partite states with a single mode
per party, and investigate their transformation properties under
different kinds of local fermionic operations. Then we generalize some of the results to $m$-mode $n$-partite (for $m>n$) states.

The outline of the remainder of the paper is the following. In
Sec.~\ref{sec:prelim} we recall the definition and some properties of
fermionic states (FS), Gaussian fermionic states (GFS) and Gaussian
operations. Moreover, we recall the mapping between GFS and spin
states using the Jordan-Wigner transformation. In
Sec.~\ref{sec:Mixed}, we consider mixed GFS and first recall the
various definitions of separability for FS \cite{BCW07}. We identify a reasonable definition of separability of GFS. Then, we
introduce a standard form for the CM, which is invariant under GLU. In
the last two sections, Sec.~\ref{sec:purestates} and
Sec.~\ref{Sec:multimode}, we investigate the entanglement properties
of pure GFS considering GLOCC. As this class of operations turns out
to be trivial for $n$-mode $n$-partite as well as multipartite
multimode pure fully entangled GFS we study also GSLOCC and FLOCC to still obtain
insights into the entanglement of GFS. In particular, the following results are presented: (i) We characterize
  the separable Gaussian fermionic states and different kinds of local
  Gaussian fermionic operations (GLOCC, GSLOCC, GSEP); (ii) we derive a
  standard form for $n$-mode, $n$-partite GFS into which any such state can be
  transformed by GLU; as this standard form is unique, two GFS are GLU--equivalent iff their standard forms coincide; (iii) we show that there are no non-trivial Gaussian fermionic LOCC
  transformations between fully entangled pure $n$-partite GFS; (iv) we characterize the
  Gaussian SLOCC classes for pure $n$-mode, $n$-partite GFS; (v)
  we consider general fermionic LOCC between Gaussian states and identify the
corresponding maximally entangled set (MES), and, finally, (vi) we generalize these findings to the $m$-mode $n$-partite ($m > n$) case.


\section{Preliminaries}
\label{sec:prelim}
We summarize here some results concerning GFS and introduce our notation. We consider systems composed of $n$ fermionic modes.
To each mode $k=1,\dots,n$ belongs a creation and an annihilation operator $b_k,b_k^\dagger$, obeying the anticommutation relations $\{b_k^\dagger, b_l^\dagger\}=\{b_k, b_l\}=0, \{b_k, b_l^\dagger\}=\delta_{kl}$. The antisymmetric Fock space over $n$ modes is spanned by the Fock basis defined as
\bea \ket{k_1,\ldots,k_n}=(b_1^\dagger)^{k_1}\cdots (b_n^\dagger)^{k_n}\ket{0},\eea
where $k_i\in \{0,1\}$ for all $i\in \{1,\ldots,n\}$ and the vacuum state $\ket{0}$ obeys $b_i \ket{0}=0$ $\forall i$. Note that $\ket{k_1,\ldots,k_n}$ is an eigenstate of all number operators $n_i=b_i^\dagger b_i$ to eigenvalue $k_i$.

It is sometimes more convenient to consider the $2n$ hermitian fermionic
Majorana operators, \bea \tilde{c}_{2k-1}=b_k+b_k^\dagger,\,\,\,\,
\tilde{c}_{2k}=-i(b_k-b_k^\dagger)\eea instead of the creation and annihilation
operators.  The anticommutation relations are then equivalent to \bea
\{\tilde{c}_k,\tilde{c}_l\}=2 \delta_{kl}.\eea
For any Clifford algebra satisfying the relation above, the operators
$b_k=\frac{1}{2}(\tilde{c}_{2k-1}+i\tilde{c}_{2k})$ obey the anticommutation relation and vice
versa.

A linear transformation of the fermionic operators $\{\tilde{c}_k\}$,
i.e., $\tilde{c}_k\rightarrow \tilde{c}_k^\prime =\sum_l O_{kl} \tilde{c}_l$, preserves the canonical
anticommutation relations iff $O\in\O(2n, \R)$, i.e., iff $O$ is a real
orthogonal matrix. These are called canonical transformations or
  Bogoliubov transformations. They realize a basis change in the
  fermionic phase space and can be implemented by Gaussian operations
  (see below).

\subsection{Gaussian States}

A GFS of $n$ modes is defined as the thermal
(Gibbs) state of a quadratic Hamiltonian, $H=\frac{i}{4} \tilde{c}^T G \tilde{c}$ with
$G$ a real antisymmetric $2n\times 2n$ matrix and
$\tilde{c}=(\tilde{c}_1,\ldots,\tilde{c}_{2n})$, i.e., \bea \rho=K e^{-\frac{i}{4} \tilde{c}^T G
  \tilde{c}},\eea where $K$ denotes a normalization constant (or, to include
states of non-maximal rank, can be expressed as a limit of such
expressions). Equivalently, they can be characterized as those states
satisfying Wick's theorem, i.e., for which all cumulants vanish
\cite{BaHe85,Rob65}.

It is well known that any real antisymmetric $2n\times 2n$ matrix can be transformed into a
normal form via a real special orthogonal matrix \cite{BoRe04b}. More precisely, there exists
a matrix $O \in \SO(2n,\R)$ such that
\bea O G O^{T}\!=\!\oplus_{k=1}^n \beta_k J_2, \mbox{where } J_2\!=\! \left( \begin{array}{cc} 0 & 1, \\
    -1 &0
  \end{array} \right),\, \beta_k \in \R.\eea
Hence, a GFS is a state of the form  \bea\label{eq:GaussDiag}
\rho=\tilde{\otimes}_{k=1}^n \rho_k^\prime,\eea where 
$\rho_k^\prime=\frac{1}{2} \left(\one- \mu_k
  [{b^\prime}_k^\dagger,b^\prime_k]\right)$ for
$\mu_k=\tanh(\beta_k/2)$.  Note that here and in the following
$\tilde{\otimes}$ denotes a product of operators which are acting only
on distinct sets of modes. However, we only use this notation if the
operators fulfill a commutation relation. Here, the operators
$b^\prime_k=\sum_l u_{lk} b_l$ obey again the anticommutation
relations, i.e., they are fermionic annihilation operators
\cite{BoRe04b}. 
Thus, $\rho$ can be written as \bea \rho=\frac{1}{N} e^{-\sum_k \beta_k
  {b^\prime}_k^\dagger b^\prime_k},\eea where $N=\prod_k(1+e^{-\beta_k})$
denotes a normalization constant. It is evident that a Gaussian state is
completely determined by its second moments, which are usually collected in
the covariance matrix (CM). In terms of the Majorana operators the CM
of a GFS, $\rho$, which we denote by $\gamma$, is defined as 
\bea
\gamma_{kl}=-\frac{1}{2i} \tr(\rho [\tilde{c}_{k},\tilde{c}_l]).\eea
As can be easily seen from this definition, the CM is an antisymmetric
$2n \times 2n$ real matrix, which can be transformed by a real special
orthogonal matrix, $O$, into the normal form 
\bea\label{eq:Williamsonform} O\gamma O^{T}=\oplus_{k=1}^n (-\mu_k J_2).\eea
Note that $\mu_k=\tanh(\beta_k/2)$ \cite{KWCG08}. This normal form is
referred to as the (fermionic) Williamson normal form \cite{BoRe04b}. Note that in
contrast to the case of bosons, no first moments have to be specified for
fermions since due to the parity superselection rule all physical states have
$\tr(\tilde{c}_k\rho)=0$. Thus,  GFS are completely characterized by
their second moments, i.e., their CM, due to Wick's theorem
\cite{BLS94}
\bea i^p \tr(\rho \tilde{c}_{j_1} \cdots \tilde{c}_{j_{2p}})=\Pf(\gamma_{j_1,\ldots,
  j_{2p}}),\eea where $1\leq j_1<\ldots <j_{2p}\leq 2n$ and
$\gamma_{j_1,\ldots, j_{2p}}$ is the $2p \times 2p$ submatrix of $\gamma$ with
rows and columns $j_1,\ldots, j_{2p}$. Here $\Pf$ denotes the Pfaffian
which for a $2n\times 2n$ matrix $A=(a_{i,j})$ is defined as
$\Pf(A)=\frac{1}{2^n n!}\sum_{\pi\in
  S_{2n}}\mathrm{sgn}(\pi)\prod_{i=1}^{n}a_{\pi(2i-1),\pi(2i)}$,
where the sum is over all permutations $\pi$ and $\mathrm{sgn}(\pi)$
the signature of $\pi$ and satisfies $\Pf(A)^2=\det(A)$.

Note that an antisymmetric real matrix $\gamma$ corresponds to the CM
of a GFS, in particular to a normalized positive semidefinite
operator, iff $\gamma^2 \geq -\one$, i.e., iff all the eigenvalues of
$\gamma$, which are all purely imaginary, have modulus smaller or
equal to one. The CM corresponds to a pure state if $\gamma^2 =
-\one$. That is $\lambda_k\in [-1,1]$ in Eq.~(\ref{eq:Williamsonform})
and $|\lambda_k|=1\forall k$ in case the state is pure. For instance,
the CM corresponding to the vacuum state would be
$\gamma =-J_2$, whereas the one corresponding to $\proj{1}$ would be
$J_2$. Hence, the CM corresponding to the completely mixed state is
$\gamma=0$.

\subsection{Jordan-Wigner Transformation}
\label{sec:JW}
Let us recall here that there exists a one--to--one mapping between FS
and qubit states, which is known as Jordan--Wigner transformation. Let
us consider $n$ modes and define the operators

\bea\label{eq:JWT1} c_{2j-1}=Z\otimes Z\otimes...\otimes Z \otimes  X_j \otimes \one...\\ \nonumber
c_{2j}=Z\otimes Z\otimes...\otimes Z \otimes Y_j \otimes \one....\eea

These operators obey the same anticommutation relations as the Majorana operators.

Consider now a FS \bea \label{eq_FermiState}
\ket{\Psi}=\sum_{i_1\ldots,i_n\in{0,1}} \alpha_{i_1\ldots,i_n}
(b_1^\dagger )^{i_1}(b_2^\dagger )^{i_2}\ldots (b_n^\dagger
)^{i_n}\ket{0},\eea where $\ket{0}$ denotes the vacuum state and
$\alpha_{i_1\ldots,i_n}\in \C$. Due to the fact that
$(b_k^{\dagger})^2=0$ one can associate to the state given in
Eq.~(\ref{eq_FermiState}) the $n$-qubit state

\bea \label{eq_FermiState2} \ket{\Psi}=\sum_{i_1\ldots,i_n} \alpha_{i_1\ldots,i_n} \ket{i_1\ldots,i_n}_{1,\ldots,n} .\eea

The Jordan-Wigner transformation is a unitary mapping between the antisymmetric Fock space
 of $n$ modes and the Hilbert space of $n$ qubits, relating Fermi operators $\tilde{c}_i$ with qubit operators in \Eqref{eq:JWT1} and the states in \Eqref{eq_FermiState} with the ones in \Eqref{eq_FermiState2}.

Note, however, that the parties are ordered and one cannot simply
reorder them, as the order is fixed due to the commutation relations.
To give an example, the state $\ket{00}_{12}+\ket{11}_{12}=
\ket{00}_{21}-\ket{11}_{21}$, where the minus sign results from
permuting particle one and two. To be more precise, the operation
which has to be performed on the qubit state in order to swap two
systems is the fermionic swap, which is the mapping
$\ket{ij}\rightarrow (-1)^{ij} \ket{ji}$.
In order to perform, for instance, a partial trace, also these commutation relations have to be taken into account. That is, first the party over which the trace is performed has to be swapped (with a fermionic swap) to the last position \footnote{Note that the party over which one performs the trace has to be mapped to the last position. This ensures that the expectation value of all operators $A$ acting only on the first part of a bipartition $A | B$ fulfills  \unexpanded{$\left< A \right> = \tr( A \rho) = \tr(A \rho_A)$}, with \unexpanded{$\rho_A = \tr_B(\rho)$}, see \cite{Nico14, ChiralPEPS15}.}. After that, the partial trace can be performed as usual.
Fermionic mixed states are then convex combinations of fermionic pure states.

Note that the parity conservation implies that a FS is always a direct sum of
states whose support is only in the subspace with even parity and states whose
support is only in the subspace with odd parity. Here, the subspace with even (odd)
parity coincides with the set of states which are a superposition of
Fock states which have all an even (odd) number of 1's,
respectively. Denoting by $P_e $ ($P_o$) the projector onto the even (odd)
subspace we hence have that a state with density matrix $\rho$ is fermionic iff $\rho=P_e \rho P_e + P_o \rho P_o$, i.e., iff $P_e \rho P_o=P_o \rho P_e=0$. \footnote{This can be easily seen by writing the parity operator, \unexpanded{$P = i^n {\prod_{k}} c_k$, as $P = P_e - P_o$} and thus, $[\rho,P] = 0$ iff $\rho=P_e \rho P_e + P_o \rho P_o$.}

Especially in case one is working with this representation it is important to
be able to identify which of the FS are Gaussian. Fortunately, given a FS, the following result can be used to decide whether it is Gaussian or not. Recall that any operator in the Clifford algebra generated by the Majorana operators $\tilde{c}_i$ ($i=1,\ldots, 2n$) can be written as
\bea
x=\alpha \one + \sum_{p=1}^{2n} \sum_{ 1\leq a_1 <a_2 < \ldots a_p\leq 2n} \alpha_{a_1,\ldots,a_{p}} \tilde{c}_{a_1}\cdots \tilde{c}_{a_p}.\eea

An operator is called even if it involves only even powers of the generators,
or stated differently and using the Jordan-Wigner transformation, if the
number of $X$'s plus the number of $Y$'s occurring in the sum is
even. As any odd operator changes the parity, it is easy to see that $x$
  is even iff $P_exP_o=P_oxP_e=0$. Thus, in particular, all FS have even density matrices.

It has been shown in \cite{Brav05} that an even operator, $x$, is Gaussian iff
\bea [\Lambda,x^{\otimes 2}]=0, \label{GaussOp}\eea where \bea \Lambda= \sum_{i=1}^{2n} c_i \otimes c_i.\eea

Thus, we have that a FS, $\rho$, is Gaussian iff \bea \label{cond_Gauss}
[\Lambda,\rho^{\otimes 2}]=0.\eea

Let us give some examples. For a single mode, a state is fermionic if its
density matrix is diagonal in the computational basis. Any such state is also Gaussian.
For two modes, any FS is of the form $\rho=\rho_e\oplus \rho_o$ where $\rho_e$ ($\rho_o$) are density operators in the two-dimensional even (odd) parity subspace spanned by $\{\ket{00},\ket{11}\}$ ($\{\ket{01},\ket{10}\}$) respectively. It can be easily seen that such a state is then Gaussian, i.e., fulfills the condition given in Eq.~(\ref{cond_Gauss}) iff $|\rho_e|=|\rho_o|$, where $|\cdot |$ denotes the
determinant.  An example of such a state would be $e^{i\alpha ( b_1^\dagger b_2^\dagger +b_1 b_2)}$, where $\rho_e= \left( \begin{smallmatrix} \cosh(\alpha)& -i \sinh(\alpha) \\ i \sinh(\alpha) & \cosh(\alpha) \end{smallmatrix} \right)$ and $\rho_o=\one$. In particular, \emph{all} pure two-mode FS are Gaussian. However, not all mixed two-mode FS are: Examples of non-Gaussian FS are the Werner states, $\rho_W = \frac{4 F -1}{3} \proj{\psi^-}+ \frac{1-F}{3}\mathbbm{1}$, for $F \in (1/4, 1)$. Moreover, as we will see later, any pure FS of three modes is Gaussian. However, this is not the case for four modes.

When discussing pure GFS we either consider the Jordan-Wigner representation of the FS, or the CM of the state.

\subsection{Gaussian operations}
\label{GaussOper}
Let us now briefly recall the definitions and properties of Gaussian unitary
operations, general Gaussian operators and Gaussian maps in the fermionic
case. First note that
  all quantum operations (completely positive maps) that respect parity are considered as valid physical
  operations here, and referred to as \emph{fermionic operations}. 

\emph{Gaussian operations} are those that can be realized with
Gaussian means: evolution under quadratic Hamiltonians, adjoining of
systems in Gaussian states, discarding of subsystems, measuring
Gaussian POVMs, and projecting
on pure Gaussian states.
A Gaussian fermionic unitary, $U$, acting on $n$ modes can be written as
$U=e^{-iH}$, where $H$ is quadratic in the Majorana operators, that is, \bea
H=i\sum_{kl} h_{kl} \tilde{c}_k \tilde{c}_l,\eea with $h$ being a real antisymmetric $2n
\times 2n$ matrix \cite{jozsa2008matchgates}.
In \cite{terhal2002classical} it was shown that these unitaries effect a canonical
transformation of the Majorana operators
\begin{equation} \label{eqGLU1}
U^\dagger \tilde{c}_jU=\sum_{k=1}^{2n}O_{jk}\tilde{c}_k,
\end{equation}
where $O=e^{4 h}\in \SO(2n)$ is a real special orthogonal $2n\times 2n$
matrix.  Hence, a fermionic Gaussian unitary maps the CM to $O \gamma O^T$,
where $O \in \SO(2n)$ \footnote{Note that the operator $O$ can be written as $O=O_1 O_p O_2$, with
  \unexpanded{$O_i\in \SO(2n, \R)\bigcap SP(2n, \R)$}, for $i=1,2$ and  \unexpanded{$O_p \in \SO(2n,
  \R)$} the so called pairing operator. The real orthogonal matrices $O_i$ are
  passive transformations, i.e., they commute with the number operator,
  whereas the pairing operator stems from, e.g.,  \unexpanded{$b_1^\dagger b_2^\dagger$}.
Note that for a single mode, i.e., $n=1$, any real symplectic matrix is a
special orthogonal matrix.}.  \\
All Gaussian unitaries preserve the parity, i.e., they commute
  with the parity operator $P=(-1)^{\sum_kn_k}$. However, the
  parity-flipping transformation of mode $k$, which corresponds to an (non-special) orthogonal transformation $O=\oplus_{i=1}^{k-1}\one\oplus Z \oplus_{i=k+1}^{n}\one$
on the 
Majorana operators of the system \footnote{The order of the Majorana
  operators is here  \unexpanded{$\tilde{c}_1 \tilde{c}_2 \tilde{c}_3
  \tilde{c}_4\ldots \tilde{c}_{2n-1} \tilde{c}_{2n}$}.}
(here and in the following $X, Y, Z$ denote the Pauli operators) also has
a (local)  physical realization. The transformation can be achieved for
example by adjoining an ancillary mode in a Fock state
and then acting on the Majorana operators of the system modes and the
ancillary mode  with the $\SO(2n+2)$ operation $O=\oplus_{i=1}^{k-1}\one\oplus
  Z\oplus_{i=k+1}^{n}\one\oplus Z$ \footnote{The corresponding unitary is proportional
    to $\tilde{c}_{2k} \tilde{c}_{2n+2}$. Note that \unexpanded{$\tilde{c}_{2n+2}$} is a Majorana operator of
    ancillary mode which has to be ranked last. Note that the operation $\tilde{c}_{2k}\tilde{c}_{2k+1}\tilde{c}_{2k+2}\ldots
      \tilde{c}_{2n-1}\tilde{c}_{2n}\tilde{c}_{2n+2}$, which is GLU to $\tilde{c}_{2k} \tilde{c}_{2n+2}$, corresponds to a $X$ on particle $k$ in the JW-representation (see Sec.\ \ref{sec:JW}).}.
This exchanges particles with holes both in mode $k$ and in the ancilla and
leaves the latter unentangled, i.e., after discarding
the ancilla it realizes $Z$ on mode $k$. 
Since for any $O\in\O(2n)$ there exists a $O'\in\SO(2n)$ such that
$O=(\oplus _{i=1}^{n-1}\one\oplus Z)O'$ we can allow for all
orthogonal operations.
That adjoining local
ancillas enlarges the set of implementable unitaries is in contrast to the Gaussian
bosonic states and also to systems consisting of qudits. 
Hence, the most general operation on a single mode can be written as $\bar{O}=Z^{m} O$, where $m\in
\{0,1\}$, i.e., an arbitrary real orthogonal matrix. Clearly these operations
no longer correspond to unitaries which are generated by quadratic
Hamiltonians on the system modes alone [see Eq.~(\ref{eqGLU1})]. However,
as they can be implemented using a quadratic Hamiltonian and ancillas
in a Gaussian state we consider them as GLUs and take them into 
account in the following. If it is, however, the case that a particle-number
superselection rule would forbid these kind of transformations, it would be
straightforward to slightly modify the results derived here to exclude any
operation which is not of the form given in Eq.~(\ref{eqGLU1}).\\
Let us also note here that the action of any Gaussian unitary in the
Jordan-Wigner representation corresponds to a product of nearest neighbor
match gates \cite{*[{see }] [{ and references
    therein}] jozsa2008matchgates},
which are unitaries of the form $U=U_e\oplus U_o$, where both, $U_e$ and $U_o$
are $2\times 2$ unitary operators acting on the even and odd
subspace, respectively; and moreover, $|U_e|=|U_o|$.\par
A general Gaussian operator is any operator of the form
$x=e^{i\sum_{i,j} \chi_{ij} \tilde{c}_i \tilde{c}_j}$ for a complex
antisymmetric matrix $\chi$. \\ 

In \cite{Brav05} the most general Gaussian maps have been characterized via
the Choi-Jamiolkowski (CJ) isomorphism. Recall that a completely
positive (CP) map is called Gaussian if it maps Gaussian states to
Gaussian states. We reconsider in
Appendix~\ref{App:A} the CJ isomorphism for GFS. It follows that a map
 ${\cal E}$ mapping $n$ to $m$ modes
is Gaussian iff the corresponding CJ state is Gaussian (see also
\cite{Brav05}), i.e., if it is given by the
CM $E_{\cal E} = \left( \begin{smallmatrix} A & B \\
    -B^T & D \end{smallmatrix} \right)$, with $A, B, D$
$2m\times 2m, 2m\times 2n$, and $2n \times 2n$ matrices,
respectively (for more details see Appendix\ \ref{App:A} and also
\cite{Brav05}). \\ 
Note that the condition for Gaussian maps to map {\it every} Gaussian state to a Gaussian is very stringent. Consider for instance the situation where one wants to transform the state $\ket{00}+\ket{11}$ into a state $ \alpha \ket{00}+\beta \ket{11}$. Note that these are 2-mode GFS in the Jordan-Wigner representation and such a transformation is always possible for 2-qubit states via LOCC. The local operations accomplishing this transformation, i.e., $ A_1=\mbox {diag }(\alpha,\beta), \ A_2=\mbox {diag }(\beta,\alpha),$ are Gaussian, however, the map \bea {\cal E}(\rho)=A_1\otimes \one (\rho)A_1^\dagger\otimes \one+ X A_2\otimes X (\rho)A_2^\dagger X \otimes X \eea is non--Gaussian even though both terms in the sum are. A simple example of a GFS that is not mapped to a GFS by ${\cal E}$ is the 2-mode GFS $\rho = \rho_e \oplus \rho_o$, with $\rho_e = \left(\begin{smallmatrix} z_e + 1/4 & 0 \\ 0 & 1/4-z_e \end{smallmatrix} \right), \ \rho_o = \left(\begin{smallmatrix} z_o + 1/4 & x_o \\ x_o & 1/4-z_o \end{smallmatrix} \right)$ for $z_e \leq 1/4, \ \sqrt{x_o^2+z_o^2} \leq 1/4$ and $z_e^2 = x_o^2+z_o^2$, $x_o \neq 0$. Note that ${\cal E}$ is FLOCC, i.e., it is a local map which maps FS to FS (as it preserves parity), and would accomplish the desired transformation.
Due to that, we consider in Sec. \ref{sec:purestates} not only GLOCC, but also the richer class of FLOCC.


\section{Separability of Gaussian Fermionic States and Operations}
\label{sec:Mixed}
Here we specialize the three definitions of separability of general FS
presented in \cite{BCW07} to the case of GFS.  We show that they do
not all coincide even for Gaussian states and that one of them is not
stable when considering multiple copies of a state. We show that one
of the two remaining definitions of separability is also consistent
with the desired property that any separable state can
be generated by a local operation. Furthermore, we derive a standard form
for mixed $n$-mode $n$-partite states into which any GFS can be
transformed via GLU.

\subsection{Mixed Gaussian Fermionic Separable States}
\label{sec:mixed-state-ferm}

The notion of entanglement is complicated for fermions (compared to bosons or
qubits) due to superselection rules and anticommutation relations. The former
enforces that all physical states have to commute with the parity operator but
prevents that all states can be uniquely characterized by local measurements
of ``physical'' observables, i.e., those commuting with the parity
operator. The latter implies that observables
  acting on different sites (disjoint sets of modes) do not, in general,
commute.

In \cite{BCW07} several notions of product state and separable state were
discussed for arbitrary FS, i.e., not necessarily Gaussian states. There, the
set of physical states was defined as $\Pi := \{ \rho: [\rho, P] =0 \}$, with
$P$ the parity operator. This gave rise to two notions of ``product
  states'': The set of physical states for which the expectation values of
all products of physical observables factorize, i.e., $\rho(A_\pi
B_\pi)=\rho(A_\pi)\rho(B_\pi)$, was denoted by $\cP1_{\pi}$. $\cP2_{\pi}$
($\cP2$) is the set of all states of the form $\rho=\rho_A\tilde{\otimes}
\rho_B$ with (without) the parity restriction, respectively.

Then the three separable sets $\cS1_{\pi}, \cS2_{\pi}, \cS2_{\pi'}$ can be
defined via the convex hull of the different product
sets together with the requirement that the final state commutes with the
global parity.  Specifically: $\cS1_{\pi}=co(\cP1_{\pi})$,
$\cS2_{\pi}=co(\cP2_{\pi})$ and $\cS2_{\pi'}=co(\cP2_{})\cap\Pi$.

Let us now investigate these definitions further by considering GFS. In order
to identify the set of separable GFS one might want to define the separable
states as those that are not useful for any quantum information task even if
arbitrarily many copies of the state are given. Another reasonable choice
would be to define the set of separable states to be those that can (at least
asymptotically) be prepared by LOCC. In the single copy case \cite{BCW07}
shows that these two notions do not coincide for fermions: $\cP2$ contains
states that cannot be prepared locally but the set $\cP1_{\pi}$ of states that
are not useful (considering only a single copy) is strictly larger. Before we
focus on the first choice, i.e., on $\cP2$, and show that the definition using $\cP1_{\pi}$ can be ruled out,
let us present some observations about these sets.

\begin{observation}
A GFS is in the set $\cS2_{\pi}$ iff its covariance
    matrix takes direct-sum form.
\end{observation}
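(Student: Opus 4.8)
The plan is to prove the two implications separately, using the facts recalled above that a GFS is completely determined by its covariance matrix (CM), that every real antisymmetric $\gamma$ with $\gamma^2\geq-\one$ is the CM of some GFS, and the parity superselection rule.

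For the ``if'' direction, assume the CM $\gamma$ of the GFS $\rho$ is block-diagonal with respect to the modes, $\gamma=\bigoplus_{k=1}^n\gamma_k$, with $\gamma_k$ the $2\times2$ block of mode $k$. Each $\gamma_k$, being a principal submatrix of $\gamma$, is real, antisymmetric and satisfies $\gamma_k^2\geq-\one$, hence is the CM of a single-mode GFS $\rho_k$. The fermionic product $\tilde{\otimes}_{k=1}^n\rho_k$ is again a GFS (its quadratic Hamiltonian is the direct sum of the single-mode ones) and has CM $\bigoplus_k\gamma_k=\gamma$; since a GFS is fixed by its CM, $\rho=\tilde{\otimes}_k\rho_k$. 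As every $\rho_k$ commutes with the local parity, $\rho$ is a fully product GFS with physical factors, hence lies in $\cP2_{\pi}\subseteq\cS2_{\pi}$.

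For the ``only if'' direction, write $\rho\in\cS2_{\pi}$ as $\rho=\sum_i p_i\,\tilde{\otimes}_{k=1}^n\rho_k^{(i)}$ with physical (hence even) single-mode states $\rho_k^{(i)}$. Since $\gamma_{kl}=-\frac{1}{2i}\tr(\rho[\tilde c_k,\tilde c_l])$ is linear in $\rho$, it suffices to treat one product term $\sigma=\tilde{\otimes}_k\sigma_k$ with even $\sigma_k$. Fix Majorana indices $k,l$ belonging to two distinct modes $j_1\neq j_2$, so that $[\tilde c_k,\tilde c_l]=2\tilde c_k\tilde c_l$, and conjugate by the parity operator $P_{j_2}$ of mode $j_2$. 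On one hand $P_{j_2}\sigma P_{j_2}=\sigma$, because each $\sigma_k$ is even and, for $k\neq j_2$, acts on modes disjoint from $j_2$. On the other hand $P_{j_2}$ commutes with $\tilde c_k$ but anticommutes with $\tilde c_l$, so $P_{j_2}\,\tilde c_k\tilde c_l\,P_{j_2}=-\tilde c_k\tilde c_l$. Combining these with cyclicity of the trace,
\[
\tr(\sigma\,\tilde c_k\tilde c_l)=\tr(P_{j_2}\sigma P_{j_2}\,\tilde c_k\tilde c_l)=-\tr(\sigma\,\tilde c_k\tilde c_l),
\]
so $\tr(\sigma\,\tilde c_k\tilde c_l)=0$ and $\gamma_{kl}=0$. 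Thus all CM entries between different modes vanish, i.e. the CM takes direct-sum form.

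I do not expect a genuinely hard step. The points requiring care are that for even operators the fermionic product $\tilde{\otimes}$ is just the ordinary operator product of commuting operators, so the CM of a product GFS really equals the direct sum of the factor CMs (used in the ``if'' direction), and the exact bookkeeping of which of $P_{j_2}$, $\tilde c_k$, $\tilde c_l$ commute versus anticommute together with the superselection rule $\tr(\sigma\,o)=0$ for odd $o$, which is what kills the cross terms. The same argument applies verbatim to any coarser bipartition, so ``direct-sum form'' may be read with respect to any splitting of the parties into two or more groups.
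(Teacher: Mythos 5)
Your proof is correct and follows essentially the same route as the paper's: the paper's one-line argument is precisely that every term in the convex decomposition of a state in $\cS2_{\pi}$ has a block-diagonal CM because each factor commutes with the local parity, which is exactly what your $P_{j_2}$-conjugation computation makes explicit. You additionally spell out the converse direction (a block-diagonal CM implies membership in $\cP2_{\pi}\subseteq\cS2_{\pi}$ via uniqueness of the GFS with a given CM), which the paper leaves implicit; both steps are sound.
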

This can be easily seen by noting that all states in $\cS2_{\pi}$ are convex combinations of products of states that
each commute with the local parity; i.e., all terms in the mixture have a CM
that is block diagonal (and all first moments vanish), hence, the CM of the
mixture is also block diagonal.
In contrast, even the states in $\cP1_{\pi}$ can have non-zero correlations between $A$ and $B$ as stated in the next observation.

\begin{observation}
A  state in $\cP1_{\pi}$ can have non-zero correlations between $A$ and $B$. However, in that case
the block of the CM containing the correlations between A and B has at most
\emph{one} non-vanishing singular value.\\
\end{observation}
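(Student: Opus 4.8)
The plan is to translate the statement into the vanishing of all $2\times 2$ minors of the off-diagonal block of the covariance matrix, using only two inputs: that a product of two distinct Majorana operators on a single party is a physical (even, Hermitian) local observable, and Wick's theorem for GFS. Fix the bipartition $A|B$ and order the $2n$ Majorana operators so that the $2n_A$ belonging to $A$ come first; the CM then reads $\gamma=\left(\begin{smallmatrix}\gamma_A & C\\ -C^T & \gamma_B\end{smallmatrix}\right)$, where $C$ is the $2n_A\times 2n_B$ block ``containing the correlations between $A$ and $B$'', with entries $C_{kl}=\gamma_{kl}=i\,\tr(\rho\,\tilde{c}_k\tilde{c}_l)$ for $\tilde{c}_k$ a Majorana operator of $A$ and $\tilde{c}_l$ one of $B$.

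First I would note that for any two \emph{distinct} Majorana operators $\tilde{c}_k,\tilde{c}_{k'}$ of party $A$ the operator $M_A:=i\,\tilde{c}_k\tilde{c}_{k'}$ is Hermitian, even, and acts only on $A$, hence a physical local observable on $A$ with $\langle M_A\rangle=\gamma_{kk'}$; likewise $M_B:=i\,\tilde{c}_l\tilde{c}_{l'}$ on $B$ with $\langle M_B\rangle=\gamma_{ll'}$. Since even operators on disjoint sets of modes commute, $M_A M_B=-\tilde{c}_k\tilde{c}_{k'}\tilde{c}_l\tilde{c}_{l'}$ is again Hermitian, so membership of $\rho$ in $\cP1_{\pi}$ gives $\langle M_A M_B\rangle=\langle M_A\rangle\langle M_B\rangle$, i.e.\ $-\tr(\rho\,\tilde{c}_k\tilde{c}_{k'}\tilde{c}_l\tilde{c}_{l'})=\gamma_{kk'}\gamma_{ll'}$. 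On the other hand, for a GFS Wick's theorem evaluates $-\tr(\rho\,\tilde{c}_k\tilde{c}_{k'}\tilde{c}_l\tilde{c}_{l'})$ as the Pfaffian of the associated $4\times 4$ submatrix of $\gamma$, namely $\gamma_{kk'}\gamma_{ll'}-\gamma_{kl}\gamma_{k'l'}+\gamma_{kl'}\gamma_{k'l}$ (this is the step where Gaussianity is used). Comparing the two expressions forces $\gamma_{kl}\gamma_{k'l'}-\gamma_{kl'}\gamma_{k'l}=0$, which is precisely the $2\times 2$ minor of $C$ with rows $k,k'$ and columns $l,l'$. As this minor is trivially zero when two rows or two columns coincide, \emph{all} $2\times 2$ minors of $C$ vanish, hence $\rank(C)\le 1$ and $C$ has at most one non-vanishing singular value.

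It remains to justify the first sentence of the Observation, namely that the correlations need not vanish: for this I would exhibit an explicit GFS in $\cP1_{\pi}$ whose block $C$ has rank one, e.g.\ a two-mode GFS whose CM is $\gamma_A\oplus\gamma_B$ corrected by a sufficiently small rank-one term (so that $\gamma^2\ge -\one$ is preserved), and check the factorization property directly. I expect this verification to be the one delicate point of the argument: unlike the characterization of $\cS2_{\pi}$ in the previous Observation, membership in $\cP1_{\pi}$ constrains the expectation values of \emph{all} products of local even observables, not just the quadratic ones, so one must confirm that the higher-order correlators factorize as well. The mechanism is that once $\rank(C)\le 1$, every ``crossed'' Wick contraction between $A$- and $B$-Majoranas inside such a correlator cancels against its partners, so that $\langle O_A O_B\rangle=\langle O_A\rangle\langle O_B\rangle$ indeed holds for arbitrary even $O_A,O_B$. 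Everything else — in particular the bound on the singular values — is routine given Wick's theorem and the identification of $i\,\tilde{c}_k\tilde{c}_{k'}$ as a local physical observable.
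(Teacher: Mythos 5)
Your argument is correct and is essentially the paper's own proof: both extract the vanishing of all $2\times 2$ minors of $C$ by comparing the factorization condition on $\langle \tilde{c}_k\tilde{c}_{k'}\tilde{c}_l\tilde{c}_{l'}\rangle$ with the Pfaffian from Wick's theorem, concluding $\rank(C)\le 1$ (the paper phrases this in the SVD basis of $C$, you via minors — a cosmetic difference). The existence verification you defer is carried out in the paper exactly by the mechanism you sketch: in the basis where $C$ has a single non-vanishing entry, any Wick pairing of even local observables containing one crossed contraction must contain a second (the number of crossed pairs is even), which then vanishes, so all higher correlators factorize.
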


For a proof of the above Observation see Appendix~\ref{app:proofObs2}.
An example of a Gaussian state, which is separable according to definition $\cS1_{\pi}$ but not according
to $\cS2_{\pi}$ is the 2-mode Gaussian state with CM
\[
\gamma_0=\left( \begin{array}{cccc} 0&0&1&0\\ 0&0&0&0\\ -1&0&0&0\\ 0&0&0&0
\end{array} \right).
\]
It describes a state in which one (non-local and paired) mode is prepared in a
pure Fock state and the other in the maximally mixed one. In
general, we could consider the first mode to be in a (finite temperature)
thermal state (e.g., being occupied with probability $p$), then
\[\gamma_p =(1-2p)\gamma_0.\]

The two modes are defined by the non-local $\SO(4)$
matrix
\[O=\left(\begin{array}{rrrr}
0 & 0 & 1& 0 \\
1 & 0 & 0 & 0 \\
0 & 1 & 0 & 0 \\
0 & 0 & 0 & 1
\end{array}\right),\]
which maps $O \gamma_0 O^T=\left[ -J_2\oplus 0_2 \right]$. The mode operators of
the transformed state are $(\tilde{c}_3,\tilde{c}_1)$ and $(\tilde{c}_2,\tilde{c}_4)$, i.e., the new
annihilation operators are given in terms of the old ones as
$b_1^\prime=\frac{1}{2}[b_2+b_2^\dag+i(b_1+b_1^\dag)]$ and
$b_2^\prime=\frac{1}{2}[b_2-b_2^\dag-i(b_1-b_1^\dag))$. It is readily checked that the
vacuum for these two modes in the original basis is
$\ket{0_{b_1^\prime}0_{b_2^\prime}}=(\ket{0_{b_1}0_{b_2}}+i\ket{1_{b_1}1_{b_2}})/\sqrt{2}$.
Therefore the mixed Gaussian state with CM
$\gamma_p$ is given by the mixture of
$\ket{0_{b_1^\prime}0_{b_2^\prime}}$
and
$\ket{0_{b_1^\prime}1_{b_2^\prime}}=(\ket{0_{b_1}1_{b_2}}+i\ket{1_{b_1}0_{b_2}})/\sqrt{2}$ (each with
probability $(1-p)/2$) and
$\ket{1_{b_1^\prime}0_{b_2^\prime}},\ket{1_{b_1^\prime}1_{b_2^\prime}}$ (each with
probability $p/2$).
Since the Fock states in the $b_1^\prime, b_2^\prime$ basis correspond up to a local phase gate to Bell states in the local basis the state can be seen as being GLU-equivalent to a Bell-diagonal state with entries
$((1-p)/2,(1-p)/2,p/2,p/2)$ in the $(\Phi_+,\Psi_+,\Phi_-,\Psi_-)$-basis.
For qubits, we would argue that for all $p$ the density matrix is separable
(the maximal overlap with a maximally entangled state is $\leq
1/2$). Formally, a Jordan-Wigner transformation maps the fermionic two-mode state
$\rho_{\gamma_p}$ to the separable (up to LU) Bell-diagonal two-qubit state described
above. Is the GFS $\rho_{\gamma_p}$ separable or
entangled? As we show below, it does not behave as a separable state, when
many copies are available and allows (at least for $p=0$) even to distill pure
singlets. Consequently, separability should be defined in a way that does not include these
states. Note that this has already been shown for FS in \cite{BCW07}. The following theorem proves that the statement also holds for the restricted set of GFS.

\begin{theorem}
The set of Gaussian states in $\cS1_{\pi}$ is not stable. That is, there exists a GFS, $\rho$ such that $\rho\in\cS1_{\pi}$ (even in $\cP1_{\pi}$), however,
  $\rho\tilde{\otimes}\rho\not\in\cS1_{\pi}$.
\end{theorem}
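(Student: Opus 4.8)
The plan is to take as the witness the two-mode GFS $\rho:=\rho_{\gamma_0}$ from the example above (with party $A$ the first mode and party $B$ the second; any $\rho_{\gamma_p}$ with $p<1/4$ works just as well), to verify that it lies in $\cP1_\pi$, and then to show that its fermionic square $\rho\tilde{\otimes}\rho$ does not lie in $\cS1_\pi=co(\cP1_\pi)$ by means of a norm obstruction on the covariance matrix. For the first point: each party holds a single mode, so the only nontrivial physical (parity-commuting) single-mode observable is $Z$ in the Jordan--Wigner picture, and membership in $\cP1_\pi$ reduces to the single condition $\langle Z_AZ_B\rangle=\langle Z_A\rangle\langle Z_B\rangle$ together with $[\rho,P]=0$. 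All three expectation values follow from $\gamma_0$ via Wick's theorem: $\langle Z_A\rangle$ and $\langle Z_B\rangle$ vanish because the $2\times2$ diagonal blocks of $\gamma_0$ are zero, and $\langle Z_AZ_B\rangle\propto\Pf(\gamma_0)=0$; the condition is thus trivially met, and $\rho$ commutes with the global parity as it is a GFS. (As a two-qubit state $\rho$ is in fact separable, which is precisely why $\cP1_\pi$ turns out to be too permissive.)

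Next I would compute the covariance matrix of $\rho\tilde{\otimes}\rho$. Since first moments vanish for all fermionic states and the two copies carry no cross-correlations, this CM is, up to a permutation of modes, $\gamma_0\oplus\gamma_0$. Regrouping the Majorana operators so that $A$ holds mode $1$ of both copies and $B$ holds mode $2$ of both copies, the $A$- and $B$-diagonal blocks vanish, while the off-diagonal ($AB$) block becomes $\mathrm{diag}(1,0,1,0)$ --- a rank-two matrix with nuclear norm $\|\cdot\|_1=2$. The point is that the single-copy correlation block of $\gamma_0$ has rank one (consistent with the second Observation above), but the two copies place their correlations in \emph{orthogonal} row and column sectors, so rank and nuclear norm double.

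The heart of the argument is then the following obstruction. Suppose $\rho\tilde{\otimes}\rho=\sum_i p_i\rho_i$ with $\rho_i\in\cP1_\pi$. Since the CM depends linearly on the state, the $AB$-blocks obey $\Gamma^{AB}=\sum_i p_i\,\gamma(\rho_i)^{AB}$, where $\Gamma$ is the CM of $\rho\tilde{\otimes}\rho$. By the second Observation above, each $\gamma(\rho_i)^{AB}$ has at most one non-vanishing singular value; and physicality, $\gamma(\rho_i)^2\ge-\one$, forces the off-diagonal block of any CM to be a contraction: writing $\gamma(\rho_i)=\left(\begin{smallmatrix}P&B\\-B^T&Q\end{smallmatrix}\right)$, the $(A,A)$ block of $-\gamma(\rho_i)^2$ equals $BB^T+P^TP$ and is $\le\one$, so $BB^T\le\one$. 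Hence $\|\gamma(\rho_i)^{AB}\|_1\le1$ for every $i$, and the triangle inequality gives $\|\Gamma^{AB}\|_1\le\sum_i p_i=1$, contradicting $\|\Gamma^{AB}\|_1=2$. Therefore $\rho\tilde{\otimes}\rho\notin\cS1_\pi$, which proves the claim.

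The Wick evaluations and the explicit form of the product CM are routine; the one point requiring care is the mode/Majorana bookkeeping in the second step, since that is exactly what promotes a rank-one, norm-$\le1$ single-copy correlation block to a rank-two, norm-$2$ two-copy block and thereby breaks the nuclear-norm inequality implied by the second Observation together with physicality. As an independent sanity check, the remark above that $\rho_{\gamma_0}$ allows distillation of a pure singlet from two copies by fermionic operations corroborates that the two-copy state is entangled.
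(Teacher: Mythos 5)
Your setup is fine: $\rho_{\gamma_0}\in\cP1_{\pi}$ is verified correctly, the two-copy CM is $\gamma_0\oplus\gamma_0$ up to reordering, and the regrouped $AB$-block is $\mathrm{diag}(1,0,1,0)$ with nuclear norm $2$. The idea of a convexity obstruction is also attractive, since it would directly address membership in $\cS1_{\pi}=co(\cP1_{\pi})$. The central step, however, has a genuine gap: you invoke Observation 2 for the states $\rho_i$ of an arbitrary convex decomposition $\rho\tilde{\otimes}\rho=\sum_ip_i\rho_i$ with $\rho_i\in\cP1_{\pi}$, but Observation 2 is proved only for \emph{Gaussian} states in $\cP1_{\pi}$ --- its proof in Appendix~B evaluates the four-point functions of the state itself via Wick's theorem, which is unavailable for a general $\rho_i$, and nothing forces the decomposition elements to be Gaussian. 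Worse, the rank-one claim is genuinely false for non-Gaussian members of $\cP1_{\pi}$: let $\tau$ be the maximally mixed single-mode state, set $A=\{a_1,a_2\}$, $B=\{b_1,b_2\}$, and take $\sigma_1=\rho_{\gamma_0}^{(a_1b_1)}\tilde{\otimes}\tau^{(a_2)}\tilde{\otimes}\tau^{(b_2)}$ and $\sigma_2=\tau^{(a_1)}\tilde{\otimes}\tau^{(b_1)}\tilde{\otimes}\rho_{\gamma_0}^{(a_2b_2)}$. Both are Gaussian states in $\cP1_{\pi}$ with rank-one correlation blocks sitting in orthogonal row/column sectors, and all four marginals on $A$ and on $B$ are maximally mixed; hence $\omega=\tfrac12(\sigma_1+\sigma_2)$ satisfies $\omega(A_\pi B_\pi)=\tfrac12\sum_i\sigma_i(A_\pi)\sigma_i(B_\pi)=\omega(A_\pi)\omega(B_\pi)$, so $\omega\in\cP1_{\pi}$, yet its $AB$-block has rank two. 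Consequently the bound $\|\gamma(\rho_i)^{AB}\|_1\le1$ is not available for the $\rho_i$; the contraction property alone only gives $\|\gamma(\rho_i)^{AB}\|_1\le4$ for a $4\times4$ block, and the contradiction with the value $2$ evaporates. (Note also that $\|C_\omega\|_1=1$, so if a nuclear-norm bound for all of $\cP1_{\pi}$ is true at all, it is tight and cannot be derived from a rank argument.)

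For comparison, the paper proceeds quite differently: it exhibits a single pair of \emph{even} local observables built from one Majorana operator of each copy, $c_kc'_k$ on $A$'s side and $c_lc'_l$ on $B$'s side, and computes via Wick's theorem on the Gaussian state $\rho\tilde{\otimes}\rho$ that $\rho^{\tilde{\otimes}2}(c_kc'_kc_lc'_l)=-\rho(c_kc_l)\rho(c'_kc'_l)\neq0$ while both single-party expectations vanish, so the factorization condition fails and $\rho\tilde{\otimes}\rho\notin\cP1_{\pi}$. To rescue your strategy you would have to prove the bound $\|C\|_1\le1$ directly for \emph{all} (not necessarily Gaussian) states in $\cP1_{\pi}$, which is precisely the missing ingredient; alternatively, one can argue as the paper does and, for the full $\cS1_{\pi}$ statement, use the fact (Observation~3) that two copies of $\rho_{\gamma_0}$ contain a pure maximally entangled two-mode state, which constrains every element of any convex decomposition.
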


\begin{proof}
Given two copies of a Gaussian state, $\rho$, with CM $\Gamma_{\rho}=\left( \begin{array}{cc}
\Gamma_A&C\\-C^T&\Gamma_B\end{array} \right)$ and $\rank\, C=1$ then the full state now has a
rank-2 matrix $C$ and therefore is no longer in $\cP1_{\pi}$, since  we can find
a pair of local observables (commuting with local parity) for which the
expectation value does not factorize. That is, assuming $(\Gamma_{\rho})_{kl}\propto\rho(c_kc_l)\not=0$ and using Wick's theorem implies $\rho^{\tilde{\otimes}2}(c_kc'_kc_lc'_l)=-\rho(c_kc_l)\rho(c'_kc'_l)\not=0$,
where the primed operators refer to the second copy.  Hence, $\rho\tilde{\otimes}\rho\not\in\cS1_{\pi}$.
\end{proof}

This shows that any Gaussian state $\rho$ for which $\rho^{\tilde{\otimes}
  n}(A_nB_n)=\rho^{\tilde{\otimes} n}(A_n)\rho^{\tilde{\otimes} n}(B_n)\forall
A_n,B_n,n$ must have a CM $\Gamma_\rho=\Gamma_A\oplus\Gamma_B$.
We are going to show next that $\rho^{\tilde{\otimes} 2}$ is not only no longer in the set $\cS1_{\pi}$, but that it  can also be useful for quantum information theoretical tasks.

\begin{observation} Some states in $\cS1_{\pi}$ can be useful for quantum information processing. \end{observation}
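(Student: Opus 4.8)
The plan is to prove the Observation by taking the state $\rho_{\gamma_0}$ from the preceding discussion (the $p=0$ case, which lies in $\cP1_{\pi}\subset\cS1_{\pi}$) and showing that from \emph{two} copies of it one can distill, using local Gaussian operations alone---local Gaussian unitaries together with discarding of local modes, no classical communication needed---a pure maximally entangled two-mode GFS, i.e.\ the Jordan--Wigner image of a maximally entangled two-qubit state. Since such a state is a standard resource for, e.g., teleportation, this establishes usefulness; in fact it shows that $\rho_{\gamma_0}$ is distillable at rate $1/2$.

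Concretely, recall that $\rho_{\gamma_0}=\proj{0_{b_1^\prime}}\tilde{\otimes}\frac{1}{2}\one_{b_2^\prime}$, where the non-local mode $b_1^\prime$---built from the Majorana $\tilde{c}_1$ held by $A$ and $\tilde{c}_3$ held by $B$---is in a pure Fock state, whereas $b_2^\prime$ (built from $\tilde{c}_2$ at $A$ and $\tilde{c}_4$ at $B$) is maximally mixed; equivalently, the only nonvanishing correlation in $\gamma_0$ is $(\gamma_0)_{13}=1$. Take two copies, labelled by a superscript; party $A$ then holds the four Majoranas $\tilde{c}_1^{(1)},\tilde{c}_2^{(1)},\tilde{c}_1^{(2)},\tilde{c}_2^{(2)}$, i.e.\ two local modes, and symmetrically $B$ holds $\tilde{c}_3^{(1)},\tilde{c}_4^{(1)},\tilde{c}_3^{(2)},\tilde{c}_4^{(2)}$. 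The protocol is: (i) $A$ applies a local Gaussian unitary (an orthogonal transformation of its four Majoranas; cf.\ Sec.~\ref{GaussOper}) that regroups them into the new local modes $\alpha_1$, with Majoranas $\{\tilde{c}_1^{(1)},\tilde{c}_1^{(2)}\}$, and $\alpha_2$, with Majoranas $\{\tilde{c}_2^{(1)},\tilde{c}_2^{(2)}\}$, and symmetrically $B$ forms $\beta_1$ from $\{\tilde{c}_3^{(1)},\tilde{c}_3^{(2)}\}$ and $\beta_2$ from $\{\tilde{c}_4^{(1)},\tilde{c}_4^{(2)}\}$; (ii) $A$ discards $\alpha_2$ and $B$ discards $\beta_2$. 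A short covariance-matrix computation shows that, in the mode ordering $(\alpha_1,\beta_1\,|\,\alpha_2,\beta_2)$ produced by the regrouping, the two-copy CM $\gamma_0\oplus\gamma_0$ takes the block-diagonal form $\gamma^\prime\oplus 0_4$, with $\gamma^\prime=\left(\begin{smallmatrix}0_2&\one_2\\-\one_2&0_2\end{smallmatrix}\right)$ the CM on $\{\alpha_1,\beta_1\}$; hence after step (ii) the output is the two-mode GFS with CM $\gamma^\prime$, which satisfies $(\gamma^\prime)^{2}=-\one_4$ (so it is pure) and whose reduced CM on $A$'s mode $\alpha_1$ vanishes (so $A$'s marginal is maximally mixed). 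Thus the distilled state is a pure maximally entangled two-mode GFS.

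The step requiring care is to justify that ``discarding the non-local modes $b_2^{\prime(1)},b_2^{\prime(2)}$'', neither of which can be discarded by either party alone, is nonetheless achieved by the two \emph{local} discardings in step (ii). The reason is that $A$'s mode $\alpha_2$ and $B$'s mode $\beta_2$ together span exactly the four Majoranas $\{\tilde{c}_2^{(1)},\tilde{c}_2^{(2)},\tilde{c}_4^{(1)},\tilde{c}_4^{(2)}\}$---the same set spanned by $b_2^{\prime(1)},b_2^{\prime(2)}$---while the complementary four Majoranas are precisely those of $b_1^{\prime(1)},b_1^{\prime(2)}$; and for an even-size set of Majoranas the reduced state is fixed by the moments with indices in that set, hence does not depend on how those Majoranas are bundled into modes, so that tracing out $\alpha_2$ and $\beta_2$ coincides with tracing out $b_2^{\prime(1)},b_2^{\prime(2)}$. (A routine bookkeeping point is the fermionic ordering and the determinant of the regrouping transformations, handled by allowing arbitrary orthogonal transformations as in Sec.~\ref{GaussOper}.) Equivalently, one may argue directly on Fock-space density matrices, checking that the marginal of $\rho_{\gamma_0}^{\tilde{\otimes}2}$ on the modes $b_1^{\prime(1)},b_1^{\prime(2)}$ equals $\proj{0}\tilde{\otimes}\proj{0}$, a pure state maximally entangled across the $A|B$ cut; the covariance-matrix route above is simply the most economical. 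Having distilled a maximally entangled state from $\rho_{\gamma_0}\in\cS1_{\pi}$, the Observation follows.
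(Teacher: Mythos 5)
Your proposal is correct and follows essentially the same route as the paper: take two copies of $\rho_{\gamma_0}$, apply local Gaussian unitaries that regroup each party's Majoranas so that the two-copy CM becomes $\left(\begin{smallmatrix}0_2&\one_2\\-\one_2&0_2\end{smallmatrix}\right)\oplus 0_4$, and read off a pure maximally entangled two-mode state in the first modes of $A$ and $B$, usable for (probabilistic) teleportation. Your extra care about the local discarding step and the determinant of the regrouping permutation is sound but not essential, since the block-diagonal CM already shows the state factorizes into the maximally entangled pair and an uncorrelated maximally mixed remainder.
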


Given two copies of a Gaussian state with CM $\gamma_0$, we can use local
Gaussian unitaries to
transform it to the form (now written in $2\times2$ block form)
\[
\left( \begin{array}{cccc} 0&0&\id&0\\ 0&0&0&0\\ -\id&0&0&0\\ 0&0&0&0
\end{array} \right),
\]
which now contains a pure, maximally entangled two-mode state in the first
modes of A and B \cite{MRZ08}.  These states can be used for teleportation (though only
probabilistically). This shows, that $\cS1_{\pi}$ is not a viable definition of separability.

Due to these observations it is clear that one relevant set of
separable states is defined via ${\cal S}2_{\pi}$. Hence, we use
this definition in the following. In that case the CM of any
$n$-partite mixture of product states has direct-sum form, i.e.,
$\gamma = \oplus_i \gamma_i$. That is a GFS is separable iff its CM is
of that form. Moreover, this definition of separability is
meaningful in the context of the generation of separable states, as
all these states can be prepared locally. 
To be more precise, let us note that as separability does not have such a clear meaning for FS, as it has, e.g., in the bosonic, or finite dimensional case, it is a priori not clear how separable maps ought to be defined. This is especially due to the fact that the set of separable maps (SEP) does not have a clear physical meaning. In contrast to that, LOCC transformations, even if restricted to certain local operations, such as (Gaussian) fermionic operations, are operationally defined. It is the set of transformations which can be implemented by local [(Gaussian) fermionic] operations assisted by classical communication.  LOCC is strictly contained in SEP and is mathematically usually much harder to characterize. However, in a situation as here, where the definition of the larger set is not clear, one is forced to deal with LOCC. Hence, we consider in Appendix ~\ref{App:A} FLOCC transformations and show that this leads to a natural choice of the definition of FSEP. Moreover, we show that any separable state (according to ${\cal S}2_{\pi}$) can be generated via a FLOCC transformation. Hence, the definition of separable states being those which are elements of ${\cal S}2_{\pi}$ meets all the necessary requirements. Note that it is, however,
not clear if for every Gaussian state in ${\cal S}2_{\pi'}$  there
exists a decomposition into physical product states, i.e., it is not
clear whether for Gaussians the sets ${\cal S}2_{\pi}$ and ${\cal
  S}2_{\pi'}$ coincide or not (in general they do not
\cite{BCW07}). However, as mentioned above and as shown in Appendix~\ref{App:A}, all states which can be reasonably prepared locally must belong to ${\cal S}2_{\pi}$.

\subsection{Gaussian Fermionic Separable Operations}
\label{sec:GFsepOp}
As recalled in Sec. \ref{GaussOper} the CJ isomorphism
provides a one-to-one mapping between quantum states and quantum
operations. Moreover, it has been shown in the finite-dimensional case that a
map is separable, i.e., it can be written as a convex combination of local operators iff the corresponding CJ state is separable
\cite{Cirac2000}. In Appendix~\ref{App:A} we show that the CJ
state of a Gaussian  separable map has a CM of the form
\bea \Gamma_{AB}=\Gamma_A \oplus \Gamma_B,\eea with a natural generalization
to more systems. As a consequence of the previous section this state is a
separable GFS according to ${\cal S}2_{\pi}$. Thus, this definition of
separability agrees with the operational viewpoint that all separable states
can be generated locally (an agreement which is not maintained for all
  definitions in the presene of superselection rules, see, e.g., \cite{SVC04}). Moreover, this definition can be naturally generalized to Gaussian separable maps (GSEP) (see Appendix~\ref{appGLOCC} for more details).
As stated in the following lemma, fermionic completely positive maps (FCPM),
i.e., CP maps that map FSs onto FSs, can be written in
Kraus decomposition with special Kraus operators (see also \cite{DAriano2014a}).
\begin{lemma} \label{le:KrausOpParity}
All fermionic completely positive maps can be written using only Kraus operators with definite
parity (i.e., that are either sums of only even monomials in the Majorana operators $\tilde{c}_i$ or sums
of only odd monomials).
\end{lemma}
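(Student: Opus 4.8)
The plan is to start from an arbitrary Kraus decomposition $\mathcal{E}(\rho)=\sum_k A_k\rho A_k^\dagger$, whose Kraus operators need not have definite parity, symmetrize it with respect to the parity operator, and then split each Kraus operator into its even and odd parts; the point will be that the mixed-parity cross terms then cancel. The first ingredient is the elementary fact that the fermionic states span, over $\C$, the whole space of parity-even operators: an even Hermitian operator is block diagonal with respect to the splitting of the Hilbert space into even- and odd-parity subspaces, hence a real combination of (subnormalized) fermionic states, and a general even operator is a complex combination of even Hermitian ones. Since an FCPM $\mathcal{E}$ is linear and maps FS to FS, it therefore maps \emph{every} parity-even operator to a parity-even operator; equivalently, with $P=P_e-P_o$ the parity operator (acting on the input resp.\ output space as appropriate), $PxP=x$ implies $P\,\mathcal{E}(x)\,P=\mathcal{E}(x)$.

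Next I would introduce the parity-conjugated map $\widetilde{\mathcal{E}}(\rho):=P\,\mathcal{E}(P\rho P)\,P$, which, as a composition of conjugations by the unitary $P$ with the CP map $\mathcal{E}$, is again CP (and trace-preserving if $\mathcal{E}$ is) with Kraus operators $\{P A_k P\}$. By the observation above, $\widetilde{\mathcal{E}}(\rho)=P\,\mathcal{E}(\rho)\,P=\mathcal{E}(\rho)$ for every even $\rho$, in particular for every FS, so $\mathcal{E}':=\tfrac12(\mathcal{E}+\widetilde{\mathcal{E}})$ is again an FCPM and agrees with $\mathcal{E}$ on all fermionic states. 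Now decompose each $A_k=A_k^{(e)}+A_k^{(o)}$ into its definite-parity parts, $A_k^{(e)}=\tfrac12(A_k+PA_kP)$, $A_k^{(o)}=\tfrac12(A_k-PA_kP)$, so that $PA_k^{(e)}P=A_k^{(e)}$ and $PA_k^{(o)}P=-A_k^{(o)}$ (and, for maps between spaces with equally many modes, $A_k^{(e)}$ is a sum of even and $A_k^{(o)}$ of odd monomials in the $\tilde c_i$, conjugation by $P$ flipping only the odd-degree monomials). Since $PA_kP=A_k^{(e)}-A_k^{(o)}$,
\begin{align*}
\mathcal{E}'(\rho)&=\tfrac12\sum_k\big[A_k\rho A_k^\dagger+(PA_kP)\rho(PA_kP)^\dagger\big]\\
&=\sum_k\big[A_k^{(e)}\rho A_k^{(e)\dagger}+A_k^{(o)}\rho A_k^{(o)\dagger}\big],
\end{align*}
the two cross terms $A_k^{(e)}\rho A_k^{(o)\dagger}$ and $A_k^{(o)}\rho A_k^{(e)\dagger}$ appearing with opposite signs in the two summands and cancelling. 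Hence $\{A_k^{(e)}\}\cup\{A_k^{(o)}\}$ is a Kraus decomposition of $\mathcal{E}'$ --- equivalently of $\mathcal{E}$ acting on fermionic states --- built only from definite-parity operators.

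The step that requires care, and that is the actual content of the lemma, is the passage $\mathcal{E}\to\mathcal{E}'$: trying to show directly that the cross terms of the \emph{original} decomposition interfere destructively fails, because a CP map sending FS to FS need not be parity-covariant on the (unphysical) parity-odd part of its domain; only after replacing $\mathcal{E}$ by $\mathcal{E}'$, which modifies the map precisely there, does the cancellation go through. Accordingly the lemma should be read as a statement about FCPMs as maps on fermionic states. An alternative route to the same conclusion goes through the Choi operator: parity covariance of $\mathcal{E}'$ forces its Choi operator to commute with $P\otimes P$, hence to be block diagonal with respect to the total-parity eigenspaces, and the eigenvectors of the Choi operator, re-shaped into Kraus operators, are then exactly the operators of definite parity.
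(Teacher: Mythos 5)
Your proof is correct, and it reaches the same modified channel as the paper --- the map with Kraus operators $\{A_k^{(e)}\}\cup\{A_k^{(o)}\}$ --- but by a slightly different route. The paper argues directly: for any even $\rho$ the output $\cE(\rho)=\sum_k A_k\rho A_k^\dagger$ is even, and since $\sum_k\bigl(A_k^{(e)}\rho A_k^{(o)\dagger}+A_k^{(o)}\rho A_k^{(e)\dagger}\bigr)$ is precisely the odd part of that output, it vanishes identically on even inputs; hence the definite-parity Kraus operators already reproduce $\cE$ on all fermionic states. You instead twirl over the parity group, forming $\cE'=\tfrac12(\cE+P\cE(P\cdot P)P)$, and let the cross terms cancel algebraically between the two summands. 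Both are valid; your symmetrization buys a particularly clean normalization argument, since $\sum_k\bigl(A_k^{(e)\dagger}A_k^{(e)}+A_k^{(o)\dagger}A_k^{(o)}\bigr)=\tfrac12(\one+P\one P)=\one$ follows immediately when $\cE$ is trace-preserving, whereas the paper has to argue separately that the new Kraus operators resolve the identity on the whole space. However, your closing remark that the direct cancellation argument ``fails'' is not right: one does not need parity covariance of $\cE$ on the odd part of its domain, because the lemma only concerns the action on fermionic (even) inputs, and there the cross terms are forced to vanish for the reason above --- their sum is the odd component of an operator known to be even. So the paper's more elementary argument goes through, and your $\cE'$ in fact coincides with the paper's $\tilde\cE$ as a map on fermionic states; the ``step that requires care'' you identify is a genuine subtlety only if one insists on equality of the maps on the full operator space, which the lemma does not require.
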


\Proof: Let $\cE$ denote a FCPM with Kraus operators $\{A_k\}$, i.e.,
$\cE(\rho)=\sum_k A_k\rho A_k^\dag$ for all $\rho$. In general, the $A_k$ are sums
of even and odd terms, that is $A_k=A^{(e)}_k+A^{(o)}_k$. FCPMs map
FSs to (unnormalized) FSs, i.e., both $\rho$ and
$\cE(\rho)$ are even (sums of even monomials in the Majorana operators $\tilde{c}_i$). Using the Kraus
representation, this implies that
$\sum_k A^{(e)}_k\rho (A^{(o)}_k)^\dag+A^{(o)}_k\rho (A^{(e)}_k)^\dag = 0$ for
all $\rho$. Consequently, the FCPM $\tilde{\cE}$ with Kraus operators $\left\{
  A^{(e)}_k, A^{(o)}_k \right\}$, which we denote by $\tilde{A}_k$ in the following, represents the same channel as
$\cE(\rho)=\tilde{\cE}(\rho)$ for all FSs $\rho$. To show that $\sum_k \tilde{A}_k^{\dagger} \tilde{A}_k  = \mathbbm{1}$ on the whole state space, note that $\tr(Y \rho) = 1 \ \forall \rho = \rho_e \oplus \rho_o$ iff $Y = \left(\begin{smallmatrix} \mathbbm{1}& Y_{e o} \\ Y_{o e} & \mathbbm{1} \end{smallmatrix} \right)$, where $Y_{eo} = P_e Y P_o$ ($Y_{oe} = P_o Y P_e$) and both the even and odd part of $Y$ have to be equal to the identity. Moreover, for $Y = \sum_k \tilde{A}_k^{\dagger} \tilde{A}_k = A_e \oplus A_o$ it follows immediately that $Y_{eo} = Y_{oe} = 0$. Thus, the Kraus operators of the FCPM $\tilde{\cE}$ also satisfy $\sum_k \tilde{A}_k^{\dagger} \tilde{A}_k  = \mathbbm{1}$.
\qed

\subsection{Standard Form and GLU--Equivalence for $n$-mode $n$-partite States}
\label{sec:standardform}
Here, we consider $n$-mode $n$-partite fermionic systems. That is, each mode
is spatially separated from the others. We derive a standard form $S(\gamma)$
into which the CM $\gamma$ can be transformed via GLU. As the
standard form is unique, we have that two GFS are GLU equivalent iff their
CMs in standard form coincide.

Let us start by recalling that the most general GLU operation corresponds to
an arbitrary real orthogonal matrix on each mode. Hence, the CM $\gamma$ is transformed to
\bea S(\gamma)= \left(\oplus_i Z^{m_i} O_i\right) \gamma  \left(\oplus_i  O_i^T Z^{m_i}\right), \eea
via GLU with $O_i \in \SO(2,\R)$ and $m_i\in \{0,1\}$. We denote in the
following by $\gamma_{jk}$ the $2\times2$ matrix describing the covariances
between modes $j$ and $k$. Due to the fact that $\gamma=-\gamma^T$ we have for $i\leq n$
\bea
\gamma_{ii}= \left( \begin{array}{cc} 0 & \lambda_i\\
-\lambda_i &0
\end{array} \right)=\lambda_i J_2,
\eea
where $\lambda_i\in [-1,1]$. As $A J_2A^T= |A| J_2$, for any $2\times2$ matrix $A$, $\gamma_{ii}$ transforms to $Z^{m_i} \gamma_{ii} Z^{m_i}=(-1)^{m_i}\gamma_{ii} $. If $\lambda_i \neq 0$ we chose $m_i$ such that
$Z^{m_i} \gamma_{ii} Z^{m_i}=\lambda_i J_2$, where $\lambda_i >0$. In case $\lambda_i=0$, i.e., mode $i$ is completely mixed, we show below how the bit value $m_i$  can be uniquely defined. In order to uniquely define $O_i=e^{i\alpha_i Y}$ we proceed as follows.
Consider the first index $j$ with $i<j$ such that the off--diagonal matrix
$\gamma_{ij}$ is not vanishing. If the singular values of $\gamma_{ij}$
are non--degenerate ($d_{ij}\not=|d_{ij}'|$) we define $O_i$, $O_j \in \SO(2,\R)$ by $O_i
\gamma_{ij} O^T_j=D_{ij}=\mbox{diag}(d_{ij},d_{ij}'), d_{ij}\geq
|d_{ij}'|$ \footnote{$O_i$ diagonalizes $\gamma_{ij}\gamma_{ij}^T$ and
  $O_j$ diagonalizes $\gamma_{ij}^T\gamma_{ij}$. Note that due to the
  restriction $O_i$,  \unexpanded{$O_j \in \SO(2,\R)$} $D_{ij}$ cannot
  be chosen non-negative.}. If the singular values of $\gamma_{ij}$
are degenerate, $\gamma_{ij}$ is itself proportional to an 
orthogonal matrix. In case $|\gamma_{ij}|>0$, $\gamma_{ij}$ is proportional to
a special orthogonal matrix, $e^{i\alpha_{ij} Y}$. Then, we define $O_j
\propto O_i \gamma_{ij}$, that is we set $\alpha_j=\alpha_{ij}+\alpha_i$. In
case $|\gamma_{ij}|<0$, $\gamma_{ij}$ is proportional to a matrix $Z
e^{i\alpha_{ij} Y}$. Then, we define $O_j \propto Z O_i
\gamma_{ij}$, that is we set $\alpha_j=\alpha_{ij}-\alpha_i$. In all
cases $S(\gamma)_{ij}$ is diagonal. We proceed in the same way for $\gamma_{i
  j+1}$ (and then any subsequent $\gamma_{ik}$). If $\alpha_j$ has already
been determined in a previous step, $\alpha_k$ is determined by diagonalizing
$\gamma_{jk}^T \gamma_{jk}$. More precisely, $\alpha_k$ is chosen such that $ O_j\gamma_{jk}O_k^T=\tilde{O}_{jk}D_{jk}$ with $D_{jk}=\mbox{diag}(d_{jk},d_{jk}'), d_{jk}>
|d_{jk}'|$ \footnote{If $d_{jk}=
d_{jk}'$ then $\gamma_{jk}$ is proportional to an orthogonal matrix and
$\alpha_k$ is determined as explained before.}, $\tilde{O}_{jk}\in \SO(2,\R)$
and $(\tilde{O}_{jk})_{11}\geq 0$ (for $(\tilde{O}_{jk})_{11}= 0$ choose
$\alpha_k$ such that $(\tilde{O}_{jk})_{12}\geq 0$) \footnote{Analogously one
  determines $\alpha_j$ by diagonalizing $\gamma_{jk} \gamma_{jk}^T$ (and
  imposing the same conditions on the singular values and the orthogonal
  matrix) if only $\alpha_k$ has been already determined.}.  If $\alpha_j$ has
been expressed as a function of some other $\alpha_l$, $l<j$, which cannot be
determined by the procedure explained before then we fix $\alpha_l$ by
diagonalizing $\gamma_{j k} \gamma_{j k}^T$ and imposing that the singular
values are ordered non-increasingly \footnote{One applies an analogous
  procedure if $\alpha_k$ is expressed as a function of $\alpha_l$.}. Note
that if not both $O_j$ and $O_k$ depend on $\alpha_l$ we can choose either
$(O_j\gamma_{j k}O_k^T)_{11}> 0$  or if $(O_j\gamma_{j k}O_k^T)_{11}= 0$ we
impose that $(O_j\gamma_{j k}O_k^T)_{12}\geq 0$. In case $\gamma_{jk}$ is
proportional to an orthogonal matrix then either
  one relates $O_k$ and $\alpha_l$ using the scheme explained before or $O_k$
  has already been related to $\alpha_l$ in a previous step. In the second case either  $O_j\gamma_{j k}O_k^T$ is independent of $\alpha_l$ or one chooses $O_j\gamma_{j k}O_k^T=\mbox{diag}(|d_{jk}|,d_{jk})$.\\
 It is easy to see that in this way any $\alpha_j$
is uniquely determined unless the CM is invariant under the conjugation with
$O_j$, that is, the mode $j$ is decoupled from all other modes,
in which case we set $\alpha_j=0$. At this point all the operators which are no symmetry of the CM are determined. Those which leave the CM invariant can be chosen to be equal to the identity, e.g., if for 3-modes $\gamma_{12} = O_{12},\ \gamma_{13} = O_{13}$ and $\gamma_{23} = O_{23}$ with $O_{ij}\in \SO(2,\R)$, i.e., all of them are special orthogonal matrices and invariant under $O_1$, we choose $O_1 = \mathbbm{1}$.

It remains to consider the case where $\lambda_i=0$.  If there is no index $j$
such that $\gamma_{ij}\neq 0$ then the mode $i$ factorizes and we set
$m_i=0$. Hence, let us assume that $\gamma_{ij}\neq 0$ for some $j$. We
determine $m_i+m_j$ by requiring that $Z^{m_i}O_i \gamma_{ij} O_j
Z^{m_j}=D_{ij}$ such that $\tr(D_{ij})> \tr(Z D_{ij})$. In case $m_j$ is
determined by the condition on the transformed $\gamma_{jj}$, this determines
$m_i$. Otherwise, there exists either a $k$ such that either $\gamma_{ik}\neq
0$ or $\gamma_{jk}\neq 0$ or, the modes $i$ and $j$ factorize. In this case,
the CM is invariant under the transformation $Z^{m_i} \oplus Z^{m_j}$ and we
set $m_i=m_j=0$. Note that if selection rules forbid the application of the
operations $Z$ to the individual modes, we simply set $m_i=0$ $\forall i$ in
the derivation above.

In summary, we have shown that any GFS can be easily transformed into its standard form by applying GLU. As the standard form is unique we have the following theorem.
\begin{theorem}\label{th:GLUequiv}
  Any CM $\gamma$ can be transformed into its standard form, $S(\gamma)$,
  by Gaussian local unitaries (GLUs). Two CMs $\gamma$ and $\Gamma$ are
  GLU-equivalent if and only if $S(\gamma)=S(\Gamma)$.
\end{theorem}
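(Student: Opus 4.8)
The plan is to split the statement into an existence part---every CM is GLU-equivalent to one in standard form---and a uniqueness part---the standard form is an invariant of the GLU-orbit. Existence is exactly the explicit construction carried out in the paragraphs preceding the theorem: one first uses the local parity flips $Z^{m_i}$ to fix the sign of $\lambda_i$ in each diagonal block $\gamma_{ii}=\lambda_i J_2$ (using $ZJ_2Z=-J_2$), and then sweeps through the off-diagonal blocks $\gamma_{ij}$ in the prescribed order, using the local rotations $O_i=e^{i\alpha_i Y}\in\SO(2,\R)$ to bring each block to the canonical diagonalized form $D_{ij}$ with non-increasingly ordered singular values and the stated sign conventions (treating separately the degenerate-singular-value blocks, the completely mixed modes $\lambda_i=0$, and modes decoupled from the rest, for which the residual parameters are set to the identity). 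Since every GLU acts on $\gamma$ by conjugation with a real orthogonal matrix (the $Z^{m_i}$ included), and orthogonal conjugation preserves antisymmetry and the condition $\gamma^2\geq-\one$, the output $S(\gamma)$ is again a legitimate CM, and the construction simultaneously exhibits a GLU mapping $\gamma$ to $S(\gamma)$.

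The ``if'' implication is then immediate: GLUs form a group, so if $S(\gamma)=S(\Gamma)$ we may map $\gamma\to S(\gamma)=S(\Gamma)$ and then $S(\Gamma)\to\Gamma$, exhibiting $\gamma$ and $\Gamma$ as GLU-equivalent.

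The substance of the theorem is the ``only if'' implication, which is equivalent to the claim that each GLU-orbit contains \emph{at most one} CM in standard form; combined with existence this shows $S$ is well defined and constant on orbits, and then $\gamma\sim_{\mathrm{GLU}}\Gamma$ forces $S(\gamma)=S(\Gamma)$ by transitivity (map both to their standard forms, which are GLU-equivalent and both standard, hence equal). To prove the uniqueness claim I would take $\gamma$ and $\gamma'=V\gamma V^T$ both in standard form with $V=\oplus_i Z^{m_i}O_i$ a GLU, and show $V\gamma V^T=\gamma$ by induction along the processing order of the construction. The diagonal blocks transform as $\gamma_{ii}\mapsto(-1)^{m_i}\lambda_i J_2$ (since $O_i$ commutes with $J_2$), so the positivity convention forces $m_i=0$ whenever $\lambda_i\neq0$. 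For the off-diagonal blocks one works component by component in the interaction graph (edge iff $\gamma_{ij}\neq0$): when $\gamma_{ij}$ has non-degenerate singular values its real SVD is unique up to signs, so the rotations acting on it are pinned to $\pm\one$, and the sign/ordering conventions (``$d_{ij}\geq|d'_{ij}|$'', ``$(\tilde{O}_{jk})_{11}\geq0$'', ``$\tr(D_{ij})>\tr(ZD_{ij})$'') remove the remaining $\pm$ and $m_i$ ambiguities as one cascades through the block sequence; when $\gamma_{ij}$ is proportional to an orthogonal matrix the corresponding angles $\alpha$ become linked and the same tie-breaks apply; and any parameter still free at the end acts as a genuine symmetry of $\gamma$, which the construction has set to the identity. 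Hence $V$ acts trivially on $\gamma$ and $\gamma'=\gamma$.

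The main obstacle is precisely this last bookkeeping: because the interaction graph may contain cycles, the construction sometimes expresses a later angle $\alpha_j$ as a function of an earlier $\alpha_l$, and one must check that the whole chain of tie-breaking rules is globally consistent and leaves no non-symmetry parameter undetermined---i.e.\ that the induction really closes. The completely mixed modes ($\lambda_i=0$), whose bit $m_i$ is fixed only indirectly through a neighboring block (or set to $0$ when the corresponding $Z^{m_i}$ is itself a symmetry), and the degenerate-singular-value blocks require their own sub-cases. Everything else reduces to elementary $2\times2$ facts---$AJ_2A^T=|A|\,J_2$, $\SO(2,\R)$ commuting with $J_2$, and uniqueness of the real SVD up to signs when the singular values are distinct---so the argument is conceptually straightforward but notation-heavy.
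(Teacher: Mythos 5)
Your proposal is correct and follows essentially the same route as the paper: the paper's proof of this theorem \emph{is} the explicit standard-form construction in the preceding paragraphs, with the assertion that each $\alpha_j$ and $m_j$ is uniquely determined unless it corresponds to a symmetry of the CM (in which case it is set to the identity), so that the standard form is a complete GLU-invariant. The bookkeeping issue you flag---global consistency of the cascading tie-breaks when the interaction graph has cycles---is left at the same level of ``it is easy to see'' in the paper itself, so you have reproduced both the argument and its one unelaborated step.
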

As the CM determines uniquely the corresponding GFS, Theorem~\ref{th:GLUequiv} presents a criterion for GLU--equivalence of GFS.\par
Let us consider now some examples, where we explicitly compute the standard form for the CM. As mentioned above we consider here $n$-mode $n$-partite systems, i.e., the $1 \times 1\times\dots\times 1$ case. Here, we compute the standard form of 2- and 3-mode states.

\subsubsection{$1\times 1$}
Using the definition of the standard form introduced above, it is straightforward to see that any 2-mode state CM can be written (up to GLU) as
\bea
S(\gamma)= \left( \begin{array}{cccc} 0 & \lambda_1&d_{12}&0\\
-\lambda_1 &0&0&d^\prime_{12}\\
-d_{12}&0&0&\lambda_2\\
0&-d^\prime_{12}&-\lambda_2&0
\end{array} \right), 
\eea 
with $\lambda_i> 0$ for $i\in\{1,2\}$ and $d_{12}\geq |d^\prime_{12}|$ or $\lambda_i= 0$ and $\lambda_j\geq 0$ for  $\{i,j\}=\{1,2\}$ and $d_{12}\geq d^\prime_{12}\geq 0$.
Imposing that the state is pure, i.e., that $\gamma \gamma^T=\one$ we obtain $\lambda_1=\lambda_2>0$, $d_{12}=-d^\prime_{12}$ and $d_{12}^2+\lambda_1^2=1$ or $\lambda_1=\lambda_2=0$ and  $d_{12}=d^\prime_{12}=1$ (the maximally entangled state).
\subsubsection{$1 \times 1 \times 1$}
Similar to above, one can identify the standard form for mixed states of 3 modes to be
\bea\nonumber
  S(\gamma) = \hspace*{-0.1cm} \left(\small{ \begin{array}{cccccc} 0 & \lambda_1&d_{12}&0&l_1 d_{13}&l_2 d^\prime_{13}\\
-\lambda_1 &0&0&d^\prime_{12}&-l_2 d_{13}&l_1 d^\prime_{13}\\
-d_{12}&0&0&\lambda_2&m_1&m_{12}\\
0&-d^\prime_{12}&-\lambda_2&0&m_{21}&m_2\\
 -l_1 d_{13}&l_2 d_{13}&-m_1&-m_{21}&0&\lambda_3\\
-l_2 d^\prime_{13}&-l_1 d^\prime_{13}&-m_{12}&-m_2&-\lambda_3&0
\end{array} }\right).
\eea
Thus there are 13 free parameters characterizing the mixed GFS, which have to obey certain conditions, given in Appendix~\ref{3modeGLU}.\par
Imposing the condition that the state is pure is more involved than in the case of two modes.
Even though it is straightforward to derive this decomposition for the CM, we use the Jordan-Wigner representation of the states instead. In
Sec.~\ref{sec:GSLOCC} we show that any pure GFS is either of the from
$\ket{\Phi}=a_1\ket{000}+a_2\ket{011}+a_3\ket{101}+a_4\ket{110}, a_i \in \mathbb{R}_{\geq 0} \forall i, \ \sum_{i=1}^4 a_i^2=1$
or of the form $X^{\otimes 3} \ket{\Phi}$. Note that without loss of generality $1/2 (a_3^2+a_4^2- a_1^2 - a_2^2) \geq 0$, $1/2 (a_2^2+a_4^2- a_1^2 -
a_3^2) \geq 0$, $ 1/2 (a_2^2-a_4^2- a_1^2 + a_3^2) \geq 0$ 
  (equivalent to non-negative $\lambda_i$). For \emph{strict}
  inequalities and for $a_i\neq 0$ $\forall i$ (i.e., the case of a
  generic CM without degeneracies) the standard form of the CM is given by
\begin{widetext}
\begin{align} \label{eq:Sgamma}
S(\gamma)&=2  \left(\begin{array}{cccccc}0 &\!  \lambda_1 &\! a_1 a_4 \! +\!  a_2 a_3 &\! 0&\!  0&\! -a_1 a_3 \! + \! a_2 a_4\\
-\lambda_1 &\! 0&\! 0&\!  - a_1 a_4 \! +\!  a_2 a_3&\!  -( a_1 a_3\!  + \! a_2 a_4) &\! 0 \\
-( a_1 a_4 \! +\!  a_2 a_3)&\! 0&\! 0&\! \lambda_2 &\! a_3 a_4\!  -\!  a_1 a_2 &\! 0 \\
0&\!  a_1 a_4\!  - \! a_2 a_3 &\! -\lambda_2 &\! 0&\! 0 &\!  (a_3 a_4\!  +\!  a_1 a_2)\\
0 &\! a_1 a_3\!  +\!  a_2 a_4 &\! -a_3 a_4\!  +\!  a_1 a_2 &\! 0 &\! 0&\! \lambda_3\\
 a_1 a_3 \! - \!  a_2 a_4&\! 0 &\! 0 &\! -(a_3 a_4 \! +\!  a_1 a_2) &\! -\lambda_3 &\! 0
\end{array}\right),
\end{align}
\end{widetext}
with $\lambda_1 = 1/2 (a_3^2+a_4^2- a_1^2 - a_2^2), \ \lambda_2 = 1/2 (a_2^2+a_4^2- a_1^2 -
a_3^2), \  \lambda_3 = 1/2 (a_2^2-a_4^2- a_1^2 + a_3^2)$. If the above
stated conditions do not hold a similar standard form can be
derived. More precisely, if one of the $a_i$'s is equal to zero at least one of the off-diagonal blocks is degenerate and therefore, as explained above the standard form looks slightly different. Note that as in the bosonic Gaussian case \cite{GiKr14} and in contrast to the qubit case \cite{Kus} it can be easily seen that the purities of the
reduced states, that is the $\lambda_i$'s, 
 uniquely define the state. 
Let us remark here, that there exists only one GFS (up to GLU) with
$\rho_i\propto\one$ for each subsystem $i$, namely
$\ket{000}+\ket{011}+\ket{101}+\ket{110}$. Note that -- although it is
known that any three-\emph{qubit} state whose single-qubit reduced
density operators are completely mixed is LU--equivalent to the GHZ
state -- this does not immediately imply the same for GFS due to the
restriction to GLU.

\section{Pure Gaussian Fermionic States and Local Transformations for $n$-mode $n$-partite States}
\label{sec:purestates}
Let us now investigate in more detail the entanglement contained in
\emph{pure} GFS. For this purpose we consider the class of Gaussian separable
operations (GSEP). In general, SEP contains LOCC but is a strictly larger class \cite{Bennett1999, kleinmann2011, whatyouLOCC, Hebenstreit2016}. We show, however, that for Gaussian operations on $n$-mode $n$-partite
systems any transformation among pure fully entangled states via Gaussian SEP
(GSEP) can be performed via GLU. Hence, in particular, only trivial Gaussian
LOCC (GLOCC) transformations exist for single modes. Note that here and in the
following we consider only fully entangled states, i.e., states where no
subset of modes factorizes from the remainder.  Due to the triviality of GLOCC
we study then Gaussian stochastic LOCC (GSLOCC) and certain fermionic LOCC
(FLOCC', see Section \ref{sec:FLOCC}), which map FSs to FSs. We characterize the various GSLOCC classes, which are, in contrast to
the bosonic case, indeed equivalence classes \footnote{In analogy to the
  finite dimensional case and in contrast to the bosonic case all operators
  are bounded and hence invertible.}. We then show that there exist non-trivial
FLOCC transformations that map a pure GFS to some other pure GFS and demonstrate
how to identify all possible transformations of that kind. Interestingly, many
of the pure GFS belong to the Maximally Entangled Set (MES) \cite{VSK13}. That
is, they cannot be obtained from any other state via local deterministic transformations. For other
states we derive a very simple local protocol which can be used to reach the
state from a state in the MES.

Let us first of all show that Cond.~\eqref{cond_Gauss}, which is a necessary
and sufficient condition for a FS to be also Gaussian, simplifies for pure FS
(see also \cite{Terhal}).

\begin{lemma} \label{LemmaGaussianPure} Let $\ket{\Psi}$ be a FS. Then $\ket{\Psi}$ is a GFS iff
\bea \label{condGausspure} \Lambda (\ket{\Psi}\otimes \ket{\Psi})=0.\eea
\end{lemma}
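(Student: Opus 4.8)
The plan is to show the equivalence between the general Gaussianity condition \eqref{cond_Gauss}, namely $[\Lambda,\rho^{\otimes 2}]=0$ with $\rho=\proj{\Psi}$, and the simpler condition \eqref{condGausspure}, namely $\Lambda(\ket{\Psi}\otimes\ket{\Psi})=0$. One direction is immediate: if $\Lambda(\ket{\Psi}\otimes\ket{\Psi})=0$ then also $(\bra{\Psi}\otimes\bra{\Psi})\Lambda=0$ by taking the adjoint (since $\Lambda=\sum_i c_i\otimes c_i$ is Hermitian, as each $c_i$ is Hermitian), so $\Lambda\,\proj{\Psi}^{\otimes 2}=\proj{\Psi}^{\otimes 2}\Lambda=0$ and the commutator vanishes. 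The substantive direction is to deduce $\Lambda(\ket{\Psi}\otimes\ket{\Psi})=0$ from $[\Lambda,\proj{\Psi}^{\otimes 2}]=0$.

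For that direction, I would first note that $\proj{\Psi}^{\otimes 2}$ is a rank-one projector onto $\ket{\Psi}\otimes\ket{\Psi}$, so $[\Lambda,\proj{\Psi}^{\otimes 2}]=0$ means $\ket{\Psi}\otimes\ket{\Psi}$ is an eigenvector of $\Lambda$: there is a real number $\mu$ (real since $\Lambda$ is Hermitian) with $\Lambda(\ket{\Psi}\otimes\ket{\Psi})=\mu\,\ket{\Psi}\otimes\ket{\Psi}$. The goal is then to show $\mu=0$. The natural way is to compute the expectation value $\mu=(\bra{\Psi}\otimes\bra{\Psi})\Lambda(\ket{\Psi}\otimes\ket{\Psi})=\sum_{i=1}^{2n}\bra{\Psi}c_i\ket{\Psi}^2$. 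By the parity superselection rule every $c_i$ is odd and hence flips parity, so $\bra{\Psi}c_i\ket{\Psi}=0$ for each $i$ (using that a FS $\ket{\Psi}$ has definite parity, or more generally $P_e\ket{\Psi}$ and $P_o\ket{\Psi}$ are the even/odd components and $c_i$ maps one to the other). Therefore $\mu=0$, giving $\Lambda(\ket{\Psi}\otimes\ket{\Psi})=0$.

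I should be slightly careful about the one subtlety: a general pure FS need not itself have definite parity, but by the discussion in the excerpt it decomposes as $\ket{\Psi}=\ket{\Psi_e}+\ket{\Psi_o}$ with the two pieces in the even and odd sectors, and any fermionic density matrix satisfies $P_e\rho P_o=0$; for a pure state $\rho=\proj{\Psi}$ this forces $\ket{\Psi_e}\bra{\Psi_o}=0$, so in fact a pure FS \emph{is} either purely even or purely odd. Hence $\bra{\Psi}c_i\ket{\Psi}=0$ holds with no caveat, since $c_i$ is odd and maps the even (odd) sector into the odd (even) sector, which is orthogonal to $\ket{\Psi}$. The main (and really only) obstacle is recognizing that the commutator condition collapses to an eigenvector statement because $\rho^{\otimes 2}$ is rank one, and then that the eigenvalue is pinned to zero by the vanishing of the first moments $\tr(c_i\rho)$ — a fact already recorded in the preliminaries. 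Everything else is a one-line adjoint argument. I would write this up in three or four short sentences.
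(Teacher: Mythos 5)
Your proof is correct and follows essentially the same route as the paper's: both reduce the commutator condition $[\Lambda,\proj{\Psi}^{\otimes 2}]=0$ to the statement that $\ket{\Psi}\otimes\ket{\Psi}$ is an eigenvector of the Hermitian operator $\Lambda$, and then pin the eigenvalue to zero via $\bra{\Psi}c_i\ket{\Psi}=0$, which holds because a pure FS has definite parity. Your extra remark justifying why a pure FS is purely even or purely odd is a detail the paper asserts without comment, but it changes nothing of substance.
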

\begin{proof}

As mentioned before, an even operator, $X$ is Gaussian iff
$[\Lambda,X\otimes X]=0$ (see Eq.~\eqref{GaussOp}). As the projector
onto a FS, $\ket{\Psi}$, is even and as a hermitian rank-one operator
commutes with another hermitian operator, such as $\Lambda$, iff the state in the range of the projector is an eigenstate of $\Lambda$ we have that $\ket{\Psi}$ is GFS iff
$\Lambda (\ket{\Psi} \otimes \ket{\Psi}) = a \ket{\Psi} \otimes
\ket{\Psi}$ for some $a\in \R$. As $\ket{\Psi}$ has
  well-defined parity, we have that $\bra{\Psi} c_i \ket{\Psi}=0$ for
any  operator $c_i$. Hence, $(\bra{\Psi} \Lambda \ket{\Psi})
\ket{\Psi}= 0= a \ket{\Psi}$.
\end{proof}

\subsection{Gaussian Separable Operations and Gaussian LOCC}
\label{sec:GLOCC}
Let us start with the investigation of GSEP transformations. As argued in
Appendix~\ref{appGLOCC}, GSEP is defined as the class of operations for which
the CJ state is Gaussian and has a CM of the form $\Gamma=\oplus_{i=1}^n
\Gamma_i$. We show here that any GSEP acting on $n$ separated modes, which
maps at least
one pure (fully entangled) state into a different pure (fully entangled) state
is a GLU transformation. Hence, no non--trivial state transformation is possible. The following lemma allows us to show in the end that GLOCC on pure states are trivial, as GSEP strictly includes GLOCC (see Appendix~\ref{appGLOCC}).

\begin{lemma}\label{LemmaGSEP} Let ${\cal
    E}_{sep}$ denote a Gaussian trace preserving separable map which
  transforms at least one pure $n$-partite $n$-mode FS,
  $\ket{\Psi}$, into another pure $n$--partite $n$--mode fully
  entangled FS, $\ket{\Phi}$. Then, it holds that ${\cal
    E}_{sep}(\rho)=(U_1\tilde{\otimes} U_2\dots \tilde{\otimes} U_n) \rho
  (U_1^\dagger \tilde{\otimes} U_2^\dagger\dots \tilde{\otimes} U_n^\dagger)$ for all
  $\rho$.
\end{lemma}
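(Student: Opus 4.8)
The plan is to first use Gaussianity to reduce ${\cal E}_{sep}$ to a \emph{product} of single-mode channels, and then use full entanglement of $\ket{\Psi}$ and $\ket{\Phi}$ to force each factor to be a unitary conjugation. For the first step, recall (Sec.~\ref{sec:GLOCC} and Appendix~\ref{appGLOCC}) that a Gaussian separable map has a CJ state which is a GFS whose CM has direct-sum form $\Gamma=\oplus_{i=1}^n\Gamma_i$. Since a GFS is uniquely determined by its CM, and a GFS whose CM is block-diagonal equals the ($\tilde{\otimes}$-)product of the GFS associated with the individual blocks, the CJ state factorizes across the $n$ parties; by the fermionic CJ isomorphism this means ${\cal E}_{sep}={\cal E}_1\tilde{\otimes}\cdots\tilde{\otimes}{\cal E}_n$, with each ${\cal E}_i$ a trace-preserving single-mode Gaussian CP map. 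By Lemma~\ref{le:KrausOpParity} we may pick Kraus operators $\{A^{(i)}_{k_i}\}_{k_i}$ of definite parity for ${\cal E}_i$, so that the Kraus operators of ${\cal E}_{sep}$ are the $\tilde{\otimes}$-products $A^{(1)}_{k_1}\tilde{\otimes}\cdots\tilde{\otimes}A^{(n)}_{k_n}$ labelled by tuples.

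For the second step, rewrite ${\cal E}_{sep}(\proj{\Psi})=\proj{\Phi}$ as $\sum_{k_1,\dots,k_n}(A^{(1)}_{k_1}\tilde{\otimes}\cdots\tilde{\otimes}A^{(n)}_{k_n})\proj{\Psi}(A^{(1)}_{k_1}\tilde{\otimes}\cdots\tilde{\otimes}A^{(n)}_{k_n})^\dagger=\proj{\Phi}$. As the right-hand side has rank one, each summand is proportional to it, i.e.\ $(A^{(1)}_{k_1}\tilde{\otimes}\cdots\tilde{\otimes}A^{(n)}_{k_n})\ket{\Psi}=c_{k_1\dots k_n}\ket{\Phi}$ for every tuple, and some tuple must carry $c\neq0$ (since $\ket\Phi\neq0$). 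Fixing it and absorbing scalars I obtain single-mode operators $B_1,\dots,B_n$ (on modes $1,\dots,n$) with $(B_1\tilde{\otimes}\cdots\tilde{\otimes}B_n)\ket{\Psi}=\ket{\Phi}$; since $\ket{\Phi}$ is fully entangled every single-mode reduced state $\rho^\Phi_i$ has full rank, and writing $\rho^\Phi_i=B_i\sigma_iB_i^\dagger$ (pulling $B_i$ out of the partial trace over the remaining modes) forces each $B_i$ to be invertible. Now vary a single Kraus index $k_1$ while keeping the others at the fixed tuple: the relation $(A^{(1)}_{k_1}\tilde{\otimes}B_2\tilde{\otimes}\cdots\tilde{\otimes}B_n)\ket{\Psi}=c'_{k_1}\ket{\Phi}=c'_{k_1}(B_1\tilde{\otimes}B_2\tilde{\otimes}\cdots\tilde{\otimes}B_n)\ket{\Psi}$, together with invertibility of $B_2,\dots,B_n$ and of $\rho^\Phi_1$ (after tracing out modes $2,\dots,n$), gives $A^{(1)}_{k_1}=c'_{k_1}B_1$ for every $k_1$. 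Hence ${\cal E}_1(\cdot)=\big(\sum_{k_1}|c'_{k_1}|^2\big)B_1(\cdot)B_1^\dagger$, and trace preservation forces $B_1^\dagger B_1\propto\id$, i.e.\ $B_1$ is a scalar multiple of a single-mode unitary $U_1$. Finally, a single-mode unitary Gaussian operator is, up to a global phase, either a Gaussian unitary generated by a quadratic one-mode Hamiltonian or such a unitary composed with the parity flip $X$; since $X$ is an admissible GLU (Sec.~\ref{GaussOper}), $U_1$ is a GLU. Repeating the argument for every mode yields ${\cal E}_{sep}(\rho)=(U_1\tilde{\otimes}\cdots\tilde{\otimes}U_n)\rho(U_1^\dagger\tilde{\otimes}\cdots\tilde{\otimes}U_n^\dagger)$.

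The step I expect to be the crux is the reduction to a product channel. For a merely separable map the Kraus indices of the different parties are linked, and then the hypothesis only yields that the product operators $\tilde{\otimes}_iM^{(k)}_i$ have $\ket{\Phi}$ as a common eigenvector, which does \emph{not} force $M^{(k)}_i\propto\id$ (already for $n=2$, $M_1=\mathrm{diag}(t,1/t)$ and $M_2=\mathrm{diag}(1/t,t)$ fix $\ket{00}+\ket{11}$); it is precisely the direct-sum structure of the CJ covariance matrix --- a genuinely Gaussian feature --- that collapses GSEP to a product of local channels and makes the independent single-index variation legitimate. A secondary technical point is the fermionic bookkeeping: using Lemma~\ref{le:KrausOpParity} to keep all relevant operators of definite parity so that the $\tilde{\otimes}$-products and their compositions are well defined, and keeping track of the swap convention when taking the partial traces over modes $2,\dots,n$.
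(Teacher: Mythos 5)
Your proof is correct and follows the same skeleton as the paper's: reduce $\cE_{sep}$ to a product channel $\cE_1\tilde{\otimes}\cdots\tilde{\otimes}\cE_n$ via the direct-sum structure of the CJ covariance matrix, show that all Kraus operators of each local factor are mutually proportional, and invoke trace preservation to force each factor to be a unitary conjugation (which for a single mode is automatically a GLU in the paper's sense). Where you genuinely differ is in the middle step. The paper works mode by mode: it forms the intermediate state $\id\tilde{\otimes}_{k\neq1}\cE_k(\proj{\Psi})$, takes its spectral decomposition, writes the Schmidt decomposition of an eigenvector across mode $1$ versus the rest, and then needs a rather delicate footnote argument (propagating the relation $A_k\ket{j}\ket{\psi_j^i}\propto A_l\ket{j}\ket{\psi_j^i}$ to the whole Fock basis using the definite parity of the Kraus operators) to conclude $A_k\propto A_l$. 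You instead exploit the rank-one target globally: every tuple product of Kraus operators maps $\ket{\Psi}$ to a multiple of $\ket{\Phi}$, a non-vanishing tuple supplies invertible reference operators $B_i$ (invertibility coming from the full-rank single-mode marginals of the fully entangled $\ket{\Phi}$), and varying one Kraus index at a time immediately yields $A^{(i)}_{k}\propto B_i$ because the relevant single-mode marginal of $(\id\tilde{\otimes}B_2\tilde{\otimes}\cdots\tilde{\otimes}B_n)\ket{\Psi}$ has full rank. This is cleaner and dispenses with the Fock-basis-generation argument entirely; it also makes transparent why only full entanglement of the \emph{target} state is needed. The one place where the paper still does work that you only gesture at is the fermionic bookkeeping for modes $j>1$: the factorization $M\tilde{\otimes}N=(M\tilde{\otimes}\id)(\id\tilde{\otimes}N)$, the pulling of local operators out of partial traces, and the mode reordering all rely on the definite-parity Kraus operators of Lemma~\ref{le:KrausOpParity} and on the fermionic-swap conventions; you flag this explicitly, and it goes through exactly as in the paper's footnotes, so no gap results.
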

\begin{proof} Every separable Gaussian CP trace-preserving map
  (GCPTM) $\cE_{sep}$ has a separable Gaussian CJ state $E_{\cE_{sep}}$, i.e., $E_{\cE_{sep}}$ is of the form $\rho_1 \tilde{\otimes}\rho_2\dots \tilde{\otimes}\rho_N$,  and, consequently, $\cE_{sep}=\cE_1 \tilde{\otimes} \cE_2\dots \tilde{\otimes}\cE_N$ is a product operation with GCPTMs $\cE_i$ (see Appendix~\ref{appGLOCC}).   Let us denote $\id\tilde{\otimes}_{k\neq 1}{\cal E}_k(\kb{\Psi}{\Psi})$ by  $\rho$ and write it in its spectral decomposition $\rho=\sum_i p_i \kb{\Psi_i}{\Psi_i}$. It follows from $\cE_{sep}(\kb{\Psi}{\Psi})=\kb{\Phi}{\Phi}$ that $\sum_i p_i {\cal E}_1\tilde{\otimes}\id^{\tilde{\otimes} n-1}(\kb{\Psi_i}{\Psi_i})=\kb{\Phi}{\Phi}$. Hence, it has to hold that   for $p_i\neq 0$ ${\cal E}_1\tilde{\otimes}\id^{\tilde{\otimes} n-1}(\kb{\Psi_i}{\Psi_i})=\kb{\Phi}{\Phi}$ and therefore there exists at least one pure state $\ket{\Psi_i}$ for which \begin{eqnarray}\label{EQSEPLU}
 \ket{\Phi}\propto A_k\tilde{\otimes}\id^{\tilde{\otimes} n-1} \ket{\Psi_i}\propto A_l\tilde{\otimes}\id^{\tilde{\otimes} n-1}\ket{\Psi_i},
 \end{eqnarray}
where by $A_j$ we denote the Kraus operators of ${\cal E}_1$. Note
that $\ket{\Psi_i}$ has to be entangled in the splitting mode $1$
versus the remaining modes as $\ket{\Phi}$ is entangled in this
splitting and ${\cal E}_1$ cannot generate entanglement. Hence,
considering $\ket{\Psi_i}$ in its Jordan-Wigner representation its
Schmidt decomposition can be written as $\ket{\Psi_i}=\sum_{j=0}^1
\lambda_j^{i}\ket{j}_1\ket{\psi_j^{i}}$ with
$\lambda_0^{i},\lambda_1^{i}\not=0$. Using this in Eq.~(\ref{EQSEPLU})
as well as that due to Lemma \ref{le:KrausOpParity} the Kraus
operators of ${\cal E}_1$ can be chosen such that each of them
commutes with $\proj{\psi_j^{i}}$ (which is a sum of only even
monomials in the Majorana operators acting on the modes $2,\ldots, n$)
it is easy to see that $A_k\ket{j}\ket{\psi_j^{i}}=c
A_l\ket{j}\ket{\psi_j^{i}}$ for $j\in\{0,1\}$ and $c\in\C $
\footnote{It straightforwardly follows from this equation that the
  action of $A_k$ is the same (up to a proportionality factor) as the
  action of $A_l$ on the whole Fock basis. In order to see this, apply
  for \unexpanded{$j\in\{0,1\}$} the projector  \unexpanded{$\proj{\vec{v}_j}$} where
   \unexpanded{$\ket{\vec{v}_j}$} is a Fock state of modes  \unexpanded{$2,\ldots ,n$} for which
   \unexpanded{$\ket{\vec{v}_j}\ket{\vec{v}_j}\bra{\Psi_j^i}\neq 0$} to both sides
  of the equation. Note that   \unexpanded{$\proj{\vec{v}_j}$} commutes with the
  Kraus operators as they have definite parity. Then apply the
  operators  \unexpanded{$(\tilde{c}_{3})^{m_2}(\tilde{c}_{5})^{m_3}\ldots
  (\tilde{c}_{2n-1})^{m_n}$} for  \unexpanded{$m_i\in\{0,1\}$}  to the resulting
  equation. It is important to note here that in order for
   \unexpanded{$A_k\ket{j}\ket{\Psi_j^i}=cA_l\ket{j}\ket{\Psi_j^i}$} to possibly
  hold true either both, $A_k$ and $A_l$, are sums of even monomials
  in the Majorana operators or both are sums of odd monomials. Hence,
  whenever $A_k$ commutes (anticommutes) with
   \unexpanded{$(\tilde{c}_{3})^{m_2}(\tilde{c}_{5})^{m_3}\ldots
  (\tilde{c}_{2n-1})^{m_n}$} so does  $A_l$ respectively. This
  procedure generates the whole Fock basis on the modes $2,\ldots ,n$
  and one obtains  \unexpanded{$A_k\ket{j}\ket{\vec{w}}=cA_l\ket{j}\ket{\vec{w}}$}
  for  \unexpanded{$j\in\{0,1\}$} and an arbitrary Fock state  \unexpanded{$\ket{\vec{w}}$} on
  modes  \unexpanded{$2,\ldots ,n$}.}. As the action of the different Kraus
operators on a basis  leads to the same states (up to a constant
proportionality factor) we have that $A_k\propto A_l$. Moreover, as
this holds true for all possible pairs of Kraus operators one obtains
from $\sum_i A_i^\dagger A_i=\one$ that  $A_i^\dagger A_i\propto\one$
and hence the map ${\cal E}_1$ corresponds to the application of a GLU
on mode $1$. Rearranging of the modes such that mode $j$ corresponds
to the first mode and using the same argumentation as before shows
that ${\cal E}_j$ is a GLU transformation on mode $j$ for all
$j$. Note that here we make use of the fact that the maps ${\cal E}_j$
commute with each other \footnote{Due to Lemma \ref{le:KrausOpParity} the
  Kraus operators can be chosen with definite parity and therefore the
  operations on the different modes commute. That is,
  \unexpanded{$(\cE_1 \tilde{\otimes}\id)(\id\tilde{\otimes}\cE_2) (\cdot)=
\sum_{k,l}
(A_{1k}\tilde{\otimes}\id)(\id\tilde{\otimes}  A_{2l})(\cdot)
(\id\tilde{\otimes}  A_{2l}^\dag)(A_{1k}^\dag\otimes\id)
=\sum_{k,l}
(\id\tilde{\otimes}  A_{2l})(A_{1k}\tilde{\otimes}\id)
(\cdot)
(A_{1k}^\dag\tilde{\otimes}\id)(\id\tilde{\otimes}  A_{2l}^\dag)
=(\id\tilde{\otimes}\cE_2)(\cE_1\tilde{\otimes}\id)(\cdot)$}. This follows from the fact that we get either no or two phase
factors of $-1$ when commuting the Kraus operators. }. Hence, we can apply the $\cE_j$ sequentially in any order. This implies that under rearranging the modes the product structure of the map ${\cal E}_{sep}$ and the Kraus operators of the local maps ${\cal E}_j$ are preserved \footnote{Note, moreover, that rearranging the modes cannot transform a pure $n$--partite $n$--mode entangled GFS into a state for which one mode factorizes.}. Hence, we have that ${\cal E}_{sep}$ is a GLU transformation.
 \end{proof}

 As mentioned above, Lemma \ref{LemmaGSEP} allows us to directly obtain the
 following corollary.
\begin{corollary}There exists no non--trivial GLOCC operation mapping a pure  $n$--mode $n$--partite FS $\ket{\psi}$ into another pure $n$--mode $n$--partite FS $\ket{\phi}$.
\end{corollary}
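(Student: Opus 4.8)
The plan is to reduce the statement about GLOCC to the already-established Lemma~\ref{LemmaGSEP} about GSEP, using the standard inclusion GLOCC $\subseteq$ GSEP. First I would recall why this inclusion holds in the Gaussian fermionic setting: a GLOCC protocol is built from rounds of local Gaussian operations and classical communication, and---exactly as in the finite-dimensional case \cite{Cirac2000}---any such protocol, when one traces out the classical register, yields a completely positive trace-preserving map whose Kraus operators are of local product form $A^{(1)}_{i_1}\tilde\otimes\cdots\tilde\otimes A^{(n)}_{i_n}$ (with definite parity, by Lemma~\ref{le:KrausOpParity}), hence a separable map; that it is moreover \emph{Gaussian} follows because each local operation in the protocol is Gaussian and the composition of Gaussian maps is Gaussian. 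The details of this are spelled out in Appendix~\ref{appGLOCC}, so I would simply invoke that appendix for the claim ``$\cE$ is GLOCC $\Rightarrow$ $\cE$ is a GSEP map whose CJ state has CM of the form $\oplus_i \Gamma_i$.''

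Given that inclusion, the corollary is immediate. Suppose $\cE$ is a GLOCC operation (necessarily trace-preserving, since a deterministic LOCC transformation is) mapping the pure fully entangled $n$-mode $n$-partite GFS $\ket{\psi}$ to the pure GFS $\ket{\phi}$. By the excerpt's convention we are only considering fully entangled states, so $\ket{\phi}$ is fully entangled as well. Then $\cE$ satisfies all the hypotheses of Lemma~\ref{LemmaGSEP}: it is a Gaussian trace-preserving separable map carrying at least one pure fully entangled state into a pure fully entangled state. Lemma~\ref{LemmaGSEP} therefore forces $\cE(\rho)=(U_1\tilde\otimes\cdots\tilde\otimes U_n)\,\rho\,(U_1^\dagger\tilde\otimes\cdots\tilde\otimes U_n^\dagger)$ for all $\rho$, i.e.\ $\cE$ is a GLU transformation. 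Hence the only GLOCC transformations between pure fully entangled $n$-mode $n$-partite GFS are the trivial (unitary) ones, which is the assertion.

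There is essentially no separate obstacle here: the real work was done in Lemma~\ref{LemmaGSEP}, and the corollary is the routine observation that GLOCC is a subclass of GSEP. The one point that deserves a sentence of care is making sure the hypotheses match---specifically that a GLOCC map realizing a deterministic transformation $\ket{\psi}\mapsto\ket{\phi}$ is genuinely a trace-preserving separable Gaussian CP map in the precise sense required by Lemma~\ref{LemmaGSEP} (as opposed to a probabilistic or post-selected one); this is exactly why the preceding discussion restricts to deterministic local transformations and why Appendix~\ref{appGLOCC} is cited for the structural characterization. I would write the proof as: ``By the results of Appendix~\ref{appGLOCC}, any GLOCC operation is in particular a Gaussian trace-preserving separable map. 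The claim then follows directly from Lemma~\ref{LemmaGSEP}.''
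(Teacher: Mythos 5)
Your proposal is correct and follows exactly the paper's route: the paper likewise obtains this corollary directly from Lemma~\ref{LemmaGSEP} via the inclusion GLOCC~$\subseteq$~GSEP established in Appendix~\ref{appGLOCC}. Your added care about the hypotheses (trace preservation and full entanglement of the target state) matches the paper's standing convention that only deterministic maps between fully entangled states are considered.
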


Let us note here that a very similar result has recently been proven for finite dimensional Hilbert spaces \cite{Gour2016, sauerwein2017}. There, it has been shown that generically, i.e., for a full--measure set of states, there exists no LOCC (even SEP) transformation, which transforms one pure (fully entangled) state into another, which is not LU--equivalent. In strong contrast to the scenario considered here, the reason for that is however not that all separable maps are particularly restricted, but that generically a state has no local symmetry. The relevance of local symmetries for local state transformation is recalled in Sec.\ \ref{sec:MES}. Note, however, that in the qudit case, the result only holds generically and that there exists a zero-measure set of states which can be transformed via LOCC, whereas for FS the result holds for any state.

\subsection{Gaussian Stochastic LOCC}
\label{sec:GSLOCC}
In the previous subsection we have shown that GLOCC transformations
among pure GFS are trivial. Thus, to quantify and qualify entanglement
properties of pure GFS we have to turn to a larger class of local
operations.  To that end, we now consider Gaussian stochastic LOCC (GSLOCC) \footnote{More precisely, we consider FLOCC operations that implement probabilistically a Gaussian operator.}.

 As mentioned before, the most general Gaussian operation consists of
 attaching an auxiliary system by applying a Gaussian unitary to it
 and the system mode and measuring the auxiliary system in the Fock
 basis. Hence, the most general operations (in the $1\times 1... \times 1$ case) are in the Jordan-Wigner representation of the form
\bea D_1 X^{k_1} \otimes D_2 X^{k_2} \otimes \ldots \otimes D_n
X^{k_n}, \label{GSLOCC}\eea where $D_i$ are diagonal (with complex
coefficients as $e^{i \alpha Z}$ is a GLU) and $k_i\in
\{0,1\}$. Note that as before the $X$ operators are possible due to
the fact that the parity of the system mode can be changed with the
auxiliary system (for total parity-preserving operations we have $k_i = 1$
for an even number of $k_i$'s). Note, furthermore, that for a single mode the Gaussian operations coincide with the fermionic operations (see Sec.\ \ref{sec:FLOCC}). Given the fact that these are the most
general Gaussian local operations we have that two states can be
transformed into each other via GSLOCC if there exists an invertible
operator of the form given in Eq.\ \eqref{GSLOCC} which transforms one
state into the other (in the Jordan-Wigner representation). In particular, we have that GSLOCC is indeed an equivalence relation.

Before studying now the possible GSLOCC classes let us introduce a standard form for FS. We consider a FS in Jordan-Wigner representation. Note again that as shown in Lemma~\ref{LemmaGaussianPure} a pure FS is Gaussian iff $ \Lambda (\ket{\Psi}\otimes \ket{\Psi})=0$. Using the standard form of FS explained below together with this condition one obtains a characterization of the GSLOCC classes. We then present the different GSLOCC classes for up to four mode GFS.

The following lemma states that by consecutive application of diagonal matrices any FS can be transformed into a normal form, which can, however, also vanish. For this we need the notion of a \textit{critical} state, i.e., a state whose single system reduced states are all proportional to the identity.
\begin{lemma} \label{LemmaNormalF} Let $\ket{\Psi}$ be a fully entangled FS. Then $\ket{\Psi}$ can constructively (by applying invertible diagonal matrices) be transformed into a unique (up to LUs) critical state, $\ket{\Psi_s}$ (up to a proportionality factor $\lambda\in \C$ which can tend to 0).
\end{lemma}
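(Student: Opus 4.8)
The plan is to invoke the general theory of normal forms (critical states) for local operations on multipartite states developed by Verstraete, Dehaene, and De~Moor and by Gour and Wallach, adapted to the fermionic setting via the Jordan--Wigner representation. First I would recall that in this representation the fully entangled FS $\ket{\Psi}$ lives in $(\C^2)^{\otimes n}$ and the GSLOCC group acting on it is generated by the local diagonal matrices $D_i$ (with the possibility of an $X$ on each mode absorbed into a fixed choice of local basis). The key observation is that the group $G=\{D_1\otimes\cdots\otimes D_n : D_i \text{ invertible diagonal}\}$ is (the complexification of) a product of tori, hence reductive, and acts linearly on the Hilbert space; the Kempf--Ness theorem then says that the closure of the orbit $G\ket{\Psi}$ contains a vector of minimal norm, and that minimal-norm vectors in a given closed orbit are unique up to the action of the \emph{compact} part of $G$, i.e., up to local phases (a subset of LUs). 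So I would set $\ket{\Psi_s}$ to be (a representative of) such a minimal-norm vector in $\overline{G\ket{\Psi}}$, up to normalization.

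The second step is to identify this minimal-norm condition with the criticality condition stated in the lemma, namely that all single-mode reduced states are proportional to the identity. This is the standard ``moment map'' computation: differentiating $\|g\ket{\Psi}\|^2$ at the minimum along the diagonal direction $\mathrm{diag}(e^{t},e^{-t})$ on mode $i$ gives exactly the condition $\bra{\Psi_s}(\proj{0}_i-\proj{1}_i)\ket{\Psi_s}=0$, i.e. $\langle n_i\rangle = 1/2$; together with Hermiticity of the reduced state (automatic) and the fact that off-diagonal entries of $\rho_i$ can be killed by the local phase freedom / are already zero because the local group only contains diagonals, one concludes $\rho_i\propto\one$ for every $i$. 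Hence a minimal-norm vector is exactly a critical state, and conversely a critical state in the orbit is automatically of minimal norm. Uniqueness up to LUs then follows from the Kempf--Ness uniqueness statement. The possibility that $\lambda\to 0$ corresponds to the case where the orbit is not closed, so the minimal-norm point lies in a lower-dimensional boundary orbit; here the transformation is achieved only in the limit, which is why the lemma explicitly allows the proportionality factor to tend to $0$.

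The third step is to make the word \emph{constructively} precise: rather than merely asserting existence of the minimal-norm point, I would exhibit an explicit iterative procedure, the ``normal-form algorithm'' of Verstraete--Dehaene--De~Moor: repeatedly pick the mode $i$ whose reduced state is farthest from maximally mixed, apply the diagonal matrix $\rho_i^{-1/2}$ (suitably normalized) on that mode, and iterate. One shows this sequence of local diagonal operations monotonically decreases (a scaled version of) the norm and converges, the limit being critical; along a convergent subsequence the accumulated product of diagonals either converges to an invertible $g\in G$ (closed orbit case) or degenerates (giving $\lambda\to 0$). I expect the main obstacle — and the point needing the most care — to be precisely this convergence/degeneration dichotomy and the claim that the limiting critical state is \emph{unique} up to LUs even when the orbit is not closed: one must argue that two critical states in the closure of the same $G$-orbit are LU-equivalent, which again is the content of Kempf--Ness but requires knowing that a closed orbit contains a \emph{unique} critical orbit of the compact group. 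A secondary subtlety is checking that the extra discrete $X^{k_i}$ freedom in \eqref{GSLOCC} does not spoil uniqueness — it at most permutes which computational-basis labelling is used and is absorbed into the ``up to LUs'' clause.
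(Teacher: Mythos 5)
Your overall route is essentially the paper's: the paper likewise reduces the lemma to the normal-form algorithm of \cite{FrankNormalfo} (whose iteration operators are $X_i^{(k)}=|\rho_i^{(k)}|^{1/(2d_i)}(\sqrt{\rho_i^{(k)}})^{-1}$) combined with Kempf--Ness uniqueness of the critical state, so both your moment-map framing and your constructive iteration are on target. However, there is one genuine flaw, and it sits exactly at the single point where the fermionic structure must enter. You need the minimal-norm (torus-critical) vector to satisfy $\rho_i\propto\one$, i.e., you must rule out off-diagonal entries of the single-mode reduced states. Your two justifications --- that they ``can be killed by the local phase freedom'' or ``are already zero because the local group only contains diagonals'' --- are both incorrect: a local phase only rotates the phase of $(\rho_i)_{01}$ and cannot remove it, and restricting the \emph{acting group} to diagonal matrices says nothing about what the \emph{marginals of the state} look like. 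Indeed, for a generic non-fermionic vector in $(\C^2)^{\otimes n}$ the minimal-norm point of the diagonal-torus orbit closure does \emph{not} have maximally mixed marginals; the moment map of the torus only controls $\langle Z_i\rangle$, not $\langle X_i\rangle$ or $\langle Y_i\rangle$.

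The correct and indispensable ingredient is the parity superselection rule: a FS satisfies $[\rho,P]=0$, its partial traces are again fermionic, and hence every single-mode reduced state commutes with the local parity $Z$ and is therefore diagonal in the Fock basis. This is precisely the observation the paper's proof hinges on --- it guarantees both that the iteration operators $(\sqrt{\rho_i^{(k)}})^{-1}$ are diagonal at every step (so the state remains a FS throughout and the accumulated transformation is by invertible diagonal matrices, as the lemma demands) and that torus-criticality, $\langle n_i\rangle=1/2$, already implies $\rho_i\propto\one$. With that fact inserted in place of your incorrect justification, your argument closes and coincides with the paper's; the remaining points you flag (the $\lambda\to 0$ boundary-orbit case and uniqueness up to the compact subgroup) are handled as you describe.
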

\begin{proof}
The lemma follows from the normal form of multipartite states
describing finite dimensional systems presented in
\cite{FrankNormalfo}. There, it has been shown that any state can be
transformed via (a sequence of) local operations into a state whose
single system reduced state is completely mixed. In the algorithm presented in
\cite{FrankNormalfo}, which achieves this transformation, the local
determinant 1 operations are $X_i^{(k)}=|\rho_i^{(k)}|^{1/(2d_i)}
(\sqrt{\rho_i^{(k)}})^{-1}$, where $d_i$ denotes the
local dimension of system $i$ and $\rho_i^{(k)}$ the reduced state of
party $i$ in the $k$--th step of the algorithm. In order to apply this
result to FS note that the reduced state of a FS has to be
fermionic and hence diagonal. Moreover, as local diagonal operators
are fermionic operations (even Gaussian), each state during the algorithm is a FS. Hence, in each step $k$ and for each party $i$, the operators $X_i^{(k)}$ are diagonal, which proves the statement.
\end{proof}
The normal form of $\ket{\Psi}$ is given by $\lambda \ket{\Psi_s}$ (where $\lambda$ can tend to 0).

Depending on the normal form one can group states in the following three
(disjoint) classes of states: (i) stable states: These are states
belonging to a SLOCC class which contains a
critical state, which then is their normal form. Due to the Kempf--Ness theorem \cite{KempfNess}, there exists only one critical state
in a SLOCC class (up to LUs). In the following we will call this state seed state and denote it by $\ket{\Psi_s}$. That the normal form of any stable state is the corresponding seed state follows also from the Kempf--Ness theorem. The GHZ--state, $\frac{1}{\sqrt{2}}(\ket{0 0 0 0} + \ket{1111})$ is an example of a critical and therefore a stable state; (ii) semi--stable states: belong to a SLOCC class without critical
state; The normal form of these states tends to a non-zero normal form. More precisely, it tends to a seed state of a different SLOCC class (\cite{FrankNormalfo}). The 4-qubit state $\ket{\psi} = a (\ket{0 0 0 0} + \ket{1111}) + \ket{0110} + \ket{0101}$ is an example of a semi-stable state, whose normal form tends to the 4-qubit GHZ-state (see \cite{FrankNormalfo}); (iii) states in the null cone: The normal form of these states vanishes. An example of such a state is the $W$--state. \\

In the Hilbert space $\C^{d}\otimes \ldots \otimes \C^{d}$ the union of stable
states is of full measure and dense \cite{GoWa13}. Hence, for almost all
states the normal form is not vanishing. Whether the same holds true for FS is
currently not clear. Despite this, we will focus now on stable FS. However,
in the more detailed investigations of few-mode states we will also consider
semi-stable states and states in the null cone.

It follows straightforwardly from the lemma above that stable FSs can be written as $ X^{m_1}D_1 \otimes X^{m_2} D_2 \ldots \otimes X^{m_n} D_n  \ket{\Psi_s}$ with $\ket{\Psi_s}$ being critical. Note, however, that any GFS can be written as $ X^{m_1}D_1 \otimes X^{m_2}D_2 \ldots \otimes X^{m_n}D_n  \ket{\Psi_f}$ where $\ket{\Psi_f}$ is some representative (not necessarily critical) of the GSLOCC class and $m_i\in\{0,1\}$. This follows from the fact that the most general Gaussian operations are of the form $ X^{m_1}D_1 \otimes X^{m_2}D_2 \ldots \otimes X^{m_n}D_n$. The subsequent corollary allows to characterize the GSLOCC classes of stable GFS.
\begin{corollary} \label{LemmaNormalGFS} Let $\ket{\Psi}$ be a stable FS and $\ket{\Psi}= D_1 \otimes D_2 \ldots \otimes D_n  \ket{\Psi_s}$. Then, $\ket{\Psi}$ is GFS iff $\ket{\Psi_s}$ is GFS.
\end{corollary}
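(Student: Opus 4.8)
The plan is to use the Gaussian‐characterization condition from Lemma~\ref{LemmaGaussianPure}, namely that a pure FS $\ket{\chi}$ is a GFS iff $\Lambda(\ket{\chi}\otimes\ket{\chi})=0$, and to track how the operator $\Lambda=\sum_{i=1}^{2n}c_i\otimes c_i$ behaves under the local diagonal transformation $D=D_1\otimes\cdots\otimes D_n$. Writing $\ket{\Psi}=D\ket{\Psi_s}$ with each $D_k$ invertible and diagonal in the computational (Jordan–Wigner) basis, the goal is to relate $\Lambda(\ket{\Psi}\otimes\ket{\Psi})$ to $\Lambda'(\ket{\Psi_s}\otimes\ket{\Psi_s})$ for a suitable conjugated operator $\Lambda'$, and then argue that $\Lambda'$ vanishes on $\ket{\Psi_s}\otimes\ket{\Psi_s}$ exactly when $\Lambda$ does on $\ket{\Psi}\otimes\ket{\Psi}$.

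The key computational step is the commutation relation between a local diagonal operator $D_k$ and the Majorana operators $c_{2k-1},c_{2k}$ on mode $k$. In the Jordan–Wigner picture, $c_{2k-1}=Z^{\otimes(k-1)}\otimes X\otimes\id$, $c_{2k}=Z^{\otimes(k-1)}\otimes Y\otimes\id$, while $D_k$ acts only on qubit $k$ and is diagonal, so $D_k=\mathrm{diag}(\alpha_k,\beta_k)$ with $\alpha_k,\beta_k\neq 0$. A direct check gives $c_{2k-1}D_k = \tilde D_k\, c_{2k-1}$ and $c_{2k}D_k=\tilde D_k\,c_{2k}$ where $\tilde D_k=\mathrm{diag}(\beta_k,\alpha_k)$ is again invertible diagonal (the $X$ or $Y$ swaps the two diagonal entries, and the $Z$ strings on modes $<k$ commute with the diagonal $D_j$'s there, while the identity factors on modes $>k$ are harmless). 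Hence $(D\otimes D)\,\Lambda = \Lambda''\,(D\otimes D)$ for the modified operator $\Lambda''=\sum_k c_k\otimes c_k$ with $D$ replaced on the relevant legs; more precisely one gets $(D^{-1}\otimes D^{-1})\,\Lambda\,(D\otimes D) = \sum_k (D^{-1}c_k D)\otimes(D^{-1}c_k D)$, and each factor $D^{-1}c_k D$ equals $c_k$ times an invertible diagonal operator $F_k$ (built from ratios $\beta_k/\alpha_k$) that commutes with $c_k$ up to reshuffling. The upshot is that $\Lambda(\ket{\Psi}\otimes\ket{\Psi})=0$ becomes $(D\otimes D)\bigl[\sum_k (F_k c_k)\otimes(F_k c_k)\bigr](\ket{\Psi_s}\otimes\ket{\Psi_s})=0$, and since $D\otimes D$ is invertible, this is equivalent to $\sum_k (F_k c_k)\otimes(F_k c_k)\,(\ket{\Psi_s}\otimes\ket{\Psi_s})=0$.

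What remains is to convert the ``deformed'' condition $\sum_k (F_k c_k)\otimes(F_k c_k)\,(\ket{\Psi_s}\otimes\ket{\Psi_s})=0$ back into the plain condition $\Lambda(\ket{\Psi_s}\otimes\ket{\Psi_s})=0$. Here I would use that $F_k$ acts locally and diagonally, and that $\Lambda$ can be re-grouped mode by mode: $\Lambda=\sum_k (c_{2k-1}\otimes c_{2k-1}+c_{2k}\otimes c_{2k})$, where the two Majorana contributions of mode $k$ come with the \emph{same} diagonal prefactor $F_k$ (because, as noted above, conjugation by $D_k$ multiplies both $c_{2k-1}$ and $c_{2k}$ by the same swapped-diagonal factor). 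Thus $\sum_k (F_k\otimes F_k)(c_{2k-1}\otimes c_{2k-1}+c_{2k}\otimes c_{2k})(\ket{\Psi_s}\otimes\ket{\Psi_s})=0$, i.e. $(F\otimes F)\,\Lambda\,(\ket{\Psi_s}\otimes\ket{\Psi_s})=0$ with $F=F_1\otimes\cdots\otimes F_n$ invertible, whence $\Lambda(\ket{\Psi_s}\otimes\ket{\Psi_s})=0$. Running the argument in reverse gives the converse, and by Lemma~\ref{LemmaGaussianPure} this is precisely the statement that $\ket{\Psi}$ is a GFS iff $\ket{\Psi_s}$ is.

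The main obstacle I anticipate is bookkeeping the Jordan–Wigner $Z$-strings carefully enough to confirm that conjugating $c_{2k-1}$ and $c_{2k}$ by the full product $D=\bigotimes_j D_j$ really produces the \emph{same} invertible diagonal prefactor $F_k$ on both Majorana operators of mode $k$, with no residual off-diagonal or sign contributions that would spoil the factorization of $\Lambda$ into mode blocks. One should verify that the $Z$-strings on modes $1,\dots,k-1$ commute past the diagonal $D_1,\dots,D_{k-1}$ (they do, since $Z$ is diagonal), that $X$ and $Y$ each conjugate $D_k=\mathrm{diag}(\alpha_k,\beta_k)$ to $\mathrm{diag}(\beta_k,\alpha_k)$, and that the trivial factors on modes $k+1,\dots,n$ contribute nothing. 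Once this local computation is nailed down, the global statement follows immediately by invertibility of $D\otimes D$ and $F\otimes F$; no further structural input about $\ket{\Psi_s}$ being critical is needed — the corollary is really a statement about invariance of the Gaussianity condition under invertible local diagonal (hence GSLOCC) maps.
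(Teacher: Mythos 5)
Your overall strategy is the paper's: reduce Gaussianity to the condition $\Lambda(\ket{\chi}\otimes\ket{\chi})=0$ of Lemma~\ref{LemmaGaussianPure} and use invertibility of $D\otimes D$ to transport it between $\ket{\Psi}$ and $\ket{\Psi_s}$. The local commutation computation $c_kD=D^{(k)}c_k$ (with the entries of $D_{j(k)}$ swapped) is also correct. The flaw is the step ``$\sum_k (F_k\otimes F_k)(c_{2k-1}\otimes c_{2k-1}+c_{2k}\otimes c_{2k})(\ket{\Psi_s}\otimes\ket{\Psi_s})=0$, i.e.\ $(F\otimes F)\Lambda(\ket{\Psi_s}\otimes\ket{\Psi_s})=0$ with $F=F_1\otimes\cdots\otimes F_n$.'' This regrouping is not an operator identity: the prefactor $F_k\otimes F_k$ in the sum depends on $k$ and sits only in front of the mode-$k$ block, so $\sum_k(F_k\otimes F_k)\Lambda_k\neq(F\otimes F)\sum_k\Lambda_k$ in general ($F$ would have to act with all of $F_1,\dots,F_n$ on every block simultaneously). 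As written, you have not pulled out a single invertible factor, so you cannot yet conclude $\Lambda(\ket{\Psi_s}\otimes\ket{\Psi_s})=0$.

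The gap is easily closed, and in two ways. (i) Finish your own computation: with $F_k=\mathrm{diag}(\beta_k/\alpha_k,\alpha_k/\beta_k)$ on slot $k$, the block $X_k\otimes X_k+Y_k\otimes Y_k$ is supported on $\kb{01}{10}+\kb{10}{01}$, where the two reciprocal diagonal entries multiply to $1$; hence $(F_k\otimes F_k)(c_{2k-1}\otimes c_{2k-1}+c_{2k}\otimes c_{2k})=c_{2k-1}\otimes c_{2k-1}+c_{2k}\otimes c_{2k}$ \emph{exactly}, so your deformed operator equals $\Lambda$ and no factoring is needed. (ii) More in the spirit of the paper's one-line proof: each $D_k\propto e^{i\chi_k\tilde{c}_{2k-1}\tilde{c}_{2k}}$ is an even Gaussian operator, so $[\Lambda,(D_1\otimes\cdots\otimes D_n)^{\otimes 2}]=0$ by Eq.~(\ref{GaussOp}), giving $\Lambda(\ket{\Psi}\otimes\ket{\Psi})=(D\otimes D)\Lambda(\ket{\Psi_s}\otimes\ket{\Psi_s})$ directly; invertibility of $D\otimes D$ then yields both directions at once (the paper additionally notes the ``if'' direction follows simply because diagonal matrices are Gaussian operations). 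Your closing remark that criticality of $\ket{\Psi_s}$ is not used is correct and is also observed in the paper.
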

\begin{proof}
The ``if''--part follows from the fact that local diagonal matrices are Gaussian operations. The ``only if''--part can be seen as follows. Due to Lemma \ref{LemmaGaussianPure} we have  that $\ket{\Psi}$ is GFS iff $\Lambda (\ket{\Psi}\otimes\ket{\Psi})=0$, which is equivalent to $\Lambda (\ket{\Psi_s}\otimes\ket{\Psi_s})=0$. Hence, $\ket{\Psi}$ is a GFS iff $\ket{\Psi_s}$ is.
\end{proof}

An interesting example of a critical GFS state is the $n$-mode state $\ket{\Psi}=H^{\otimes n}\ket{GHZ}$ [with
$\ket{GHZ} = 1/\sqrt{2}(\ket{00...0}+\ket{11....1})$]. To see that $\ket{\Psi}$ is a GFS, note that
$\ket{\Psi}\propto\sum_{{\bf k}\in\{0,1\}^n} (1+(-1)^{h({\bf k})})\ket{{\bf k}}$ with $h({\bf k})$ being the Hamming weight of the bitstring ${\bf k}$. Therefore, $\ket{\Psi}$ is a FS. That
$\Lambda (\ket{\Psi}\otimes\ket{\Psi})=0$ can be easily verified by direct
computation. The
fact that the state is critical follows from the criticality of the
GHZ state. Note that the GHZ state itself is only a FS for even $n$. Moreover,
the fermionic swap applied to any two modes of $\ket{\Psi}$ (or of any critical state) is also critical. As there exists only one critical state in a
SLOCC class, this state is either LU-equivalent to $\ket{\Psi}$ or in a
different SLOCC class \footnote{A simple example is given by the $4$-mode GHZ,
i.e.,  \unexpanded{$\ket{\Psi}=H^{\otimes 4}\ket{GHZ_4}$} and the not LU-equivalent 
state that is obtained after applying a fermionic swap on the first two
modes.}. 

Let us now explicitly compute the GSLOCC classes of up to 4-mode GFS.

\subsubsection{$1\times 1$ case}
We start with the simplest case of pure 2-mode 2-partite systems.
First note that the spin representation of any FS of two modes is either of the form $\ket{\Psi_1}=\alpha \ket{00}+\beta \ket{11}$ or of the form $(\one \otimes X) \ket{\Psi_1}=\alpha \ket{01}+\beta \ket{10}$. As $\ket{\Psi_1}\propto D\otimes \one \ket{\Phi_+}$, where $\ket{\Phi^+} = 1/\sqrt{2}(\ket{00}+\ket{11}$ denotes the critical seed state of two qubits and $D=\mathrm{diag}(\alpha,\beta)$ there is only one entangled GSLOCC class.  It is easy to see that these states are all Gaussian, as $\Lambda (\ket{\Phi^+} \otimes \ket{\Phi^+})=0$ (see Lemma\ \ref{LemmaGaussianPure} and Corollary \ref{LemmaNormalGFS}).

\subsubsection{$1\times 1 \times 1$ case}

For 3-mode GFS we denote by $\ket{GHZ}_3$ the Gaussian fermionic GHZ state, i.e., $\ket{GHZ}_3=H^{\otimes 3}[1/\sqrt{2}(\ket{000}+\ket{111})]= 1/2 (\ket{000}+\ket{011}+\ket{101}+\ket{110}).$ Note that we consider from now on only even parity FS, as the odd ones are simply given by applying $X^{\otimes 3}$.
We write an arbitrary pure 3-mode (not normalized) FS as $\ket{\Psi(a_1,a_2,a_3,a_4)}=a_1\ket{000}+a_2\ket{011}+a_3\ket{101}+a_4\ket{110}$, $a_i \in \C \ \forall i$. Applying GLUs ($e^{\alpha_i Z}$) and choosing the global phase appropriately allows to chose all the parameters $a_i$ to be real and non-negative. Using Lemma \ref{LemmaGaussianPure} it can be easily seen that they are all Gaussian. Then, the following lemma characterizes all 3-mode GSLOCC classes.
\begin{lemma} There are two 3--mode entangled GSLOCC classes, the GHZ and the W class. The state $\ket{\Psi(a_1,a_2,a_3,a_4)}$ belongs to the GHZ class iff $a_i\neq 0$ $\forall i$. It belongs to the W--class iff there exists exactly one $i$ such that $a_i=0$. Moreover, the state is biseparable iff exactly two $a_i=0$ (else it is separable). \end{lemma}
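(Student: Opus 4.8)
The plan is to work in the Jordan-Wigner (spin) representation throughout, since all relevant operations are of the form $D_1\otimes D_2\otimes D_3$ (times possible $X$'s, already factored out) and GSLOCC equivalence reduces to the existence of an invertible local diagonal operator connecting two states $\ket{\Psi(a_1,a_2,a_3,a_4)}$ and $\ket{\Psi(a_1',a_2',a_3',a_4')}$. First I would record the action of a diagonal local operator $D_1\otimes D_2\otimes D_3$ with $D_i=\mathrm{diag}(x_i,y_i)$ on the four basis states appearing in $\ket{\Psi}$: $\ket{000}\mapsto x_1x_2x_3$, $\ket{011}\mapsto x_1y_2y_3$, $\ket{101}\mapsto y_1x_2y_3$, $\ket{110}\mapsto y_1y_2x_3$. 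The key algebraic observation is that the product of all four coefficients transforms as $a_1a_2a_3a_4\mapsto (x_1y_1)(x_2y_2)(x_3y_3)\,a_1a_2a_3a_4$, and $(x_iy_i)=\det D_i\neq 0$ for invertible $D_i$. Hence the vanishing or non-vanishing of the product $a_1a_2a_3a_4$ — and more refined data, such as which $a_i$ are zero — is a GSLOCC invariant, which immediately separates the candidate classes.

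Next I would establish the ``genericity'' direction: if all $a_i\neq 0$, then $\ket{\Psi(a_1,a_2,a_3,a_4)}$ is GSLOCC-equivalent to $\ket{GHZ}_3=\ket{\Psi(1,1,1,1)}$ (up to normalization). This amounts to solving $x_1x_2x_3=a_1$, $x_1y_2y_3=a_2$, $y_1x_2y_3=a_3$, $y_1y_2x_3=a_4$ for nonzero $x_i,y_i$; one can do this explicitly, e.g.\ by fixing $x_1=x_2=x_3=1$ up to an overall scale and setting $y_3=a_2$, $y_2 = a_4/?$ — more cleanly, note the four equations are multiplicatively independent enough that taking logarithms gives a solvable linear system once one checks the relevant $3\times 6$ (parameters) system has full rank, which it does because the monomials $x_1x_2x_3, x_1y_2y_3, y_1x_2y_3, y_1y_2x_3$ are distinct. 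Since $\ket{GHZ}_3$ is critical (it is $H^{\otimes 3}\ket{GHZ}$, shown earlier to be a critical GFS), this identifies the GHZ class as the stable generic class. For the ``$W$'' case, suppose exactly one $a_i=0$, say $a_1=0$ (the other cases follow by relabeling / using the symmetry of the state under simultaneous mode permutation, which must be checked to act transitively enough on the index set — here a fermionic swap permutes the roles, or one simply notes all three ``one-zero'' patterns with $a_1=0$, or with $a_2=0$ etc., are each handled directly). Then $\ket{\Psi(0,a_2,a_3,a_4)}=a_2\ket{011}+a_3\ket{101}+a_4\ket{110}$ with all of $a_2,a_3,a_4\neq 0$; I would show this is local-diagonal-equivalent to $\ket{011}+\ket{101}+\ket{110}$, which (after a GLU, i.e.\ up to local unitaries, e.g.\ $X^{\otimes 3}$ and relabeling) is LU-equivalent to the standard $W$-state $\ket{W}=\ket{100}+\ket{010}+\ket{001}$; one must also confirm this state is genuinely in the $W$-SLOCC class and not the GHZ class (standard: its reduced single-mode states have rank 2 but the state has no GHZ-type tensor-rank-2 decomposition — or simply invoke the known qubit SLOCC classification, since GSLOCC-equivalence implies SLOCC-equivalence and the $W$ and GHZ qubit classes are distinct and exhaust the genuinely tripartite ones). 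Finally, for the biseparability/separability count: if exactly two $a_i=0$, I would check in each case that the state factorizes as (two-mode state)$\tilde\otimes$(one-mode state) but the two-mode part is entangled, e.g.\ $\ket{\Psi(a_1,a_2,0,0)}=a_1\ket{000}+a_2\ket{011}$ is a product of mode $1$ in $\ket{0}$ with modes $2,3$ in $a_1\ket{00}+a_2\ket{11}$ (entangled when both nonzero); and if three or four $a_i$ vanish the state is fully product. Here one must be slightly careful that ``factorizes'' respects the fermionic tensor structure $\tilde\otimes$ — but for these Fock-supported states the JW ordering issue does not introduce signs, so the factorization is genuine.

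The main obstacle I expect is not the GHZ/generic direction (which is an explicit invertible-diagonal construction) but rather rigorously \emph{separating} the $W$ class from the GHZ class within the Gaussian-allowed operations and from the biseparable states, and handling the permutation/relabeling bookkeeping cleanly. The cleanest route is: (a) the product invariant $a_1a_2a_3a_4$ (nonzero iff GHZ class, by the determinant argument above) kills equivalence between the generic and one-zero cases; (b) a similar refined invariant — which \emph{pair} of coefficients is nonzero — or a direct entanglement-across-a-cut argument distinguishes the one-zero (genuinely tripartite, since e.g.\ $\ket{011}+\ket{101}+\ket{110}$ is entangled across every bipartition) from the two-zero (biseparable) cases; and (c) that the one-zero states are \emph{not} GHZ-class because GSLOCC $\subseteq$ SLOCC and they are SLOCC-$W$, together with the fact (already available in the excerpt's framework via the Kempf--Ness / normal-form discussion) that the $W$-state lies in the null cone whereas the GHZ state is critical/stable — so they cannot be SLOCC-equivalent. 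Assembling these invariants into an iff statement for each of the four regimes ($a_i$ all nonzero / exactly one zero / exactly two zero / more) then completes the proof.
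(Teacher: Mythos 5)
Your proposal is correct and follows essentially the same route as the paper: write the state as $X^{k_1}D_1\otimes X^{k_2}D_2\otimes X^{k_3}D_3$ applied to the GHZ or W seed (or to $\ket{0}\ket{\Phi^+}$ in the biseparable case), with the class separation inherited from the standard three-qubit SLOCC classification since GSLOCC-equivalence implies SLOCC-equivalence. The extra invariant you introduce is a nice way to make the separation explicit (though note the product $a_1a_2a_3a_4$ actually picks up $(\det D_1)^2(\det D_2)^2(\det D_3)^2$, since each $x_i,y_i$ appears twice; this does not affect the conclusion).
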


\begin{proof} First consider the case where $a_i\neq 0 \ \forall i$. It can be
  easily seen that the state can be written as $D_1\otimes D_2 \otimes D_3
  \ket{GHZ}_3$ with $D_i$ invertible and hence it belongs to the GHZ
  class. Let us denote by $\ket{W}_3=1/ \sqrt{3}(\ket{011}+\ket{101}+\ket{110})$ the
  W--state. Then it is easy to see that any state
  $\ket{\Psi(a_1,a_2,a_3,a_4)}$ with exactly one $i$ such that $a_i=0$ can be
  written as $ X^{k_1}D_1\otimes X^{k_2} D_2 \otimes X^{k_3} \ket{W}_3$, where
  $k_1+k_2+k_3=0 \mod 2$ and $D_i$ diagonal and invertible. If two
  coefficients vanish the state can be written as $X^{k_1}\otimes X^{k_2}
  D\otimes X^{k_3} \ket{0}\ket{\Phi^+}$ (up to particle permutation), where
  $k_1+k_2+k_3=0 \mod 2$ and $D$ is invertible, which proves the
  statement.
\end{proof}

Note that this implies that a tripartite entangled 3--mode GFS is of the form
$D_1\otimes D_2 \otimes D_3 \ket{\Psi_f}_3$ (up to GLUs), where
$\ket{\Psi_f}_3$ is either the GHZ-- or the W--state and all $D_i$'s are
invertible. Hence, there exist, as in the qubit case, two fully entangled GSLOCC
classes. The standard forms of the corresponding CM are given in Sec.\
\ref{sec:standardform}. To give an example for the GHZ-state with $a_i= 1/2\
\forall i$ the standard form is given in Eq.~(\ref{eq:Sgamma}). A similar
standard form for the W-state ($a_1=0, a_2=a_3=a_4=1/\sqrt{3}$) can be
determined. However, it is slightly different, as in this case $\gamma_{12},
\gamma_{13}, \gamma_{23} \in \mathcal{SO}(2, \mathbb{R})$ in
Eq.~(\ref{eq:Sgamma}).

\subsubsection{$1\times 1 \times 1 \times 1$ case}

For 4 modes it is no longer true that any pure FS is a GFS. In fact, from Lemma\ \ref{LemmaGaussianPure} one easily derives the following observation.
\begin{observation} \label{Obs4modeGFS} A pure 4-mode FS, $\ket{\Psi}$ (in Jordan-Wigner representation) is Gaussian iff
\bea \bra{\Psi^\ast} (X\otimes Y \otimes X \otimes Y )\ket{\Psi}=0,\eea
where $X,Y$ denote the Pauli operators. \end{observation}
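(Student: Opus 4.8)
The plan is to start from the characterization of Gaussian pure states given in Lemma~\ref{LemmaGaussianPure}, namely that a pure FS $\ket{\Psi}$ is Gaussian iff $\Lambda(\ket{\Psi}\otimes\ket{\Psi})=0$, where $\Lambda=\sum_{i=1}^{2n}c_i\otimes c_i$, and to specialize it to $n=4$. First I would write out the eight Majorana operators in the Jordan-Wigner representation from Eq.~(\ref{eq:JWT1}): $c_1=X\one\one\one$, $c_2=Y\one\one\one$, $c_3=ZX\one\one$, $c_4=ZY\one\one$, $c_5=ZZX\one$, $c_6=ZZY\one$, $c_7=ZZZX$, $c_8=ZZZY$. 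Then $\Lambda=\sum_{k=1}^{8}c_k\otimes c_k$ is an explicit operator on two copies of four qubits, and the condition $\Lambda(\ket{\Psi}\otimes\ket{\Psi})=0$ is a system of quadratic equations in the amplitudes of $\ket{\Psi}$.

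The key simplification is to use parity. Since $\ket{\Psi}$ is a fermionic state, it lies in a single parity eigenspace (say even, WLOG, the odd case being $X^{\otimes4}\ket{\Psi}$), so it is supported on the eight even Fock basis vectors. I would observe that the $c_k$ pair up: $c_{2j-1}$ and $c_{2j}$ differ only by replacing $X$ by $Y$ on site $j$, and $c_{2j-1}c_{2j}$ is (up to a phase) the local $Z$-type parity operator on the first $j$ sites. Grouping $c_{2j-1}\otimes c_{2j-1}+c_{2j}\otimes c_{2j}$ and using the identity $X\otimes X+Y\otimes Y=2(\ket{01}\bra{10}+\ket{10}\bra{01})$ on the relevant qubit pair, one finds that $\Lambda$ acting on the even-parity $\otimes$ even-parity sector collapses dramatically — most terms either annihilate the sector or cancel by antisymmetry, and what survives is (a multiple of) $X\otimes Y\otimes X\otimes Y$ acting between the two copies together with a complex-conjugation/transpose coming from the two factors being the same state. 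Concretely, I expect $\Lambda\,(\ket{\Psi}\otimes\ket{\Psi})=0$ to reduce, after projecting onto the physical (even) sector and contracting one copy against the other, to the single scalar equation $\bra{\Psi^{*}}(X\otimes Y\otimes X\otimes Y)\ket{\Psi}=0$, where $\ket{\Psi^*}$ is the componentwise complex conjugate (this is where the ``$\Psi^\ast$'' enters — one copy of $\ket\Psi$ gets transposed into a bra).

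The cleanest route is probably to not expand all $64$ two-copy basis terms but to argue structurally: write $\ket{\Psi}=\sum_{\mathbf{i}\ \mathrm{even}}\alpha_{\mathbf i}\ket{\mathbf i}$, note that $\Lambda\ket{\Psi}\otimes\ket{\Psi}$ is automatically in the even$\otimes$even sector minus... actually each $c_k$ is odd, so $c_k\otimes c_k\ket{\Psi}\otimes\ket{\Psi}$ is again even$\otimes$even; then use the known fact (stated in the excerpt, from \cite{Brav05}) that for any even operator the commutant condition is equivalent to the eigenvalue condition, together with a dimension count showing the space of even pure states satisfying it is cut out by exactly one equation. Alternatively, and more honestly, I would just compute: restrict to even $\ket\Psi$, compute $\langle\mathbf j|\otimes\langle\mathbf k|\,\Lambda\,|\Psi\rangle|\Psi\rangle$ and check it vanishes identically except that the whole vector $\Lambda|\Psi\rangle|\Psi\rangle$ is proportional to a fixed vector times the scalar $\sum_{\mathbf i,\mathbf i'}\alpha_{\mathbf i}\alpha_{\mathbf i'}(\ldots)$, which one recognizes as $\bra{\Psi^*}X\otimes Y\otimes X\otimes Y\ket{\Psi}$.

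The main obstacle I anticipate is bookkeeping the signs: the $Z$-strings in $c_5,\dots,c_8$ and the anticommutation-induced phases when two $c_k\otimes c_k$ terms are combined produce a proliferation of $\pm1$'s, and one must be careful that the ``leftover'' operator is precisely $X\otimes Y\otimes X\otimes Y$ (with that particular alternating pattern) rather than some other weight-4 Pauli, and that the relevant bilinear form is $\bra{\Psi^*}\cdot\ket{\Psi}$ and not $\bra{\Psi}\cdot\ket{\Psi}$. I would double-check the final equation on a known example — e.g. the Gaussian GHZ state $H^{\otimes4}\ket{GHZ_4}$, which by Corollary~\ref{LemmaNormalGFS} and the earlier discussion is Gaussian and should satisfy $\bra{\Psi^*}X Y X Y\ket{\Psi}=0$, and a known \emph{non}-Gaussian four-mode pure FS (mentioned to exist in the excerpt), which should violate it — to make sure the pattern of Paulis and the conjugation are correct.
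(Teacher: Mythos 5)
Your proposal is correct and follows exactly the route the paper intends: the paper gives no explicit computation but states that the observation "is easily derived" from Lemma~\ref{LemmaGaussianPure}, i.e., by specializing $\Lambda(\ket{\Psi}\otimes\ket{\Psi})=0$ to four modes in the Jordan--Wigner representation, restricting to a fixed parity sector, and checking that every component of $\Lambda(\ket{\Psi}\otimes\ket{\Psi})$ is either identically zero or proportional to the single quadric $\bra{\Psi^\ast}(X\otimes Y\otimes X\otimes Y)\ket{\Psi}$ --- which is precisely what your computation yields (with the even-parity amplitudes it reads $a_{0000}a_{1111}+a_{1010}a_{0101}-a_{0011}a_{1100}-a_{0110}a_{1001}=0$, consistent with the paper's condition $ab+cd=0$ on the $G_{abcd}$ seed states). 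Your sanity checks and your attention to the $\bra{\Psi^\ast}$ versus $\bra{\Psi}$ distinction are exactly the right places to be careful.
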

This condition, which resembles the SL-invariant polynomials \cite{FrankNormalfo} defined for qubit-states, is in fact equivalent to the condition that all reduced 3-mode states of $\ket{\Psi}$ (taking the partial trace of one party) are Gaussian.
An arbitrary 4-mode (even parity) FS is given by $\ket{\Psi} = a_1 \ket{0000} + a_2 \ket{0011} + a_3 \ket{0110} + a_4 \ket{1100} + a_5 \ket{1010} + a_6 \ket{0101} + a_7 \ket{1001} + a_8 \ket{1111}$. It can be easily seen (analogously to the 3-mode case) that any such state can be written as in the following lemma \footnote{Note that we consider again only even states, as the same holds for odd 4-mode FS.}.

\begin{lemma} \label{lemma4Modes} A pure 4-mode FS, $\ket{\Psi}$ can be written as
 \bea \label{4-modeGFS} \ket{\Psi}= X^{k_1}D_1\otimes X^{k_2}D_2 \otimes X^{k_3}D_3 \otimes X^{k_4}D_4 \ket{\Psi_f},\eea
 with $\ket{\Psi_f}$ an appropriate representative of each SLOCC class, $k_i\in\{0,1\}$ and $k_1+k_2+k_3+k_4=0 \mod 2$.
Moreover, the state is GFS iff the FS $\ket{\Psi_f}$ is.
\end{lemma}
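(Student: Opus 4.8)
The plan is to mirror the structure of the 3-mode proof (the lemma preceding Lemma~\ref{lemma4Modes}): first establish the normal-form decomposition in Eq.~\eqref{4-modeGFS} for an arbitrary pure fermionic state, and then use Lemma~\ref{LemmaGaussianPure} together with the fact that diagonal operators are Gaussian to transfer the Gaussianity question to the representative $\ket{\Psi_f}$. For the first part, I would start from the general even-parity 4-mode FS $\ket{\Psi}=\sum_{i=1}^8 a_i\ket{\dots}$ written above. Using GLUs of the form $e^{i\alpha_i Z}$ on each mode I can adjust phases, and using the parity-flip $X^{k_i}$ available as a GLU (recall from Sec.~\ref{GaussOper} that adjoining an ancilla makes $X^{k_i}$ with $\sum_i k_i$ even a valid operation) I can move into whichever parity sector is convenient. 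The key structural input is the SLOCC classification of 4-qubit states \cite{FrankSLOCC4,GoWa13}: in each SLOCC class fix once and for all a representative $\ket{\Psi_f}$; then any state in that class is $g_1\otimes g_2\otimes g_3\otimes g_4\ket{\Psi_f}$ with $g_i\in GL(2,\C)$. The point is that for a \emph{fermionic} (even-parity) state the $g_i$ can be taken of the restricted form $X^{k_i}D_i$ with $D_i$ diagonal: decompose each $g_i$ via a suitable Gaussian/LU normal form into a diagonal part times possibly an $X$, and observe that the leftover unitary rotations on each mode must preserve the even-parity structure of $\ket{\Psi}$, which (as in the 2- and 3-mode discussion) forces them to be diagonal up to an $X$. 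The parity constraint $k_1+k_2+k_3+k_4=0\bmod 2$ then just encodes that $\ket{\Psi}$ and $\ket{\Psi_f}$ lie in the same (even) parity sector, or can be brought there.

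For the second part I would invoke Lemma~\ref{LemmaGaussianPure}: $\ket{\Psi}$ is a GFS iff $\Lambda(\ket{\Psi}\otimes\ket{\Psi})=0$. Writing $\ket{\Psi}=G\ket{\Psi_f}$ with $G=X^{k_1}D_1\otimes\cdots\otimes X^{k_4}D_4$, I want to show $\Lambda(\ket{\Psi}\otimes\ket{\Psi})=0 \iff \Lambda(\ket{\Psi_f}\otimes\ket{\Psi_f})=0$. This is exactly the computation behind Corollary~\ref{LemmaNormalGFS}, where it was shown for products of diagonal matrices that $\Lambda(\ket{\Psi}\otimes\ket{\Psi})=0$ is equivalent to $\Lambda(\ket{\Psi_s}\otimes\ket{\Psi_s})=0$; here I additionally need to handle the $X^{k_i}$ factors. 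The clean way is to note that conjugating $\Lambda=\sum_i c_i\otimes c_i$ by $X^{k_1}\otimes\cdots\otimes X^{k_4}$ (on a single copy, i.e.\ by $(X^{k_1}\otimes\cdots)^{\otimes 2}$ on the doubled space) permutes and sign-flips the Majorana operators $c_i$ in a way that leaves the \emph{span} $\{c_i\}$ invariant, hence leaves $\Lambda$ invariant up to the action of an element of $\O(8)$ on the $c_i$; since $\Lambda$ is manifestly invariant under the diagonal action $c_i\mapsto\sum_j O_{ij}c_j$ with the same $O$ on both factors, the $X^{k_i}$ part does not affect the condition. Then the diagonal-matrix argument of Corollary~\ref{LemmaNormalGFS} — invertibility of $D_i$, so $D_i\otimes D_i$ applied to both copies cannot kill a nonzero vector — closes the equivalence. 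Since local diagonal (and $X$-type) operators are Gaussian operations, the ``if'' direction is immediate and only the ``only if'' requires this argument.

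The main obstacle I expect is the first part: verifying that the $GL(2,\C)$ freedom in the SLOCC decomposition can genuinely be reduced to the $X^{k_i}D_i$ form for \emph{every} 4-mode FS, including the non-generic SLOCC classes (the 4-qubit SLOCC classification has several families with parameters, plus degenerate strata). In the 2- and 3-mode cases this was transparent because there were only one or two entangled classes and one could exhibit the decomposition explicitly; for 4 modes one should either argue class-by-class using the Verstraete et al.\ \cite{FrankSLOCC4} normal forms, or give a uniform argument: the even-parity condition $P_e\ket{\Psi}=\ket{\Psi}$ is preserved under the transformation, and the only single-mode invertible operators that map the even/odd structure appropriately (i.e.\ that are fermionic operations in the sense of Sec.~\ref{GaussOper}) are of the form $X^{k}D$ with $D$ diagonal and invertible — this is already implicit in Eq.~\eqref{GSLOCC}, which states that the most general single-mode Gaussian, hence fermionic, operation has exactly this form. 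I would cite Eq.~\eqref{GSLOCC} to make this step short. A secondary subtlety is the parity bookkeeping for the $k_i$'s: one must check that the representatives $\ket{\Psi_f}$ can be chosen with even parity and that relating $\ket{\Psi}$ (even) to $\ket{\Psi_f}$ (even) indeed forces $\sum_i k_i$ even, which is immediate since each $X$ flips parity and an odd number of flips would produce an odd-parity state.
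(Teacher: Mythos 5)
Your treatment of the second claim (Gaussianity of $\ket{\Psi}$ iff Gaussianity of $\ket{\Psi_f}$) is correct and coincides with the paper's route: both reduce the question to Corollary~\ref{LemmaNormalGFS} via Lemma~\ref{LemmaGaussianPure}, and your observation that the $X^{k_i}$ factors act as a signed permutation, hence an orthogonal transformation, on the Majorana operators and therefore leave $\Lambda=\sum_i c_i\otimes c_i$ invariant is a correct (and welcome) way of extending the diagonal-matrix argument to the full operators $X^{k_i}D_i$, for invertible $D_i$.

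The decomposition part, however, contains a genuine gap, and the ``uniform argument'' you lean on is false. It is not true that local invertible operators relating two even-parity FS must individually be of the form $X^kD$: parity preservation constrains only the product, not the factors. Concretely, $H^{\otimes 4}$ maps the even FS $\ket{0000}+\ket{1111}$ to the even FS $\propto\sum_{h({\bf k})\,\mathrm{even}}\ket{{\bf k}}$, yet $H$ is neither diagonal nor $X$ times diagonal; citing Eq.~\eqref{GSLOCC} is circular, since that equation characterizes fermionic operations, while the SLOCC operators $g_i$ are a priori arbitrary elements of $GL(2,\C)$. Worse, your plan of fixing ``once and for all a representative $\ket{\Psi_f}$'' per SLOCC class cannot work even in principle: the two states above are LU-equivalent, hence in the same SLOCC class, both are even FS, and both are of the form of Eq.~\eqref{4qubitseed} for different parameter values (one satisfies $ab+cd=0$, the other does not), yet they cannot be connected by any $\bigotimes_i X^{k_i}D_i$ with invertible $D_i$, since one has two and the other eight nonvanishing computational-basis components. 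This is why the lemma speaks of an \emph{appropriate} representative and why the surrounding discussion is organized by GSLOCC classes, which are strictly finer than SLOCC classes. The argument the paper has in mind is the explicit one of the 3-mode lemma: expand the FS in the 8-dimensional even-parity subspace and match coefficients against $\bigotimes_i X^{k_i}D_i$ applied to the continuous families of representatives in Eqs.~\eqref{4qubitseed}--\eqref{4qubitnullcone}, with the parameters of the representative determined by the state. The class-by-class verification you flag as ``the main obstacle'' is indeed what is required, and the shortcut you propose does not replace it.
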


The last conclusion follows directly from Corollary\ \ref{LemmaNormalGFS} as in the proof it has not been used that $\ket{\Psi_s}$ is critical and the local $X^{k_i} D_i$ are Gaussian operations. Note that, as in the 3-mode case, some GSLOCC classes contain a critical state, whereas others do not. Moreover, in the 4-mode case there also exist semi-stable states, i.e., states that tend to a non-vanishing normal form, even though they are not stable. Let us state the different GSLOCC classes now in more detail based on the results on 4-qubit SLOCC classes in \cite{FrankSLOCC4}.
\begin{itemize}
\item GSLOCC classes containing a critical state: \\
These are states from the SLOCC classes $G_{abcd}$ (\cite{FrankSLOCC4}), with representatives \bea \label{4qubitseed} \hspace*{-0.4cm} \ket{\Psi_f}=a\ket{\Phi^+}^{\otimes 2}\!+\!b\ket{\Phi^-}^{\otimes 2}\!+\!c\ket{\Psi^+}^{\otimes 2}\!+\!d\ket{\Psi^-}^{\otimes 2}.\eea
Note that the states $\ket{\Psi_f}$ are critical. Due to Observation \ref{Obs4modeGFS} we can easily see that the FS in Eq.~\eqref{4qubitseed} are Gaussian iff $a b + c d = 0$. Hence, either two or three of the parameters of $\ket{\Psi_f}$ can vanish, according to this necessary and sufficient condition. Whereas the states where two of the four parameters are equal to zero are still 4-partite entangled, states with three parameters being equal to zero are biseparable states.
\item GSLOCC classes containing semi-stable states: \\
As mentioned above there exist classes that contain semi-stable states (see \cite{Sawicki13} for results on semi-stable 4-qubit states). The SLOCC classes containing 4-mode entangled GFS are $L_{a b c_2}$ and $L_{a_2 b_2}$ (see \cite{FrankSLOCC4}) with representatives
 \bea \label{4qubitsemistable} \hspace*{-0.4cm} \ket{\Psi_f(a b c_2)}&\!=\!&\frac{a\!+\!b}{2} (\ket{0000} \!+\!  \ket{1111})\!+\!\frac{a\!-\!b}{2} (\ket{0011} \!+\! \ket{1100}) \! +\nonumber \\ & &c (\ket{0101} + \ket{1010})+ \ket{0110}, \nonumber \\
 \ket{\Psi_f(a_2 b_2)}&=&a (\ket{0000} + \ket{1111})+b (\ket{0101} +\ket{1010}) \nonumber \\ && +\ket{0110}+\ket{0011}.
 \eea
Note that neither $\ket{\Psi_f(a b c_2)}$ nor  $\ket{\Psi_f(a_2 b_2)}$ are critical. Using Lemma\ \ref{lemma4Modes} and Observation\ \ref{Obs4modeGFS} we find that the FS are also Gaussian iff either $a b = -c^2$ for states in $L_{a b c_2}$ or $ a^2 + b^2 =0 $ for states in $L_{a_2 b_2}$. Note that if all of the parameters of a state in $L_{a b c_2}$ ($L_{a_2 b_2}$) are equal to zero, the state is a product state (biseparable state) respectively.
\item GSLOCC classes containing states in the null cone: \\
The states in the null cone are the ones for which the normal form vanishes. For 4-mode GFS there exists, as in the 3-mode case, exactly one GSLOCC class containing these states, which is the class $L_{a b_3}$ of \cite{FrankSLOCC4} with $a= b=0$. The representative is of the form
\bea \label{4qubitnullcone} \hspace*{-0.4cm} \ket{\Psi_f}=\ket{1100}+\ket{1111}+\ket{1010}+\ket{0110}.\eea
This state is Gaussian and GLU-equivalent to the 4-qubit W-state.
\end{itemize}

Hence, for 4-mode GFS there exist infinitely many entangled GSLOCC
classes. More precisely, there are infinitely many GSLOCC classes that contain
a critical state, that is the states in these classes can be transformed into
the normal form. Furthermore, there exist infinitely many GSLOCC classes of
semi-stable states, which tend to a non-zero normal form without being
stable. There exists also a single GSLOCC class containing states in the null
cone for which the normal form vanishes.

There are less Gaussian GSLOCC classes for 4-mode GFS than there are for FS,
which is not surprising as not all FS are GFS, due to the condition given in
Eq.~\eqref{condGausspure} on $\ket{\Psi_f}$. This also implies that there exist less GSLOCC classes than SLOCC classes in the qubit case (see \cite{FrankSLOCC4}). However, as
mentioned above, there are still infinitely many such classes.  Examples of
SLOCC classes that contain FS but no GFS are those denoted by
$L_{a_20_{3\oplus 1}}$ in \cite{FrankSLOCC4} for $a\not=0$ \footnote{The class with
$a=0$ does contain GFS and in this case the states are biseparable.}.

\subsection{Fermionic LOCC operations}
\label{sec:FLOCC}
As for transformations of pure $n$-mode $n$-partite GFS there exist no
non--trivial GLOCC transformations, we consider here a larger class of
deterministic transformations and study fermionic LOCC (FLOCC)
transformations.  For such transformations the local maps that are
applied have to be fermionic and the measurement operators that are
implemented in each round have to be parity-respecting and local,
i.e., they have to be of the form $X^k D$ (in Jordan-Wigner
representation), where $k\in\{0,1\}$ and $D$ denotes here and in the
following a diagonal matrix \footnote{Note that if we consider more
  modes per site, the operations which can be applied would be of the
  form  \unexpanded{$X^{k_1}\otimes X^{k_2} T$}, where $T$ has to commute with
   \unexpanded{$Z\otimes Z$}. }. More precisely, in each round of an FLOCC
transformation one party implements locally a fermionic POVM
measurement with measurement operators that are of the form $X^k D$,
possibly discards some classical information about the outcome, then
communicates the relevant information to the other parties. These
apply depending on the measurement outcome an arbitrary local
completely positive trace-preserving (CPT) fermionic map. Note that
the Kraus operators of such maps can be chosen to be of the form $X^k
D$ (cf. Lemma \ref{le:KrausOpParity}). Note further that the
operations that are implemented in a subsequent round might depend on
the information about the prior outcomes.\\
For a concatenation of finitely many of such rounds the Kraus operators of the map that is implemented in each branch of the protocol, i.e., for a specific sequence of outcomes (taking into account that some information might have been discarded), are of the form $X^{k_1} D_1\otimes X^{k_2} D_2 \otimes \ldots  X^{k_n} D_n$. This can be easily seen as a finite product of operators of this form results in an operator of the same form. \\
In order to provide a rigorous definition of FLOCC protocols which can also involve infinitely many rounds (in analogy to the one given in
\cite{whatyouLOCC} for LOCC protocols) let us use the description of a
protocol in terms of a quantum instrument, i.e., by the family of
CP maps $\{\cE_1,\ldots, \cE_m\}$. Here, $\cE_i$ is
the CP map that is implemented in a specific branch
of the protocol denoted by $i$ and it holds that  $\sum_{i=1}^m
\cE_i$ is a trace-preserving map. Moreover, a  quantum instrument
$\mathcal{P}$ will be called FLOCC-linked to an instrument
$\tilde{\mathcal{P}}$ if $\mathcal{P}$ can be implemented by first
implementing $\tilde{\mathcal{P}}$ followed by exactly one more round
of an FLOCC protocol as defined before  (where again the operations
that are implemented in each branch $i$ can depend on all previous outcomes)  and then
possibly by some discarding of classical information.  With all that, $\mathcal{F}$ is defined as the instrument of a  FLOCC transformation if there exists a sequence of instruments of finite-round FLOCC protocols where each element of the sequence is FLOCC-linked to its preceding element. Furthermore, for each element  there exists a way to discard information in the final round such that the resulting sequence of instruments converges to $\mathcal{F}$. In the following we consider also infinite-round FLOCC, however, only those for which all Kraus operators are of the form $X^{k_i} D_i$. In order to highlight that there might be a difference to FLOCC as defined above, we denote this set of operations by FLOCC'. Note that, of course, any finitely-many-rounds FLOCC is contained in FLOCC'.
We are interested in FLOCC' transformations among pure GFS and, in particular, in the maximally entangled set for this scenario. We first review the concept of the maximally entangled set and then explain how it can be determined for GFS when one considers FLOCC' transformations.
\subsubsection{The maximally entangled set}
\label{sec:MES}
In \cite{VSK13} some of us introduced the Maximally Entangled Set (MES)
as the minimal set of $n$-partite entangled states that has the
property that any pure $n$-partite entangled state can be obtained via
LOCC from a state within this set. That is the states in the MES are
those which cannot be reached via LOCC from some state that is not
LU-equivalent.  In \cite{GiKr14} GLOCC transformations among Gaussian
states of two or three bosonic modes have been considered. There, it has
been shown that not all pure bosonic three-mode Gaussian states can be obtained via GLOCC from a symmetric Gaussian state, i.e., the MES of bosonic three-mode Gaussian states under GLOCC cannot consist only of symmetric Gaussian states.  In the following, we are interested in the MES of GFS under FLOCC'. It is defined analogously to before as the minimal set of $n$-partite $n$-mode GFS for which it holds that any pure $n$-partite $n$-mode entangled GFS can be obtained via FLOCC' from a state within this set.

As we explain in the next section using the Jordan-Wigner representation,
FLOCC' reachability of GFS can be studied in a way analogous to qubit systems. There, we used the necessary and sufficient conditions of convertibility via separable maps (SEP) of \cite{Gour} to identify the states that cannot be reached via SEP from a state that is not LU-equivalent. As separable maps (strictly) include LOCC transformations it follows that these states are not reachable via LOCC.
We outline here the basic idea of the proof of the necessary and sufficient condition derived in \cite{Gour} for qudits in order to explain how this result can also be applied to study FLOCC' transformations of GFS. \\
The initial state of the transformation is denoted by $g\ket{\Psi_s}$ and the final state by $h\ket{\Psi_s}$, where $g, h$ are invertible local operators \footnote{Note that only transformations among states within the same SLOCC class are considered here. This is necessary, as we do not consider transformations reducing the local rank of the states, i.e., we consider only transformations among truly multipartite entangled states.}. In order to perform this transformation it has to hold for all the Kraus operators of the separable map, $A_i=A_i^{(1)}\otimes A_i^{(2)}\otimes\ldots\otimes A_i^{(n)}$, that
$A_i g\ket{\Psi_s}\propto h\ket{\Psi_s}$ and therefore $(h^{-1}A_i g)\ket{\Psi_s}\propto \ket{\Psi_s}$. Using the definition for the local symmetries of a state $S_{\ket{\Psi}}=\{S:S\ket{\Psi}=\ket{\Psi}, S=S^{(1)}\otimes S^{(2)}\otimes\ldots\otimes S^{(n)}, S^{(j)}\in GL(d_j,\C)\}$, where $d_j$ denotes the local dimension of system $j$, we have that $h^{-1}A_i g\propto S_i$ where $S_i\in S_{\ket{\Psi_s}}$. That is, the measurement operators $A_i$ are proportional to $hS_i g^{-1}$. Taking into account the proper proportionality factors and using that the separable map has to be trace-preserving one obtains the following necessary condition for transforming $g\ket{\Psi_s}$ into $h\ket{\Psi_s}$ via SEP. There has to exist a probability distribution $\{p_i\}_{i=1}^m$ and local symmetries $S_i\in S_{\Psi_s}$ such that \cite{Gour} \bea\label{EqGour} \sum_{i=1}^m p_i S_i^{\dagger} H S_i=r G,\eea where $H=h^{\dagger} h$, $G=g^{\dagger} g$ and $r=\frac{\bra{\Psi_s}H\ket{\Psi_s}}{\bra{\Psi_s}G\ket{\Psi_s}}$. Moreover, it is straightforward to see that this condition is also sufficient  \cite{Gour}.  Using this criterion one can determine the states that are not reachable via a SEP transformation and hence, not via LOCC.\\
In the subsequent subsection we discuss how one can in an analogous way obtain necessary and sufficient condition for transformations among pure GFS via CPT maps with local fermionic Kraus operators.

\subsubsection{The maximally entangled set of GFS under FLOCC'}
As mentioned before the MES of GFS under FLOCC' corresponds to the minimal set of $n$-partite $n$-mode GFS with the property that any pure $n$-partite $n$-mode entangled GFS can be obtained via FLOCC' from a state within this set. Hence, this set corresponds to the optimal resource under the restriction to pure GFS and FLOCC' transformations. As we will see, it can be determined using a similar method as has been employed to characterize the MES for $3$- and $4$-qubit states.
In particular, using the Jordan-Wigner representation one can find analogously to the qudit case \cite{Gour}, which we reviewed in the previous subsection, the necessary and sufficient condition for transformations among GFS via separable maps whose Kraus operators are of the form $X^{m_1} D_1\otimes X^{m_2} D_2 \otimes \ldots  \otimes X^{m_n} D_n$, with $m_i\in\{0,1\}$ and $D_i$ is diagonal. Note that this class of separable maps includes all FLOCC' transformations, as all local fermionic operators can be written like that (in Jordan-Wigner representation).  \\
Before proceeding studying the separable maps, let us briefly recall the relation between the operator $(X)^{m_i}D_i$ in Jordan-Wigner representation and the Majorana operators. $(X)^{m_i}D_i$ corresponds to a sum of monomials of even ($m_i=0$) or odd ($m_i=1$) powers in the Majorana operators and hence, it either commutes or anticommutes with the application of $(X)^{m_j}D_j$ for $j\neq i$.  Note that as $X_i$ (in Jordan-Wigner representation) corresponds in the Majorana operators to $(-i\tilde{c}_1\tilde{c}_2)(-i\tilde{c}_3\tilde{c}_4)\ldots(-i\tilde{c}_{2i-3}\tilde{c}_{2i-2})\tilde{c}_{2i-1}$ it follows that despite the fact that this operator is acting locally on the modes it is not only acting on mode $i$. Its implementation requires also other parties to apply a local unitary. Any diagonal matrix $D_i$ can be written in the Majorana operators (up to a proportionality factor) as $e^{-i\alpha \tilde{c}_{2i-1}\tilde{c}_{2i}}$ for some $\alpha \in \C$ and therefore only acts on mode $i$.

In the previous subsection we have seen that all Kraus operators $A_i$ of a separable map transforming $g\ket{\Psi_s}$ to $h\ket{\Psi_s}$ have to be proportional to $hS_i g^{-1}$. Recall that $S_i$ denotes a local symmetry of $\ket{\Psi_s}$.
As for the transformations we are interested in  the operators $h, g$ and the Kraus operators $A_i$ are local fermionic operators this implies that also any symmetry $S_i\propto h^{-1}A_i g$ that contributes to the transformation is of the form $(X)^{m_1}D_1\otimes (X)^{m_2}D_2\ldots \otimes(X)^{m_n}D_n$. Hence, only symmetries of this form appear in the necessary and sufficient condition given by Eq.~(\ref{EqGour}) \footnote{As the SLOCC operators and local symmetries appearing in this equation are of the form   \unexpanded{$(X)^{m_1}D_1\otimes (X)^{m_2}D_2\ldots \otimes(X)^{m_n}D_n$} it can be easily seen that if one considers this equation  in terms of Majorana operators  it only involves even powers of the Majorana operators of a single mode. Hence, the local operators commute and partial traces can be performed without any additional reordering. It can also easily be seen that any operator $X$ acting on system $i$ that appears in this equation can be represented there in terms of Majorana operators by   \unexpanded{$\tilde{c}_{2i-1}$}.} if one considers transformations among GFS via the considered class of separable maps. \\
Thus, the local symmetries that can contribute to such transformations are a subset of the local symmetries that are available for transformations among qubit states. It follows straightforwardly that if the qubit state corresponding to the GFS (in Jordan-Wigner representation) is not reachable via a non-trivial SEP transformation then the GFS is not reachable via a separable map with the specific form of Kraus operators that we impose. Moreover, as exactly the same methods can be applied that we used to determine the MES for three- and four-qubits one can infer from these results the MES for $3$- and $4$- mode GFS under FLOCC' \footnote{Note that this does not imply that in order to compute the reachable GFS one simply computes the intersection of the GFS with the reachable qubit states as the required transformations to reach this state might be non-fermionic.}.\\
In \cite{finiteLOCC} and \cite{finiteroundLOCC} finite round LOCC transformations among pure $n$-qudit states have been investigated. Restricting the measurement operators, local unitaries and SLOCC operators to local fermionic operators one can use an analogous argumentation to obtain the corresponding results for finite-round FLOCC transformations among GFS.
In the following subsections we discuss explicitly the MES for $3$- and $4$-mode GFS under FLOCC'.

\subsubsection{$1\times 1 \times 1 $ case}

As shown in \cite{VSK13} the MES of three-qubit states is given (up to LUs) by
\bea\label{3MES}\hspace*{-0.4cm} \{ D_1\!\otimes \! D_2 \! \otimes \! D_3 \ket{GHZ}_3, \ket{GHZ}_3,
D_1\! \otimes \!D_2 \!\otimes \!\one \ket{W} \},\eea
where for the GHZ-class none of the $D_i$'s
is proportional to the identity and all of them are real and invertible. Note that all these states are Gaussian and it follows directly that these states also have to be in the MES of 3-mode GFS. As any GFS in the W-class can be written (up to GLUs) as given in Eq.~(\ref{3MES}), we have that any tripartite entangled 3--mode GFS is
either in the MES or it is of the form $D_1\otimes D_2 \otimes \one
\ket{GHZ}_3$, where at least one $D_i$ is not proportional to the identity (up
to GLUs and particle permutations). In the first case, the state cannot be
reached from any other state (even if one would allow the most general LOCC
transformation). In the second case it can be easily reached from the GHZ
state with the following FLOCC' protocol. Party $1$ applies the measurement
consisting of the measurement operators $D_1,  D_1 X$  and party 2 applies a
measurement consisting of the measurement operators $D_2,  D_2 X$
\footnote{More precisely, the measurement operators (including the proper
  normalization factors) for the measurement of party $i$ are given by
  $\frac{1}{\sqrt{\tr (D_i)}}D_i$ and $\frac{1}{\sqrt{\tr (D_i)}}D_i
      X$.}. Hence, the resulting state is $D_1 X^{k_1}\otimes D_2 X^{k_2}\otimes \one \ket{GHZ}_3$. Using that $ X^{k_1}\otimes  X^{k_2}\otimes X^{k_1+k_2} \ket{GHZ}_3= \ket{GHZ}_3$, we have that if party 3 applies the GLU $X^{k_1+k_2}$ the resulting state is for any outcome the desired state and hence, the transformation is deterministic.

\subsubsection{$1\times 1 \times 1 \times 1$ case}
The 4-mode case is very similar to the previously discussed 3-mode case. In
order to illustrate this, let us consider a few examples of possible
transformations among 4-mode GFS via FLOCC'. Note that we consider here only
GSLOCC classes with non-degenerate and non-cyclic seed states as in
Eq.~\eqref{4qubitseed} \footnote{That is the seed parameters fulfill $ab+cd=0$, $b^2
\neq c^2 \neq d^2 \neq b^2$, $a^2 \neq b^2, c^2, d^2$ and \unexpanded{$\nexists q \in
\mathcal{C} / \{1\}$}, such that $\{a^2,b^2,c^2,d^2\} = \{q a^2, qb^2,
qc^2, qd^2 \}$, see \cite{MES4qubit}. These conditions stem from a condition on the local symmetries of the states.}.
Due to Lemma~\ref{lemma4Modes} any 4-mode GFS with a seed state of the above form is either a
state in the MES (see \cite{VSK13}) or of the form (up to permutations) $\ket{\Psi}=D_1\otimes \one^{\otimes 3} \ket{\Psi_s}$. If the state is in the MES, it cannot be reached by any other state (even if LOCC would be allowed). Moreover, apart from the seed states all other states in the MES are isolated, i.e., they cannot be transformed into any other state via FLOCC'. Note that this is in contrast to the qubit case, where the states in Eq.~(\ref{4-modeGFS}) are states in the MES that are non-isolated, i.e., they can be transformed into a state with exactly one local non-diagonal operator (see \cite{VSK13}) via LOCC. These states are, however, no GFS. In case the 4-mode GFS is not in the MES it can be easily reached from the GFS seed state via the following FLOCC' protocol (for more sophisticated protocols see below). Party $1$ applies the measurement consisting of the measurement operators $D_1, D_1 X$. In case of the first outcome, the other parties do not need to apply any transformation. In case of the second outcome all three apply $X$ to their systems. Due to the fact that the seed state is invariant under $X^{\otimes 4}$ it can be easily seen that the transformation can be achieved deterministically.

Note that for certain GSLOCC classes more transformations are possible (see \cite{MES4qubit}). For instance, if the seed parameters fulfill $a =b, \ c= d$ and $c = i a$, that is they do not fulfill the above stated conditions, the seed state has more symmetries. As can be easily seen, this implies that the seed state can be, for example, transformed into states of the form $\one \otimes D_2 \otimes D_3 \otimes \one \ket{\Psi_s}$. The corresponding FLOCC' protocol is given by party 2 applying the measurement operators $D_2, D_2 X$ and party 3 applies the operators $D_3, D_3 X$. Using that the seed state is invariant under $Y \otimes \one \otimes X \otimes Z$ and $Z \otimes  X \otimes  \one\otimes Y$ it is easy to see that the protocol can be implemented deterministically.

\section{Pure Gaussian fermionic states and local transformations for multimode states}
\label{Sec:multimode}
In this section we consider pure $N$-partite GFS where each party $i$ holds
$m_i$ modes. We first investigate transformations among fully entangled
multimode GFS (for the definition see below) via Gaussian trace preserving separable transformations (GSEP),
i.e., Gaussian transformations for which the CM of the
CJ state is of direct sum form. We show that also in this
more general setting such transformations are only possible if the map is a
GLU transformation. As GSEP includes GLOCC transformations (see Appendix
\ref{appGLOCC}) this implies that any GLOCC transformation that is possible
among pure fully entangled GFS can be implemented via GLUs. Hence, as before
we consider larger classes of operations, namely probabilistic transformations
and FLOCC' transformations. More precisely, we briefly explain how the GSLOCC
classes can be characterized in the multimode case for classes which contain a
critical state. We conclude this section by briefly discussing non-trivial
FLOCC' transformations among pure multimode GFS.

\subsection{Gaussian separable transformations}
We investigate Gaussian separable transformations (GSEP) among pure fully
entangled multimode states, i.e., multimode FS with the property that the
Schmidt decomposition (of the state in its Jordan-Wigner representation) with
respect to the splitting of one party versus the rest has no zero Schmidt
coefficients.
As stated in the following Lemma we show that such transformations are only
possible if the map corresponds to applying GLUs.

\begin{lemma} Let ${\cal E}_{sep}$ denote a Gaussian trace preserving
  separable map which transforms at least one pure fully entangled $m_1\times
  m_2\times \ldots\times m_N$--mode FS, $\ket{\Psi}$, into another pure fully
  entangled $m_1\times m_2\times \ldots\times m_N$--mode FS,
  $\ket{\Phi}$. Then, it holds that ${\cal E}_{sep}(\rho)=(U_1\tilde{\otimes}
  U_2\dots \tilde{\otimes} U_N) \rho (U_1^\dagger \tilde{\otimes}
  U_2^\dagger\dots \tilde{\otimes} U_N^\dagger)$ for all $\rho$.
\end{lemma}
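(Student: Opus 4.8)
The plan is to mirror, step by step, the proof of Lemma~\ref{LemmaGSEP}, replacing the single mode of each party by its block of $m_i$ modes and upgrading the Schmidt rank $2$ across the cut ``party $1$ versus the rest'' to its generic full value. First I would invoke the characterization of Gaussian separable maps recalled in Appendix~\ref{appGLOCC}: ${\cal E}_{sep}$ has a separable Gaussian CJ state, hence it factorizes as ${\cal E}_{sep}={\cal E}_1\tilde{\otimes}{\cal E}_2\tilde{\otimes}\cdots\tilde{\otimes}{\cal E}_N$ with each ${\cal E}_i$ a Gaussian CPTP map acting on the $m_i$ modes of party $i$, and by Lemma~\ref{le:KrausOpParity} I may take all local Kraus operators to have definite parity, so that operators belonging to different parties commute up to a global sign.

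Next I would isolate party $1$. Writing $\rho=\id\,\tilde{\otimes}_{k\neq 1}{\cal E}_k(\kb{\Psi}{\Psi})=\sum_i p_i\kb{\Psi_i}{\Psi_i}$ in spectral form, the identity ${\cal E}_{sep}(\kb{\Psi}{\Psi})=\kb{\Phi}{\Phi}$ forces ${\cal E}_1\tilde{\otimes}\id^{\tilde{\otimes} N-1}(\kb{\Psi_i}{\Psi_i})\propto\kb{\Phi}{\Phi}$ for every $i$ with $p_i\neq 0$, hence $A_k\tilde{\otimes}\id\ket{\Psi_i}\propto\ket{\Phi}$ for all Kraus operators $A_k$ of ${\cal E}_1$. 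Since a local operation on party $1$ cannot increase the Schmidt rank across the cut ``party $1$ vs.\ the rest'' and $\ket{\Phi}$ is fully entangled, $\ket{\Psi_i}$ must also be fully entangled across this cut; I would write its Schmidt decomposition (in the Jordan-Wigner picture) as $\ket{\Psi_i}=\sum_{j}\lambda_j^{i}\ket{j}_1\ket{\psi_j^{i}}$ with all $\lambda_j^{i}\neq 0$ and $j$ ranging over a basis of the $2^{m_1}$-dimensional space of party $1$. Because the $\proj{\psi_j^{i}}$ are even operators on the remaining modes and the $A_k$ have definite parity, $A_k$ and $A_l$ act identically up to a constant on each $\ket{j}_1\ket{\psi_j^{i}}$; this is the multimode analogue of the key identity $A_k\ket{j}\ket{\psi_j^i}=cA_l\ket{j}\ket{\psi_j^i}$ in the proof of Lemma~\ref{LemmaGSEP}.

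The heart of the argument is then to propagate this coincidence to the entire Fock basis, exactly as in the proof of Lemma~\ref{LemmaGSEP}: apply projectors onto Fock states of the modes of parties $2,\dots,N$ (which, having definite parity, commute with the $A_k$) and then strings of Majorana operators $\tilde{c}_a$ of those modes (each string either commutes or anticommutes with $A_k$, and the \emph{same} sign occurs for $A_l$ since both Kraus operators have the \emph{same} parity), thereby generating all Fock states of parties $2,\dots,N$ and obtaining $A_k\ket{j}_1\ket{\vec w}=c\,A_l\ket{j}_1\ket{\vec w}$ for every basis vector $\ket{j}_1$ of party $1$ and every Fock state $\ket{\vec w}$ of the rest. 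Hence $A_k\propto A_l$ for all pairs $k,l$; combined with $\sum_k A_k^\dagger A_k=\id$ this gives $A_k^\dagger A_k\propto\id$, so ${\cal E}_1$ is implemented by a single partial isometry on a full-dimensional space, i.e.\ a unitary, which moreover is Gaussian since ${\cal E}_1$ is; thus ${\cal E}_1$ is a GLU on the modes of party $1$. Relabelling the parties so that each one in turn plays the role of party $1$ and repeating the argument shows that every ${\cal E}_i$ is a GLU; since the ${\cal E}_i$ have definite-parity Kraus operators they commute and may be applied sequentially, and reordering preserves both the product structure and the fact that no party factorizes, so ${\cal E}_{sep}=U_1\tilde{\otimes}\cdots\tilde{\otimes}U_N$ as claimed.

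I expect the main obstacle to be precisely the basis-generation step when $m_1>1$: one has to be sure that the Majorana-operator and Fock-projector manipulations on the modes of parties $2,\dots,N$ sweep out a \emph{full} basis of that (now $2^{\sum_{k\neq 1}m_k}$-dimensional) space while respecting the parity bookkeeping, and that the Schmidt vectors $\ket{\psi_j^i}$ have the parity structure that makes the $\proj{\psi_j^i}$ commute with the definite-parity Kraus operators. A second, minor point to check is that ``fully entangled'' really does guarantee full Schmidt rank of $\ket{\Psi_i}$ across each single-party cut (this follows because the local map on that party cannot raise the rank of $\ket{\Phi}$), which is what licenses summing over a complete basis $\{\ket{j}_1\}$ rather than just two vectors.
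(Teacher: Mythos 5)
Your proposal follows essentially the same route as the paper's proof, which is indeed a direct adaptation of the argument for Lemma~\ref{LemmaGSEP}. The one point you flag as an unresolved obstacle --- that the Schmidt vectors must have definite parity so that the corresponding projectors commute with the definite-parity Kraus operators --- is precisely the only genuinely new ingredient of the multimode case, and the paper closes it with a short argument you should include: the reduced state of party $1$ is fermionic and hence diagonalizable in a basis of fermionic states $\ket{\eta_j}$; since each $\proj{\eta_j}$ is a sum of even monomials and $\ket{\Psi_i}$ is a FS, the complementary Schmidt vectors $\ket{\nu_j}\propto(\proj{\eta_j}\tilde{\otimes}\id)\ket{\Psi_i}$ are fermionic as well. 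With that in hand your Fock-basis sweep over parties $2,\dots,N$ becomes unnecessary: applying $\id\tilde{\otimes}\proj{\nu_j}$ to $A_k\tilde{\otimes}\id\ket{\Psi_i}\propto A_l\tilde{\otimes}\id\ket{\Psi_i}$ already gives $A_k\ket{\eta_j}\propto A_l\ket{\eta_j}$ on a spanning set of party $1$'s support, which is all that is needed. One small correction: the number of nonzero Schmidt coefficients is $2^{\min(m_1,\sum_{k\neq 1}m_k)}$, not necessarily $2^{m_1}$.
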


Note that this lemma holds, as in the $n$--partite $n$--mode case for all FS (not only GFS).

\begin{proof}
  This lemma can be shown using an analogous argumentation as in the proof of
  Lemma \ref{LemmaGSEP}. We recall here the main steps of the proof and
  comment on its generalization to the multimode case. As argued in Appendix
  \ref{appGLOCC} Gaussian separable maps correspond to product operations,
  i.e., they are of the form
  $\cE_{sep}=\cE_1\tilde{\otimes}\cE_2\tilde{\otimes}\dots\tilde{\otimes}\cE_N$
  with GCPTMs $\cE_i$ which act now on $m_i$ modes. Analogously to the case of
  a single mode per site we consider $\rho=\id\tilde{\otimes}_{k\neq 1}{\cal
    E}_k(\proj{\Psi})$ with spectral decomposition $\sum_i p_i
  \proj{\Psi_i}$. As before it follows straightforwardly that for $p_i\neq 0$
\begin{eqnarray}
  \label{EQSEPLUmulti} \ket{\Phi}\propto A_k\tilde{\otimes}_j\id_{m_j}
  \ket{\Psi_i}\propto A_l\tilde{\otimes}_j\id_{m_j}  \ket{\Psi_i},\end{eqnarray}
where the operators $A_t$ are the Kraus operators of ${\cal E}_1$ and
$\id_{m_j}$ denotes here the identity on $m_j$ modes. We show next that there exists a Schmidt decomposition of the
Jordan-Wigner representation of $\ket{\Psi_i}$ in the splitting of the first $m_1$ modes
versus the remaining modes such that all involved local (with respect to that
splitting) states are fermionic. In order to do so note that the reduced state
of the first $m_1$ modes has to be fermionic and therefore the range of the
reduced state is spanned by FSs. Hence, any purification of this state (in particular $\ket{\Psi_i}$) is given by $\sum_{j=1}^{2^{\min
    (m_1,n_1)}}\lambda_j\ket{\eta_j}\ket{\nu_j}$, where $\ket{\eta_j}$
are orthogonal FSs of $m_1$ modes. That the $n_1 \equiv
\sum_{j=2}^N m_j$--mode states $\ket{\nu_j}$ are also fermionic, follows from the facts that the projector onto the
states $\ket{\eta_j}$ are fermionic operators (as they are sums of only even monomials in the Majorana operators) and that $\ket{\Psi_i}$ is a FS.
Moreover, as the
final state $\ket{\Phi}$ is fully entangled all Schmidt coefficients of
$\ket{\Psi_i}$ have to be unequal to zero (see Eq.\ (\ref{EQSEPLUmulti})),
i.e., $\lambda_j\neq 0$ $\forall j \in \{1, \ldots, 2^{\min (m_1,n_1)}\}$.\\
Analogous to the case of a single mode per site one can now apply
$\proj{\nu_j}$ on both sides of Eq.~(\ref{EQSEPLUmulti}) in order to see that
the action of $A_k$ on a basis is the same (up to a proportionality factor)
for all Kraus operators $A_k$ and hence $\cE_1$ is a Gaussian unitary
operation. Rearranging the modes \footnote{Note that exchanging the order of
  the parties (but keeping the relative order among the modes belonging to one
  party) neither changes the product structure of the maps due to Lemma
  \ref{le:KrausOpParity} nor the Kraus operators. Moreover, the Schmidt
  coefficients of the state in Jordan-Wigner representation are not changed by
  such a rearranging.} and applying the same argumentation for the various
parties proves the lemma.
\end{proof}
As GSEP is defined such that it includes all GLOCC transformations (see
Appendix~\ref{appGLOCC}) this lemma implies that non-trivial GLOCC
transformations among pure fully entangled GFS are not possible even if one considers the case
of an arbitrary (finite) number of modes per site. Hence, in the following
section we will consider probabilistic local transformations and comment on
the characterization of the GSLOCC classes for multimode states.

\subsection{Gaussian Stochastic LOCC}
As deterministic transformation are not possible among pure fully entangled GFS we will consider next
stochastic GLOCC operations. We distinguish between bipartite and multipartite
GFS, as in \cite{BoRe04b} a decomposition for bipartite states was
introduced. For multipartite states we show similar to the single-mode per
site case that stable states can be brought into a normal form.

\subsubsection{Bipartite case}
For bipartite pure multimode states, i.e., party A (B) holds $d_1$ ($d_2$)
modes respectively, it was shown in \cite{BoRe04b}  that one can consider
without loss of generality two subsystems of $d$ modes each, where $d=\min (d_1, d_2)$, that is the two parties hold the same number of modes. Thus, we only consider $d \times d$ states here.
A direct consequence of the results obtained in \cite{BoRe04b} is the following observation for bipartite multimode GSLOCC classes.
\begin{observation}For $d\times d$ modes (GFS) there exist $d$ different GSLOCC classes. \end{observation}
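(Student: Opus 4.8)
The plan is to leverage the classical decomposition of bipartite pure fermionic Gaussian states established in \cite{BoRe04b}, which plays the role of a Schmidt-type decomposition adapted to the fermionic Gaussian setting. First I would recall that, for $d\times d$ modes, any pure bipartite GFS can be brought by Gaussian local unitaries (a local $\SO(2d)$ on each side) into a normal form that is a ``tensor product'' (in the $\tilde\otimes$ sense) of $d$ elementary two-mode blocks, where each block is either a paired state $\cos\theta_k\ket{00}+\sin\theta_k\ket{11}$ (one mode from $A$, one from $B$) or a completely occupied/empty local mode that factorizes. The parameters $\theta_k\in[0,\pi/4]$ (after fixing the GLU freedom via the standard form of Sec.~\ref{sec:standardform}) are the Gaussian analogue of Schmidt coefficients, and $\theta_k=0$ corresponds to a decoupled mode on that side.

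Next I would argue that the GSLOCC class of such a state is determined \emph{only} by how many of the blocks are genuinely entangled, i.e.\ by the number $r$ of indices $k$ with $\theta_k\neq 0$ (equivalently, by the local rank $2^r$ of the reduced state in the Jordan--Wigner picture). For the ``only if'' direction: GSLOCC operators are of the form \eqref{GSLOCC}, hence invertible and local, so they cannot change the local rank of either reduced density operator; thus states with different $r$ lie in different GSLOCC classes, giving at least $d$ classes ($r=1,\dots,d$, excluding the fully product case $r=0$, or including it depending on the counting convention in \cite{BoRe04b}). For the ``if'' direction: any two states with the same $r$ can be connected, since after discarding the decoupled modes both are, up to GLU, of the form $D_A\tilde\otimes D_B$ applied to the fixed critical seed $\tilde\otimes_{k=1}^r\ket{\Phi^+}$ — exactly as in the $1\times1$ analysis of Sec.~\ref{sec:GSLOCC} — and the diagonal $D$'s are invertible Gaussian SLOCC operators. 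Hence each value of $r$ is a single GSLOCC class.

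The main obstacle I anticipate is making the normal-form reduction fully rigorous in the fermionic (as opposed to bosonic) case: one must check that the decomposition of \cite{BoRe04b} is achievable by \emph{local} Gaussian unitaries on each party (including the parity-flip $Z$'s we allow as GLU) rather than by a joint Bogoliubov transformation, and that the Jordan--Wigner ordering of the $m_1$ modes of $A$ before the $m_2$ modes of $B$ does not introduce sign obstructions when one factors out the $\tilde\otimes$ blocks. This is handled by the same parity bookkeeping used in Lemma~\ref{le:KrausOpParity} and in the multimode GSEP lemma above: each elementary block has definite parity on each side, so the blocks commute and can be rearranged freely. Once this is in place, the count of $d$ classes follows immediately, and one may also note that $\theta_k$ continuously labels GLU-inequivalent states \emph{within} a fixed class but does not affect the (coarser) GSLOCC classification, in perfect analogy with the $1\times1$ case.
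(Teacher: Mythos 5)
Your proposal is correct and follows essentially the same route as the paper: invoke the normal form of \cite{BoRe04b} writing the state as a $\tilde\otimes$-product of $d$ two-mode blocks $\cos\theta_i\ket{00}+\sin\theta_i\ket{11}$, and then classify by the number of entangled blocks via the invariance of local rank under invertible local (GSLOCC) operators. The extra care you take with the ``if'' direction and the parity/ordering bookkeeping is sound but not needed beyond what the paper's one-line argument already uses.
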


\begin{proof} This can be easily shown by using that any such state is up to
  GLU equivalent to $\otimes_{i=1}^{d} \ket{\Psi_i}_{A B}$, with
  $\ket{\Psi_i}_{A B} = \cos{\theta_i} \ket{00}_{A B} +\sin{\theta_i}
  \ket{11}_{A B}$ \cite{BoRe04b}.  Thus, A and B share $d$ 2-mode states
  $\ket{\Psi_i}_{A B}$, which are entangled for $\theta_i \neq 0,
  \pi/2$. Moreover, each GSLOCC class is characterized by the local rank of
  the states (the rank of the reduced states $\rho_{A}, \rho_B$ does not increase under GSLOCC) \cite{Nie99} and, hence, we immediately arrive at the
  above stated result. \end{proof}
Thus, there exist as many GSLOCC classes for bipartite GFS as SLOCC classes for bipartite qudit states.

\subsubsection{Multipartite case}

Analogously to the case of a single mode per site one can transform any multi-mode FS into a normal form by consecutively applying fermionic local invertible operators. Note again that this normal form vanishes for states in the null cone. Moreover, there exist semi-stable states that tend to a non-zero normal form but their SLOCC class does not contain a critical state \cite{GoWa13}.

\begin{lemma} \label{LemmaNormalFmultimode} Let $\ket{\Psi}$ be an entangled $m_1\times m_2\times \ldots\times m_N$--mode FS. Then $\ket{\Psi}$ can be constructively transformed (by applying invertible fermionic operators) into a unique (up to LUs) critical FS, $\ket{\Psi_s}$ (up to a proportionality factor which can tend to 0).\end{lemma}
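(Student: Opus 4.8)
The plan is to deduce this from the finite-dimensional normal-form theorem of \cite{FrankNormalfo}, in complete analogy with the single-mode-per-site Lemma~\ref{LemmaNormalF}; the only new ingredient is that the local invertible operators produced by that algorithm will now be \emph{even} (parity-commuting), rather than diagonal. First I would recall the relevant construction: writing party $i$ as $\C^{d_i}$ with $d_i=2^{m_i}$ (and, if a marginal is not of full rank, restricting to its support, whose dimension is not changed by invertible local maps), one cycles through the parties applying the determinant-one operators $X_i^{(k)}=|\rho_i^{(k)}|^{1/(2d_i)}(\sqrt{\rho_i^{(k)}})^{-1}$, where $\rho_i^{(k)}$ is the reduction to party $i$ after the first $k$ steps; the resulting sequence of (sub-normalized) states converges to a state all of whose single-party marginals are proportional to the identity, i.e.\ a critical state, up to a proportionality factor which tends to $0$ precisely for states in the null cone, and that critical state is unique up to local unitaries by the Kempf--Ness theorem \cite{KempfNess}.

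The fermionic content of the argument is then an induction on $k$ showing that every state appearing in the algorithm is a FS and every $X_i^{(k)}$ is an invertible local fermionic operator. The base case is the hypothesis on $\ket{\Psi}$. For the step, if the state at stage $k$ is a FS then its reduction $\rho_i^{(k)}$ is a fermionic mixed state, hence commutes with the local parity $P_i$; consequently $(\sqrt{\rho_i^{(k)}})^{-1}$ (well defined and invertible on the support) commutes with $P_i$ as well, and, the prefactor being a scalar, so does $X_i^{(k)}$. Thus $X_i^{(k)}$ is an even operator on the modes of party $i$, hence, being parity-preserving, an admissible invertible local fermionic operation (cf.\ Sec.~\ref{sec:FLOCC}). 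Applying it on site $i$ to a FS again yields a FS, because an even local operator commutes with the global parity and therefore maps definite-parity pure states to definite-parity pure states (equivalently, preserves $\rho=P_e\rho P_e+P_o\rho P_o$). Hence the whole trajectory stays inside the set of FS, and its limit $\ket{\Psi_s}$ is a critical FS reached from $\ket{\Psi}$ by a convergent product of invertible local fermionic operators.

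The convergence of the algorithm, the stable / semi-stable / null-cone trichotomy, and uniqueness via Kempf--Ness are taken over wholesale from \cite{FrankNormalfo,KempfNess} and constitute the technical core, but they are used only as a black box; the only genuinely fermionic input --- and the one step I expect to need care --- is the observation that the reduction of a FS commutes with the local parity, which forces each $X_i^{(k)}$ to be even, everything else being bookkeeping. The mild additional point is the passage to local supports when a marginal is not full rank: there one uses that the support of a fermionic reduced state is spanned by FS (it is the direct sum of a subspace of the even and a subspace of the odd parity sector), so the truncated system is again a fermionic system and the argument runs on it verbatim.
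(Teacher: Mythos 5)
Your proposal is correct and follows essentially the same route as the paper: both reduce to the normal-form algorithm of \cite{FrankNormalfo} with uniqueness from Kempf--Ness, the paper's (very terse) proof likewise noting that the only change from the single-mode case is that the local operators $X_i^{(k)}$ are no longer diagonal but are general fermionic (even) operators. Your explicit justification --- that the marginal of a FS commutes with the local parity, hence so does $(\sqrt{\rho_i^{(k)}})^{-1}$, so each $X_i^{(k)}$ is even and preserves the FS property --- is exactly the point the paper leaves implicit.
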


The lemma can be proven by using the same argumentation as in the case of a
single mode per site (see Lemma\ \ref{LemmaNormalF}). Note that the only difference is that the local invertible operators, i.e., the reduced states, are no longer diagonal and thus, not automatically also Gaussian. However, they are general fermionic operators.
Note, furthermore, that any GSLOCC class containing a critical state can be easily characterized via this state. That is, if $\ket{\Psi_s}$ is a critical GFS then any other state $\ket{\Psi}$ in the same GSLOCC class is given by  $M_1 \otimes M_2 \ldots \otimes M_n \ket{\Psi_s}= \ket{\Psi}$. Here, the operators $M_i$ are Gaussian invertible operators.

\subsection{Fermionic LOCC}

Transformations among fully entangled multimode GFS via FLOCC'  \footnote{Analogous to the case of a single mode per party we define FLOCC' in the multimode case as the class of maps that can be implemented via FLOCC and whose Kraus operators are local fermionic operators.} work analogously to the $n$-mode $n$-partite case. Note, however, that in this setting there is an additional freedom when one considers transformations to not fully entangled states. Similar to the finite dimensional qudit case and contrary to the single-mode case it is possible to reduce the local rank of the parties via FLOCC', leaving still all parties entangled with each other.

\section{Conclusion}
\label{sec:conclusion}
We investigated the entanglement of GFS. For this purpose, we first derived a standard form of the CM for mixed $n$-mode $n$-partite GFS. Any CM can be brought into this standard form via GLU. As the standard form is unique, any two GFS are GLU-equivalent iff their CMs in standard form coincide. Furthermore, we showed that only two of the definitions of separable FS from \cite{BCW07} are reasonable for GFS. This is due to the fact that any separable state should have the property that also two copies of this state are again separable. For our derivations we used the definition of separability which declares a state separable if it is given by a convex combination of product states which commute with the local parity operator. According to this physically meaningful definition any separable state can be prepared locally. Using this definition we showed that for pure fully entangled $n$-mode $n$-partite as well as multimode GFS any GSEP is equivalent to a GLU. Thus, there exist no non-trivial GLOCC transformations among pure fully entangled GFS. Due to this fact we consider then the larger class of GSLOCC. With the help of a result on normal forms of states from \cite{FrankNormalfo} we also characterized the GSLOCC classes in the Jordan-Wigner representation and furthermore, explicitly derive them for few-mode systems. Then, we investigated the more general FLOCC', which contains in particular finitely-many rounds FLOCC (see Sec.~\ref{sec:FLOCC}), to obtain insights into the various entanglement properties of GFS and we show how to identify the MES of pure $n$-mode $n$-partite GFS under FLOCC'.  \par

Let us finally compare the fermionic case investigated here with the bosonic and the finite dimensional scenarios. In all three cases a computable condition for two ($n$--partite $n$--modes or $n$--qubit) states to be (G)LU--equivalent has been presented \cite{GiKr14, kraus2010}. Regarding the bosonic Gaussian case, we have that GSLOCC coincide with GLOCC transformations. This follows from the fact that any GSLOCC operation can be completed to a deterministic transformation. Moreover, there exist GLOCC transformation among pure bosonic Gaussian states which are not just GLU transformations (see e.g. \cite{GiKr14}). The MES for bosonic Gaussian states is not known, however, in \cite{GiKr14} a class of three--mode states has been identified which can reach states which cannot be reached from any symmetric three--mode state (including the GHZ and W states). Regarding the finite dimensional case, there exist (not surprisingly) more SLOCC classes than for GFS. Moreover, for Hilbert spaces composed of local Hilbert spaces of equal dimensions, it has been shown that almost all states are isolated, i.e., the state can neither be reached, nor transformed into any other (not LU--equivalent) state via LOCC \cite{Gour2016,sauerwein2017}. This resembles the fermionic case. However, as mentioned before, the reason for this to be true stems from the fact that almost no state possesses a local symmetry.

It would be interesting to investigate another physically relevant scenario by imposing a (global) particle-number
selection rule (as it is observed by elementary fermions in nature) on
the states considered and studying state transformations via
number-preserving local operations. Moreover, as in the qudit case, the transformations from a multipartite state, where each party holds more than a single mode (a single qubit)
to a state whose local rank is smaller might well allow (more) non-trivial transformations, respectively. Physically motivated, restricted set of states, such as FS or GFS, are ideally suited for this investigation, as it will be more trackable than the general qudit case. Moreover, this class of states is rich enough so that the results derived for them have the potential to lead also to new insight into state transformations among qudit states.

\begin{acknowledgments}
The research of KS and BK was funded by the Austrian Science Fund
(FWF) Grant No. Y535-N16. GG acknowledges support by the Spanish Ministerio de Econom\'{\i}a y Competitividad through the Project FIS2014-55987-P. CS acknowledges support by the Austrian Science Fund (FWF) Grant No. Y535-N16, the DFG and the ERC (Consolidator Grant 683107/TempoQ).
\end{acknowledgments}

\appendix
\section{}
\label{App:A}
In this appendix we study first the Choi-Jamiolkowski (CJ) isomorphism \cite{Cho72,Jam72,Cirac2000} among Gaussian states and Gaussian CP maps. Note that similar aspects of Gaussian CP maps have already been studied in \cite{Brav05}. However, there the author was using a different definition of the "tensor product" ($\otimes_f$) in the calculation. We summarize here the results using our notation. Then, we consider Gaussian LOCC (GLOCC) transformations and show that any GLOCC corresponds via the CJ isomorphism to a separable state. These investigations lead to the natural definition of fermionic separable maps (FSEP). Considering then the possible states which can be generated via GLOCC enables us to rule out the definition $\cS2_{\pi'}$ for separable states. That is, if $\cS2_{\pi'}$ does not coincide with $\cS2_{\pi}$ for GFS there exist states in $\cS2_{\pi'}$ which can neither be prepared locally by Gaussian operations, nor do they belong to the limit of such a preparation scheme.

\subsection{Choi-Jamiolkowski isomorphism in the Gaussian case}

The CJ isomorphism is a one to one mapping between CP maps and positive semidefinite operators. Denoting by ${\cal E}$ the CP map that is acting on $n$ modes and by $\rho_{\cal E}$ the corresponding operator we have 
\begin{eqnarray}
\rho_{\cal E}&=&{\cal E}\tilde{\otimes} \one (\kb{\Phi^+_{2n}}{\Phi^+_{2n}}) \nonumber \\
{\cal E}(\rho)&=&\tr_{2 3}(\rho_{\cal E}^{12} \rho^3 \ket{\Phi^+_{2n}}^{23}\bra{\Phi^+_{2n}}), 
\end{eqnarray}
where $\ket{\Phi^+_{2n}}\propto \prod_{a=1}^{2n} (\one +i \tilde{c}_a\tilde{c}_{2n+a})$. In \cite{Cirac2000} it has been shown that separable maps correspond to separable operations and that several other properties of the operators can be inferred from the maps and vice versa. The aim of this section is to show that the same isomorphism holds for Gaussian states. In the subsequent subsection we will then investigate the relation between separable operators and the corresponding maps. Note that we write Gaussian states and operators in this section in the Grassmann representation, see \cite{Brav05} for more details. Note further that $\rho_{\cal E}$ is a GFS iff ${\cal E}$ is a Gaussian map. It is obvious that $\rho_{\cal E}$ is a Gaussian state if ${\cal E}$ is Gaussian as $\ket{\Phi^+_{2n}}$ is a GFS.  Moreover, due to ${\cal E}(\rho)=\tr_{2 3}(\rho_{\cal E}^{12} \rho^3 \ket{\Phi^+_{2n}}^{23}\bra{\Phi^+_{2n}})$ one obtains that if $\rho_{\cal E}$ is a GFS then also ${\cal E}(\rho)$ is Gaussian for all GFS $\rho$ and therefore ${\cal E}$ corresponds to a Gaussian map.\\
In \cite{Brav05} it was shown that a linear CP map on $n$ fermionic modes is Gaussian iff it has
a (Grassmann) integral representation
\begin{equation}
  \label{eq:5}
  {\cal E}(X)(\theta) = C\int D\eta D\mu \exp[S(\theta,\eta)+i\eta^T\mu]X(\mu),
\end{equation}
where
\[S(\theta,\eta)=\frac{i}{2}
\left( \begin{array}{c} \theta\\ \eta\end{array} \right)^T
\left( \begin{array}{cc} A & B\\ -B^T& D
\end{array} \right)
\left( \begin{array}{c} \theta\\ \eta\end{array} \right)\equiv
\vec{\theta}^{\,T}M_{\cal E}\vec{\theta},
\]
with $C\geq 0$, real $2n\times 2n$ matrices $A,B,D$ and $M_{\cal E}^TM_{\cal E}\leq \one$. The identity map on
$n$ modes is given by $A=D=0$ and $B=\id$. Thus for a map ${\cal E}'$ on $n+m$
modes that acts non-trivially only on the first $n$ modes we take $A'=A\oplus0, D'=D\oplus0, B'=B\oplus\id$. Applying this map
(for $m=n$) to the maximally entangled state of $2n$ modes, we get as the CM
of the output state (with $\vec{\theta}=(\theta,\theta')$ (and same for
$\vec{\eta},\vec{\mu}$) and $\vec{x}_{12}=(\vec{\theta},\vec{\eta}), \vec{x}_{23}=(\vec{\eta},\vec{\mu})$)
\begin{align*}
  &\int D\eta D\eta' D\mu D\mu'
  e^{\frac{i}{2}\vec{x}_{12}^T\left( \begin{smallmatrix} A' & B'\\ -B'^T& D'
\end{smallmatrix}\right)\vec{x}_{12}+i\vec{\eta}^T\vec{\mu}}e^{\frac{i}{2}\vec{\mu}^T\left( \begin{smallmatrix}
  0^{\phantom{T}}& \id\\ -\id^{\phantom{T}}& 0\end{smallmatrix}\right)\vec{\mu}}\\
&=e^{\frac{i}{2}\vec{\theta}^T(A\oplus0)\vec{\theta}}\int
Dx_{23}e^{y^T\vec{x}_{23}+\frac{i}{2}\vec{x}_{23}^T\tilde{M}\vec{x}_{23}}\\
&\propto
e^{\frac{i}{2}\vec{\theta}^T(A\oplus0)\vec{\theta}}e^{-\frac{i}{2}y^T\tilde{M}^{-1}y}.
\end{align*}
In the last step we used the Gaussian integration rule (see Eq.~(13) of \cite{Brav05}),
 $y=(iB'^T\vec{\theta},0)$ and $\tilde{M}=\left( \begin{smallmatrix}
    D&0&\id&0\\0&0&0&\id\\-\id&0&0&\id\\ 0&-\id&-\id&0
\end{smallmatrix}\right)$. Since $y$ is non-zero only in the first two components,
we only need the upper diagonal block of the $2\times2$ blockmatrix $\tilde{M}^{-1}$, which
is given by the Schur complement as
$\left( \begin{smallmatrix} D & -\id\\ \id&0
\end{smallmatrix}\right)^{-1}=\left(\begin{smallmatrix}
0&\id\\ -\id& D\end{smallmatrix} \right)$.
Thus, we end up with a Gaussian Grassmann representation with CM
\begin{equation}
  \label{eq:6}
  \left(\begin{array}{cc} A &B\\ -B^T&D \end{array} \right).
\end{equation}
Hence, the GFS with this CM is the CJ-state $\rho_{\cal E}$ of the map $\cal E$. Note that by using the above mentioned definition of a tensor product $\otimes_f$ (see Def.~5 in \cite{Brav05}) for the computation of the CJ-state we obtain a CM $\left( \begin{smallmatrix} A &-B\\ B^T&D \end{smallmatrix} \right)$. The corresponding state is obtained by applying the local operator $\prod_{i=1}^{2n} \tilde{c}_i$ to $\rho_{\cal E}$.

In order to confirm that the state $\rho_{\cal E}$ with CM given in Eq.~\eqref{eq:6} allows for the physical interpretation, which is characteristic for the CJ-state, that it can be used to realize the map $\cal E$ via teleportation, we compute
\[
\tr_{23}(\rho_{\cal E}^{12}\rho_\Gamma^3\ket{\Phi^+_{2n}}^{23}\bra{\Phi^+_{2n}})).
\]
Here, the superscripts indicate on which of the three different blocks of
modes the state is nontrivial. Using the formula for the trace of two
operators X,Y in Grassmann variables \footnote{The trace is given by
  \unexpanded{$\tr(X Y) = (-2)^n \int D \theta D \mu e^{\theta^T \mu} w(X,
    \theta) w(Y,\mu)$} with \unexpanded{$w(\tilde{c}_p
    \tilde{c}_q...\tilde{c}_r, \theta) = \theta_p \theta_q ... \theta_r$}.}
(see also Eq.~(15) in \cite{Brav05}) and with $X=\rho_{\cal
  E}^{12}\rho_\Gamma^3$, $Y=\ket{\Phi^+_{2n}}^{23}\bra{\Phi^+_{2n}})$ the
trace is given by

\begin{align*}
 & \tr_{23}(X Y) \propto \\ & \int D\vec{\eta} D\vec{\mu} e^{
    (iB^T\theta)^T\eta + \frac{i}{2}\left(\theta^TA\theta
+ \eta^TD\eta +\eta'^T\Gamma\eta'+\vec{\mu}^T\left( \begin{smallmatrix} 0&\id\\ -\id&0
\end{smallmatrix} \right)\vec{\mu}\right)}e^{\vec{\eta}^T\vec{\mu}} \\
&= e^{ \frac{i}{2}\theta^TA\theta} \int D\vec{x}_{23}e^{\xi^T\vec{x}_{23} + \frac{i}{2}\vec{x}_{23}^TM^\prime\vec{x}_{23}}.
\end{align*}

Here, again $\vec{x}_{23} = (\vec{\eta},\vec{\mu})$ and
\begin{align*}
  \xi^T&= ((iB^T\theta)^T,0,0,0),\\
M^\prime&=\left( \begin{array}{cccc} D & 0 & -i\id & 0\\
0 & \Gamma & 0 &- i\id\\
i\id & 0 & 0 & \id\\
0 & i\id & -\id & 0
\end{array} \right).
\end{align*}
Using again the Gaussian integration rule (Eq.~(13) in \cite{Brav05}) we obtain as a result a Gaussian state with CM
\begin{align}
  \Gamma_{\mathrm{out}} &=
A-(iB)\left(\left[\left( \begin{array}{cc} D & \\ &\Gamma
\end{array} \right)-\left( \begin{array}{cc} 0&\id\\-\id&0
\end{array} \right)^{-1}  \right]^{-1}\right)_{11}(iB^T) \nonumber \\
&=
A+B\left(\left( \begin{array}{cc} D & \id\\ -\id&\Gamma
\end{array} \right)^{-1}\right)_{11}B^T \nonumber \\
&= A+ B\left(D+\Gamma^{-1} \right)^{-1}B^T,
\label{gammaout}
\end{align}
which is just ${\cal E}(\rho_{\Gamma})$.

Summarizing, we have shown that the state $\rho_{\cal E}=({\cal E}\tilde{\otimes}\id)(\ket{\Phi^+_{2n}}\bra{\Phi^+_{2n}})=\rho_{M}, $ where the GFS $\rho_M$ with CM $M=\left( \begin{smallmatrix}
      A&B\\ -B^T&D
\end{smallmatrix} \right)$ is the CJ-state of the Gaussian map $\cal E$ given in Eq.~\eqref{eq:5} or equivalently as the Gaussian map which maps the CM $\Gamma$ to $\Gamma_{\mathrm{out}}$ as given in Eq.~\eqref{gammaout}.

\subsection{Gaussian LOCC (GLOCC)}\label{appGLOCC}
Let us now investigate the relation of the entanglement properties of CJ-state
and the entanglement properties of the corresponding CP map. We will consider
here only bipartite systems, however, all arguments hold also for the
multipartite setting. In case of finite dimensional systems a CPTM, ${\cal
  E}$, is called separable if it can be written
as \begin{equation} \label{Eq_Sep} {\cal E}(\rho)=\sum_k A_k \otimes B_k \rho
  A^\dagger_k \otimes B^\dagger_k.\end{equation} As the set of separable maps
(SEP) strictly contains the set of LOCC, i.e., the set of maps which can be
realized via local operations and classical communication, SEP lacks a clear
physical meaning. Hence, when considering restricted sets of maps, such as
here fermionic or Gaussian maps, there is no clear way of
specializing the notion of SEP to these sets. This is why we consider here the physically
meaningful, however, mathematically generically much less tractable set of
LOCC, for which this specialization is obvious. We will then show that
this consideration suggests the natural definition of fermionic SEP (FSEP). 

Let us first consider the CJ-state of a local map, ${\cal E}={\cal E}_{1}\tilde{\otimes}{\cal E}_2$, i.e., a composition of two maps, ${\cal E}_{1}$, and ${\cal E}_{2}$, which act on the first and second system nontrivially, respectively. In this case, the CM of
the CJ-state splits in the form $A=A_1\oplus A_2, B=B_1\oplus B_2, D=D_1\oplus
D_2$. One can easily check that ${\cal
  E}(\Phi_+)$ is separable with respect to the splitting $13|24$ according to our
definition (see Sec.~\ref{sec:mixed-state-ferm}). Hence, using our definition of separability ($\cS2_{\pi}$), the CJ isomorphism maps local maps to separable states.

Let us next show that the CJ-state of any Gaussian LOCC is separable according to the definition $\cS2_{\pi}$. That is, we show that
any map which describes a GLOCC corresponds to a Gaussian CJ-state whose CM is given by $\Gamma_1\oplus \Gamma_2$ \footnote{It can be easily seen that a CM of this form can be obtained by rearranging the modes if a state has a CM  \unexpanded{$A=A_1\oplus A_2, B=B_1\oplus B_2, D=D_1\oplus
D_2$}.}, i.e., the corresponding state factorizes. Using this result and the remark above, it is then easy to see that any GLOCC can be written as ${\cal E}_{1}\tilde{\otimes}{\cal E}_2$.

Operationally, a finitely-many-rounds FLOCC protocol is a protocol which can
be realized by local fermionic operations and a finite number of rounds of
classical communication. In order to include also FLOCC protocols which
require infinitely many rounds of communication we define FLOCC as the set of
finitely-many-rounds FLOCC protocols together with those, which are the limit
of a sequence of such protocols. A Gaussian FLOCC is a FLOCC that can be
implemented with Gaussian means and that maps Gaussian states to Gaussian
states. Stated differently any map, ${\cal E}$, corresponding to a
finitely-many-rounds FLOCC (GLOCC) can be written as in Eq.~(\ref{Eq_Sep}),
where all operators, $A_k,B_k$ are fermionic (Gaussian) operators,
respectively. Any map within FLOCC (GLOCC) can be written as the limit of a
sequence of such maps where each element of the sequence is obtained by
applying one more round of a FLOCC protocol to the preceding element.

Let us now show that the CJ-state of a Gaussian FLOCC factorizes. We consider first finitely many round protocols and extend the result then to the limit of sequences of such protocols. The CJ-state is given by $\rho_{\cal E}=({\cal E}^{a b}_{\mathrm{FLOCC}}\tilde{\otimes} \one^{a^\prime b^\prime} )(P_{\Phi^+}^{a a^\prime} \tilde{\otimes} P_{\Phi^+}^{b b^\prime})$. Using that ${\cal E}_{\mathrm{FLOCC}}$ is of the form given in Eq.~\eqref{Eq_Sep}, where $A_k, B_k$ are fermionic operators and computing the expectation value of $\tilde{c}_a \tilde{c}_b$, where $\tilde{c}_a$ $(\tilde{c}_b)$ denotes any Majorana operator acting on modes in $a$ ($b$), respectively, we obtain
\bea &\tr[{\cal E}^{a b}_{\mathrm{FLOCC}}\tilde{\otimes} \one^{a^\prime b^\prime} (P_{\Phi^+}^{a a^\prime} \tilde{\otimes} P_{\Phi^+}^{b b^\prime}) \tilde{c}_a \tilde{c}_b]= \\ \nonumber
 &\sum_k (-1)^{f(A_kB_k)}\tr[A^\dagger_k \tilde{c}_a A_k \tilde{\otimes} B^\dagger_k \tilde{c}_b B_k (P_{\Phi^+}^{a a^\prime} \tilde{\otimes} P_{\Phi^+}^{b b^\prime})],  \label{expVal} \eea
where $f(A_k B_k)=0$ for even operators $A_k B_k$ and $f(A_k B_k)=1$ for odd operators.
As any operator $A_k$ is parity respecting, i.e., is fermionic, $A^\dagger_k
\tilde{c}_a A_k$ is an odd operator. Due to the fact that the projector onto
$\Phi^+$ is even and that $\tr\left[
    A\tilde{\otimes}B (P_{\Phi^+}^{aa'}\tilde{\otimes} P_{\Phi^+}^{bb'})\right]=\tr\left[
    AP_{\Phi^+}^{aa'}\right]\tr\left[BP_{\Phi^+}^{bb'}\right]$, the trace vanishes. Hence, the off--diagonal terms in the CM
of the CJ-state vanish and $\Gamma=\Gamma^{aa^\prime}\oplus
  \Gamma^{bb^\prime}$. In case that the CJ-state is Gaussian, in particular,
if ${\cal E}^{a b}_{\mathrm{FLOCC}}$ is a Gaussian map \footnote{Let us remark
  here that the CJ-state of a Gaussian map is, of course, Gaussian as $\Phi^+$
  is Gaussian.}, we hence have that the CJ-state factorizes. The last assertion
follows from the fact that for GFS, Wick's theorem holds and thus, all
higher-order correlations factorize if the CM is
  block-diagonal. In case ${\cal E}^{a b}_{\mathrm{FLOCC}}$ is the limit of a sequence of finitely-many-rounds protocols, the statement also holds due to continuity arguments \footnote{In \cite{whatyouLOCC} instrument convergence (see also Sec.\ \ref{sec:FLOCC}) was shown by using the distance measure induced by the diamond norm of the corresponding CPTMs, i.e., $\parallel {\cal E} - \tilde{{\cal E}} \parallel _\diamond$. That is, for all $ {\cal E}$, that are the limit of a sequence of finitely many round protocols, there exists a finitely many round protocol $ \tilde{{\cal E}}_n$ such that $\lim_{n \rightarrow \infty}\parallel {\cal E} - \tilde{{\cal E}}_n \parallel _\diamond = 0$. This implies convergence of the corresponding CJ-states $\rho_{\cal E},\rho_{\tilde{\cal E}_n}$ in trace norm, i.e., $\lim_{n \rightarrow \infty} \parallel \rho_{\cal E} -\rho_{\tilde{\cal E}_n} \parallel _1 = 0$. This leads to the continuity of the expectation value of $\tilde{c}_a \tilde{c}_b$ in Eq.~(\ref{expVal}), i.e., $\lim_{n \rightarrow \infty}  \tr[  (\rho_{\cal E}-\rho_{\tilde{\cal E}_n}) \tilde{c}_a \tilde{c}_b] =0$.
}.

It is evident from the discussion above that (i) any FLOCC applied to a
product state is separable and that (ii) any separable GFS (according to
$\cS2_{\pi}$) can be generated via FLOCC from a product state. This fact,
being obvious from a physical point of view, shows that the definition we
choose for separability meets the necessary requirements. Moreover, this also
shows that states which are convex combination of non--fermionic states (or
the limit thereof) and for which no decomposition into FSs exist cannot be
generated locally. Hence, in case the set $\cS2_{\pi'}$  contains such a
state, then calling states in $\cS2_{\pi'}$  separable does not conform to the usual operational definition.

Note that in the argument above the restriction to locally realizable maps has never been used. Hence, a natural definition of Gaussian separable maps (GSEP) is the set of CPTMs whose CJ-state is a separable Gaussian state, i.e., $\rho_{{\cal E}_{\mathrm{GSEP}}}=\rho_A\tilde{\otimes} \rho_B$ (which for GFS is equivalent to $\Gamma_{{\cal E}_{\mathrm{GSEP}}}=\Gamma_A\oplus \Gamma_B$). Note that this implies that ${\cal E}_{\mathrm{GSEP}}={\cal E}_A\tilde{\otimes} {\cal E}_B$. FSEP is then defined as the set of
CPT maps that can be written as $\cE(\rho)=\sum_k (A_k\tilde{\otimes}
B_k)\rho(A_k\tilde{\otimes} B_k)^\dag$ where all the $A_k,B_k$ are parity-respecting
operators.

 \section{Proof of Observation 2}
 \label{app:proofObs2}
 Here, we prove the observation that a product state according to definition  $\cP1_{\pi}$, i.e., the set of states for which the expectation values of all products of physical observables factorize, can have non-zero correlation between A and B. \begin{proof}
Let us denote by $\cP1_{\pi}$ the set of
states for which all products of locally measurable
observables factorize, by $\cS1_{\pi}$ its convex hull, and by $\cS_G$ the
set of Gaussian states. We show that $\rho\in \cS1_{\pi}\cap\cS_G$ implies $\Gamma_\rho=\left( \begin{array}{cc}
\Gamma_A&C\\-C^T&\Gamma_B\end{array} \right)$ with $\rank \,C\leq1$ and that
there are such states with $\rank \,C=1$.\\
We consider observables of the form $\Pi_{i=1}^{2n_a}\tilde{c}^a_{k_i}$ and
$\Pi_{j=1}^{2m_b}\tilde{c}^b_{l_j}$, where $\tilde{c}^{a(b)}$ refer to Majorana operators on
Alice's (Bob's) modes. We exploit the fact that we can compute their
expectation values in two ways: either by using the Wick formula for the
$n+m$-mode Gaussian state or by using the separability condition and using the
Wick formula twice for the $n$ and $m$ local modes separately. We show that
these only coincide for all observables if the rank of the off-diagonal block
$C$ of the full CM is not larger than 1.\\
Considering the observable
$\tilde{c}^{a}_{k_1}\tilde{c}^{a}_{k_2}\tilde{c}^b_{l_1}\tilde{c}^b_{l_2}$, we find that
$C_{k_1l_2}C_{k_2l_1}=C_{k_1l_1}C_{k_2l_2}$ where $C=(C_{ij})_{ij}$. W.l.o.g. we can choose to work in the
basis in which $C$ takes diagonal form
(i.e., apply local basis changes $O_a,O_b$ such that $O_aCO_b^T$ is diagonal
(singular value decomposition)). Then, considering $k_1=l_1,k_2=l_2$ one obtains that the
rank of $C$ can be at most one since two non-zero singular values would lead
to a contradiction.
This single non-zero entry, however, can not lead to any difference between
the two ways of computing expectation values of products of even observables
and thus there can be (and are) Gaussian states in $\cS_{S1}$ with $C\not=0$:
For example, consider any Gaussian state with CM such that
  $C_{k_1k_1}\not=0$ is the only non-zero entry of $C$ and consider any pair of even observables 
$A=\Pi_i\tilde{c}^a_{k_i}, B=\Pi_j\tilde{c}^b_{l_j}$, then
$\rho_\Gamma(AB)=\rho_{\Gamma_A\oplus\Gamma_B}(AB)=\rho_{\Gamma_A}(A)\rho_{\Gamma_B}(B)=\rho_\Gamma(A)\rho_\Gamma(B)$,
since, using Wick's formula any term that contains a pairing
$(k_1,k_1)$ must necessarily contain another AB-correlating pair $(k_2,l_2)$
with $k_1\not=k_2,l_1\not=l_2$ since no index appears twice in the same
  subsystem. However, since $C_{k_1k_1}$ is the
  only non-vanishing entry of $C$ the corresponding term is zero and only the local
blocks $\Gamma_A,\Gamma_B$ contribute to $\rho(AB)$.
\end{proof}
\section{Standard form of the CM of $1 \times 1 \times 1$ states}
\label{3modeGLU}
Here, we state the conditions on the parameters of the standard form for mixed 3 modes GFS, i.e.,
  \bea\nonumber
\hspace*{-0.2cm}  S(\gamma)\hspace*{-0.1cm} = \hspace*{-0.1cm} \left(\small{ \begin{array}{cccccc} 0 & \lambda_1&d_{12}&0&l_1 d_{13}&l_2 d^\prime_{13}\\
-\lambda_1 &0&0&d^\prime_{12}&-l_2 d_{13}&l_1 d^\prime_{13}\\
-d_{12}&0&0&\lambda_2&m_1&m_{12}\\
0&-d^\prime_{12}&-\lambda_2&0&m_{21}&m_2\\
 -l_1 d_{13}&l_2 d_{13}&-m_1&-m_{21}&0&\lambda_3\\
-l_2 d^\prime_{13}&-l_1 d^\prime_{13}&-m_{12}&-m_2&-\lambda_3&0
\end{array} }\right).
\eea
in more detail.
 If no mode factorizes we have for $\lambda_i> 0$ for $i\in\{1,2,3\}$ the following cases:
\begin{itemize}
\item  $d_{12}>|d^\prime_{12}|$ and
\begin{itemize}
\item $d_{13}>|d^\prime_{13}|$ and $l_1^2+l_2^2=1$ with either $l_1>0$ or
$l_1=0$ and $l_2>0$ or
\item  $d_{13}=|d^\prime_{13}| \neq 0$, $l_1=1$ and $l_2=0$ or
\item $l_1=l_2=0$, $m_1=
l^\prime_1 d_{23}$, $m_2=l^\prime_1 d^\prime_{23}$, $m_{12}=l^\prime_2 d^\prime_{23}$ and $m_{21}=-l^{\prime}_2 d_{23}$ with  $l_1^{\prime 2}+l_2^{\prime 2}=1$, $d_{23}>|d^\prime_{23}|$ and either $l_1'>0$ or $l_1'=0$ and  $l^\prime_2> 0$ or
\item  $l_1=l_2=0$, $m_1=|m_2|\neq 0$, $m_{12}=0$ and $m_{21}=0$.
\end{itemize}
\item $d_{12}=|d^\prime_{12}| \neq 0$ and
\begin{itemize}
\item $d_{13}>|d^\prime_{13}|$, $l_1=1$ and $l_2=0$ or
\item$d_{13}=|d^\prime_{13}| \neq 0$, $l_1=1$, $l_2=0$, $m_1=
l^\prime_1 d_{23}$, $m_2=l^\prime_1 d^\prime_{23}$, $m_{12}=l^\prime_2 d_{23}$ and $m_{21}=-l^{\prime}_2 d^\prime_{23}$ with $d_{23}>|d^\prime_{23}|$ and $l_1^{\prime 2}+l_2^{\prime 2}=1$ or
\item $d_{13}=|d^\prime_{13}| \neq 0$, $l_1=1$, $l_2=0$, $\gamma_{23}\propto O(2,\R)$ and $d^\prime_{12}d^\prime_{13}|\gamma_{23}|> 0$ or
\item $d_{13}=|d^\prime_{13}| \neq 0$, $l_1=1$, $l_2=0$, $m_1=|m_2|$, $m_{12}=m_{21}=0$ and  $d^\prime_{12}d^\prime_{13}m_2< 0$ or
\item  $l_1=l_2=0$, $m_1> |m_2|$, $m_{12}=0$ and $m_{21}=0$  or
\item  $l_1=l_2=0$, $m_1=|m_2|\neq 0$, $m_{12}=0$ and $m_{21}=0$  or
\item $d_{13}=|d^\prime_{13}| \neq 0$, $l_1=1$, $l_2=0$ and $m_1=
m_2=m_{12}=m_{21}=0$.
\end{itemize}
\item $d_{12}=|d^\prime_{12}|=0$ and
\begin{itemize}
\item   $d_{13}> |d^\prime_{13}|$, $l_1=1$, $l_2=0$, $m_1=
l^\prime_1 d_{23}$, $m_2=l^\prime_1 d^\prime_{23}$, $m_{12}=l^\prime_2 d_{23}$ and $m_{21}=-l^{\prime}_2 d^\prime_{23}$ with  $l_1^{\prime 2}+l_2^{\prime 2}=1$, $d_{23}>|d^\prime_{23}|$ and either $l_1'>0$ or $l_1'=0$ and  $l^\prime_2> 0$ or
\item  $d_{13}> |d^\prime_{13}|$, $l_1=1$, $l_2=0$, $m_1= |m_2|\neq 0$, $m_{12}=0$ and $m_{21}=0$ or
\item $d_{13}= |d^\prime_{13}| \neq 0$, $l_1=1$, $l_2=0$, $m_1> |m_2|$, $m_{12}=0$ and $m_{21}=0$  or
\item $d_{13}= |d^\prime_{13}| \neq 0$, $l_1=1$, $l_2=0$, $m_1= |m_2|\neq 0$, $m_{12}=0$ and $m_{21}=0$.
\end{itemize}
\end{itemize}
In case $\lambda_i = 0$ for some $i\in\{1,2,3\}$ the standard form can be obtained analogously. However, in this case $m_i$ is not determined by $\gamma_{ii}$.

\bibliography{GFSreferences}

\end{document}